\documentclass{article}

\def\noheaderplainsetup{

\topmargin=0pt \headheight=0pt \headsep=0pt  \oddsidemargin=0pt \evensidemargin=0pt  \textheight=9.1truein \textwidth=6.5truein}   

\noheaderplainsetup

\usepackage{amsfonts}

\begin{document}


\newcommand{\cltw}{\mbox{\bf CL12}}


\newcommand{\code}[1]{\ulcorner #1 \urcorner}
\newcommand{\pintimpl}{\mbox{\hspace{2pt}\raisebox{0.033cm}{\tiny $>$}\hspace{-0.18cm} \raisebox{-0.043cm}{\large --}\hspace{2pt}}} 
\newcommand{\st}{\mbox{\raisebox{-0.05cm}{$\circ$}\hspace{-0.13cm}\raisebox{0.16cm}{\tiny $\mid$}\hspace{2pt}}}  
\newcommand{\sti}{\mbox{\raisebox{-0.02cm}
{\scriptsize $\circ$}\hspace{-0.121cm}\raisebox{0.08cm}{\tiny $.$}\hspace{-0.079cm}\raisebox{0.10cm}
{\tiny $.$}\hspace{-0.079cm}\raisebox{0.12cm}{\tiny $.$}\hspace{-0.085cm}\raisebox{0.14cm}
{\tiny $.$}\hspace{-0.079cm}\raisebox{0.16cm}{\tiny $.$}\hspace{1pt}}}
\newcommand{\intimpl}{\mbox{\hspace{2pt}$\circ$\hspace{-0.14cm} \raisebox{-0.043cm}{\Large --}\hspace{2pt}}} 
\newcommand{\fintimpl}{\mbox{\hspace{2pt}$\bullet$\hspace{-0.14cm} \raisebox{-0.058cm}{\Large --}\hspace{-6pt}\raisebox{0.008cm}{\scriptsize $\wr$}\hspace{-1pt}\raisebox{0.008cm}{\scriptsize $\wr$}\hspace{4pt}}} 

\newcommand{\adi}{\hspace{2pt}\raisebox{0.02cm}{\mbox{\small $\sqsupset$}}\hspace{2pt}} 
\newcommand{\plus}{\mbox{\hspace{1pt}\raisebox{0.05cm}{\tiny\boldmath $+$}\hspace{1pt}}}
\newcommand{\minus}{\mbox{\hspace{1pt}\raisebox{0.05cm}{\tiny\boldmath $-$}\hspace{1pt}}}
\newcommand{\mult}{\mbox{\hspace{1pt}\raisebox{0.05cm}{\tiny\boldmath $\times$}\hspace{1pt}}}
\newcommand{\equals}{\mbox{\hspace{1pt}\raisebox{0.05cm}{\tiny\boldmath $=$}\hspace{1pt}}}
\newcommand{\notequals}{\mbox{\hspace{1pt}\raisebox{0.05cm}{\tiny\boldmath $\not=$}\hspace{1pt}}}
\newcommand{\successor}{\mbox{\hspace{1pt}\boldmath $'$}}

\newcommand{\elz}[1]{\mbox{$\parallel\hspace{-3pt} #1 \hspace{-3pt}\parallel$}} 
\newcommand{\elzi}[1]{\mbox{\scriptsize $\parallel\hspace{-3pt} #1 \hspace{-3pt}\parallel$}}
\newcommand{\emptyrun}{\langle\rangle} 
\newcommand{\oo}{\bot}            
\newcommand{\pp}{\top}            
\newcommand{\xx}{\wp}               
\newcommand{\legal}[2]{\mbox{\bf Lr}^{#1}_{#2}} 
\newcommand{\win}[2]{\mbox{\bf Wn}^{#1}_{#2}} 
\newcommand{\seq}[1]{\langle #1 \rangle}           


\newcommand{\pst}{\mbox{\raisebox{-0.01cm}{\scriptsize $\wedge$}\hspace{-4pt}\raisebox{0.16cm}{\tiny $\mid$}\hspace{2pt}}}
\newcommand{\pcost}{\mbox{\raisebox{0.12cm}{\scriptsize $\vee$}\hspace{-4pt}\raisebox{0.02cm}{\tiny $\mid$}\hspace{2pt}}}

\newcommand{\gneg}{\mbox{\small $\neg$}}                  
\newcommand{\mli}{\hspace{2pt}\mbox{\small $\rightarrow$}\hspace{2pt}}                      
\newcommand{\cla}{\mbox{$\forall$}}      
\newcommand{\cle}{\mbox{$\exists$}}        
\newcommand{\mld}{\hspace{2pt}\mbox{\small $\vee$}\hspace{2pt}}     
\newcommand{\mlc}{\hspace{2pt}\mbox{\small $\wedge$}\hspace{2pt}}   
\newcommand{\mlci}{\hspace{2pt}\mbox{\footnotesize $\wedge$}\hspace{2pt}}   
\newcommand{\ade}{\mbox{\large $\sqcup$}}      
\newcommand{\ada}{\mbox{\large $\sqcap$}}      
\newcommand{\add}{\hspace{2pt}\mbox{\small $\sqcup$}\hspace{2pt}}                     
\newcommand{\adc}{\hspace{2pt}\mbox{\small $\sqcap$}\hspace{2pt}} 
\newcommand{\adci}{\hspace{2pt}\mbox{\footnotesize $\sqcap$}\hspace{2pt}}              
\newcommand{\clai}{\forall}     
\newcommand{\clei}{\exists}        
\newcommand{\tlg}{\bot}               
\newcommand{\twg}{\top}               
\newcommand{\col}[1]{\mbox{$#1$:}}


\newtheorem{theoremm}{Theorem}[section]
\newtheorem{factt}[theoremm]{Fact}
\newtheorem{corollaryy}[theoremm]{Corollary}
\newtheorem{definitionn}[theoremm]{Definition}
\newtheorem{thesiss}[theoremm]{Thesis}
\newtheorem{lemmaa}[theoremm]{Lemma}
\newtheorem{conventionn}[theoremm]{Convention}
\newtheorem{examplee}[theoremm]{Example}
\newtheorem{exercisee}[theoremm]{Exercise}
\newtheorem{remarkk}[theoremm]{Remark}
\newenvironment{definition}{\begin{definitionn} \em}{ \end{definitionn}}
\newenvironment{thesis}{\begin{thesiss} \em}{ \end{thesiss}}
\newenvironment{theorem}{\begin{theoremm}}{\end{theoremm}}
\newenvironment{lemma}{\begin{lemmaa}}{\end{lemmaa}}
\newenvironment{fact}{\begin{factt}}{\end{factt}}
\newenvironment{corollary}{\begin{corollaryy}}{\end{corollaryy}}
\newenvironment{convention}{\begin{conventionn} \em}{\end{conventionn}}
\newenvironment{example}{\begin{examplee} \em}{\end{examplee}}
\newenvironment{exercise}{\begin{exercisee} \em}{\end{exercisee}}
\newenvironment{remark}{\begin{remarkk} \em}{\end{remarkk}}
\newenvironment{proof}{ {\bf Proof.} }{\  \rule{2.5mm}{2.5mm} \vspace{.2in} }

\title{A logical basis for constructive systems}
\author{Giorgi Japaridze\\ \\ 
{\small School of Computer Science and Technology, Shandong University, PRC;}\\
{\small Department of Computing Sciences, Villanova University, USA}}
\date{}
\maketitle

\begin{abstract} The work is devoted to {\em Computability logic} (CoL) ---  the philosophical/mathematical platform and long-term project  for  redeveloping classical logic after replacing  {\em truth} by {\em computability} in its underlying semantics. 
This article elaborates some basic complexity theory   for the CoL framework. Then it proves soundness and completeness for the deductive system {\bf CL12} with respect to the  semantics of CoL, including  the  version of the latter based on polynomial time computability instead of computability-in-principle. {\bf CL12} is a sequent calculus system, where the meaning of a sequent intuitively can be characterized as ``the succedent is algorithmically reducible to the antecedent'', and where formulas are built from predicate letters, function letters, variables, constants, identity, negation, parallel  and choice connectives,  and blind and choice quantifiers. A case is made that  {\bf CL12} is an adequate  logical basis for constructive applied theories, including  complexity-oriented ones. 
\end{abstract}

\noindent {\em MSC}: primary: 03F50; secondary:  03D75; 03D15; 03D20; 68Q10; 68T27; 68T30

\

\noindent {\em Keywords}: Computability logic; Interactive computation; Implicit computational complexity;  Game semantics;  Constructive logics; Efficiency  logics 


\section{Introduction}\label{intr}

{\bf Computability logic}, to which this contribution is devoted and which we shall henceforth refer to as {\bf CoL},   was introduced in \cite{Jap03,Japic,Japfin} as a semantically conceived open-ended framework and long-term research project for redeveloping logic as a formal theory of {\em computability}. That is as opposed to the more traditional view of logic as a formal theory of {\em truth}.  The expressions of CoL --- formulas, sequents, cirquents --- stand for interactive computational problems, understood as games played by a machine against its environment, and  computability of such problems means existence of a machine that always wins. The main ambition of the overall CoL project is to provide ever more expressive and powerful tools for systematically telling what can be computed and how,  just as classical logic is a systematic tool for telling what is true. 

Finding new converts is not among the pursuits of the present work. Numerous articles have been published on the subject in recent years (\cite{Jap03}-\cite{Ver},\cite{Xu}), and the reader is assumed to have some basic familiarity with the philosophy, motivations and techniques of CoL. If not, he or she may want to take a look at the  first 10 sections of \cite{Japfin} for a tutorial-style introduction and survey. Doing so would be helpful even if not technically necessary, as this paper provides all relevant definitions. 
 
The single deductive system dealt with in the present paper is $\cltw$. Its formulas are built in the standard way from predicate and function letters, variables, constants, identity $\equals$, negation $\gneg$, parallel connectives $\mlc,\mld,\mli$, choice connectives $\adc,\add$, blind quantifiers $\cla,\cle$ and choice quantifiers $\ada,\ade$. $\cltw$ is a sequent calculus system, where every sequent looks like 
\[E_1,\ldots,E_n\intimpl F\]
($n\geq 0$; the $E_i$ and $F$ are formulas),\footnote{The unfortunate coincidence --- or rather symmetry --- between our sequent symbol and Girard's symbol for linear implication is merely graphical. The meaning of the latter is closer to that of our $\rightarrow$.} semantically understood as an abbreviation of  
\[\st E_1\mlc \ldots\mlc \st E_n\mli F,\]
with $\st$ being the ordinary {\em branching recurrence} operator. The system is shown to be sound and complete in the sense that a sequent is $\cltw$-provable  if and only if it has a uniform (``purely logical'') solution, i.e. an algorithmic strategy that wins the game/sequent under any interpretation of its non-logical components such as predicate and function letters. Furthermore, such a strategy can be effectively --- in fact, efficiently --- extracted from a proof of the sequent. 

Logic $\cltw$  was first introduced in \cite{Japtowards}, where it was proven to be sound and complete in the above sense, but with $\fintimpl$ instead of our present $\intimpl$. The former is a version of the latter that only allows re-using antecedental resources a finite number of times (otherwise the game is considered lost even if its succedent part is won). Successfully switching from $\fintimpl$ to $\intimpl$ is a significant advance from both the philosophical and technical points of view.   As repeatedly argued in the earlier literature on CoL, it is $A\intimpl B$ --- rather than the stronger $A\fintimpl B$,    $A\pintimpl B$, $A\mli B$, etc. --- that adequately captures our ultimate, most general intuition of algorithmically reducing $B$ to $A$.  Correspondingly, it is uniform validity of 
$E_1,\ldots,E_n\intimpl F$ rather than of $E_1,\ldots,E_n\fintimpl F$ that corresponds to our ultimate intuition of ``purely logically'' reducing the succedent to the antecedent.  It can be characterized in other words as ``the succedent is a {\em logical consequence} of the antecedent''. It is exactly this intuition that is of paramount importance when dealing with $\cltw$-based applied theories such as the version {\bf CLA1} of Peano arithmetic constructed in \cite{Japtowards}. Completeness with respect to ``logical consequence'' guarantees that one can reliably use intuition on games and strategies to prove results in the theory.  Without having such a guarantee, one would be forced to resort to point-by-point syntactic derivations, which could make  successfully studying and developing $\cltw$-based applied theories next to impossible. Thus, while \cite{Japtowards} had chosen ``the right logic'' $\cltw$ as a basis for its system  of arithmetic, this choice was made just by good luck, as no  justification for it was provided or found.  

The above was about why and how the present paper strengthens the completeness result of \cite{Japtowards} for $\cltw$: $A\intimpl B$ is generally easier to win than $A\fintimpl B$, and hence the corresponding completeness theorem is harder to prove. But this  paper also strengthens --- perhaps in an even more important way --- the soundness result of \cite{Japtowards}.  For the first time in CoL's history, it brings computational complexity into the framework of the project. Namely, certain natural concepts of time and space complexities of winning strategies are defined for games. It is shown that $\cltw$ remains sound (as well as complete, of course) if one considers polynomial time  computability instead of computability-in-principle, and that the associated ``logical consequence'' relation preserves polynomial time  computability, as well as $\Omega$-time and $\Omega$-space computabilities for any class $\Omega$ of functions containing all polynomial functions and closed under composition.     This opens a whole new world of potential applications of CoL in general and $\cltw$ in particular. One can construct and explore, in a systematic way, not only computability-oriented applied theories such as the above-mentioned arithmetic {\bf CLA1}, but complexity-oriented theories as well, among the best known earlier examples of which is Buss's \cite{Buss} bounded arithmetic. For instance, a $\cltw$-based arithmetic for polynomial time computability will be an extension of classical Peano arithmetic. The single logical rule of inference of it, as well as of any other $\cltw$-based theory, would be 
\[\mbox{\em From $E_1,\ldots,E_n$ conclude $F$, as long as $\cltw$ proves } E_1,\ldots,E_n\intimpl F.\]
 Extra-Peano nonlogical axioms of such a system would be some polynomial time computable formulas/problems, such as, say, $\ada x\ade y(y=x\plus 1)$. And nonlogical rules of inference, if any, would be rules preserving the property of polynomial time computability. Then every theorem of the system, seen as an arithmetical problem,  will be  polynomial time computable, with a polynomial time solution of the problem extractable from the proof. The author expects that certain simple and elegant systems in this style can achieve not merely soundness, but   completeness as well, in the sense that every arithmetical problem with a polynomial time solution is expressed by some theorem of the system. In the same style, by varying   nonlogical extra-Peano axioms and rules, one can construct and study systems for polynomial space computability, elementary recursive computability, primitive recursive  computability, provably recursive   computability, etc. 

A notable advantage of $\cltw$-based arithmetics over the other complexity-oriented systems such as the earlier versions of bounded arithmetic and including those based on intuitionistic  logic (\cite{Bussint,Sch}), would be preserving the full expressiveness and deductive power of classical arithmetic while still being computationally and complexity-theoretically sound and meaningful. Every such system can be seen as a programming language where (efficient) programs can be automatically extracted from proofs, with ``programming'' thus simply meaning theorem-proving\footnote{In a more ambitious and, at this point, somewhat fantastic perspective, after developing reasonable theorem-provers for efficiency-oriented arithmetics, ``programming'' would simply mean stating the goal/specification --- i.e., writing a formula that represents the computational problem whose efficient solution is sought for systematic usage in the future. The compiler's job would be finding a proof (the hard part) and translating it into a solution (the easy part). The process of compiling could thus take long but, once compiled, the program would run fast ever after.} and with the (generally undecidable) problem of whether a program meets its specification being fully neutralized. Hence the importance of the just-mentioned advantage of $\cltw$-based systems over the other, known systems with similar aspirations --- which typically happen to be inherently weak theories --- is obvious: the stronger  a system, the better the chances that a proof/program will be found for a declarative, non-preprocessed,  ad hoc specification. Among the virtues of CoL is that it allows us to achieve constructive heights without throwing out the baby (such as classical logic or Peano arithmetic) with the bath water.  

The main purpose of the present publication is to provide a logical basis and   reusable point of departure for developing  complexity-oriented applied theories in the above  style --- the new line of research where the author predicts significant and fruitful activities in the near future. 

\section{Remembering constant games and some operations on them}\label{cg}

Even though  the reader is expected to have some prior familiarity with CoL, for the sake of safety and convenience of references, here we reproduce the basic relevant definitions. It should also be noted that certain old concepts --- such as that of a non-constant game --- have been substantially generalized in this paper. On the other hand, certain other concepts, such as that of an HPM, have been simplified at the expense of (here unnecessary) generality. The definitions of the basic operations on games given in the present section are different from --- yet equivalent to --- the definitions of the same operations found elsewhere.   

Computational problems are understood as games between two players: $\twg$ (Machine) and $\tlg$ (Environment). 
A  {\bf move}\label{imove} means any finite string over the standard keyboard alphabet. 
A {\bf labeled move} ({\bf labmove})\label{ilabmove} is a move prefixed with $\pp$ or $\oo$, with such a prefix ({\bf label})\label{ilabel} indicating which player has made the move. 
A {\bf run}\label{irun} is a (finite or infinite) sequence of labmoves, and a {\bf position}\label{iposition} is a finite run.

We will be exclusively using the letters $\Phi,\Gamma,\Delta$  for runs, and  $\alpha,\beta$ for moves. The letter $\xx$\label{ixx} will always be a variable for players, and \[\overline{\xx}\label{ixxneg}\]  will mean ``$\xx$'s adversary'' (``the other player'').
Runs will be often delimited by ``$\langle$" and ``$\rangle$", with $\emptyrun$ thus denoting the {\bf empty run}.\label{iempty} The meaning of an expression such as $\seq{\Phi,\xx\alpha,\Gamma}$ must be clear: this is the result of appending to the position $\seq{\Phi}$ 
the labmove $\seq{\xx\alpha}$ and then the run $\seq{\Gamma}$.

The following is a formal definition of  constant games, combined with some less formal conventions regarding the usage of certain terminology.

\begin{definition}\label{game}
 A {\bf constant game}\label{iconstantgame} is a pair $A= (\legal{A}{},\win{A}{})$, where:

1. $\legal{A}{}$\label{ilr} is a set of runs  satisfying the condition that a (finite or infinite) run is in $\legal{A}{}$ iff all of its nonempty finite  initial
segments are in $\legal{A}{}$ (notice that this implies $\emptyrun\in\legal{A}{}$). The elements of $\legal{A}{}$ are
said to be {\bf legal runs}\label{ilegrun} of $A$, and all other runs are said to be {\bf illegal}.\label{iillegrun} We say that $\alpha$ is a {\bf legal move}\label{ilegmove} for $\xx$ in a position $\Phi$ of $A$ iff $\seq{\Phi,\xx\alpha}\in\legal{A}{}$; otherwise 
$\alpha$ is {\bf illegal}.\label{iillegmove} When the last move of the shortest illegal initial segment of $\Gamma$  is $\xx$-labeled, we say that $\Gamma$ is a {\bf $\xx$-illegal}\label{ipillegal} run of $A$. 

2. $\win{A}{}$\label{iwn}  is a function that sends every run $\Gamma$ to one of the players $\pp$ or $\oo$, satisfying the condition that if $\Gamma$ is a $\xx$-illegal run of $A$, then $\win{A}{}\seq{\Gamma}= \overline{\xx}$. When $\win{A}{}\seq{\Gamma}= \xx$, we say that $\Gamma$ is a {\bf $\xx$-won}\label{iwon} (or {\bf won by $\xx$}) run of $A$; otherwise $\Gamma$ is {\bf lost}\label{ilost} by $\xx$. Thus, an illegal run is always lost by the player who has made the first illegal move in it.  
\end{definition}

A constant game $A$ is said to be {\bf elementary} iff $\legal{A}{}=\{\emptyrun\}$, i.e., $A$ does not have any nonempty legal runs. There are exactly two elementary constant games: $\twg$ with $\win{\twg}{}\emptyrun=\pp$, and $\tlg$ with $\win{\tlg}{}\emptyrun=\oo$. Standard true sentences, such as ``snow is white'' or ``$0\equals 0$'', are understood as the game $\twg$, and false sentences, such as ``snow is black'' or ``$0\equals 1$'', as the game $\tlg$. Correspondingly, the two games $\twg$ and $\tlg$ will be referred to as {\bf propositions}.

Let us remember the operation of {\em prefixation}.\label{iprefixation}
It takes two arguments: a constant game $A$ and a position $\Phi$ 
 that must 
be a legal position of $A$ (otherwise the operation is undefined), and returns the game $\seq{\Phi}A$.
Intuitively, $\seq{\Phi}A$ is the game playing which means playing $A$ starting (continuing) from position $\Phi$. 
That is, $\seq{\Phi}A$ is the game to which $A$ {\bf evolves} (will be ``{\bf brought down}") after the moves of $\Phi$ have been made.  Here is a definition:

\begin{definition}\label{prfx}
Let $A$ be a constant game and $\Phi$ a legal position of $A$. The game 
$\seq{\Phi}A$\label{ipr} is defined by: 
\begin{itemize}
\item $\legal{\seq{\Phi}A}{}= \{\Gamma\ |\ \seq{\Phi,\Gamma}\in\legal{A}{}\}$;
\item $\win{\seq{\Phi}A}{}\seq{\Gamma}= \win{A}{}\seq{\Phi,\Gamma}$.
\end{itemize}
\end{definition}

\begin{convention}\label{poscon}
A terminological convention important to remember is that we often identify a legal position $\Phi$ of a game $A$ with the game $\seq{\Phi}A$. So, for instance, we may say that the move $1$ by $\oo$ brings the game $B_0\adc B_1$ down to the position $B_1$. Strictly speaking, $B_1$ is not a position but a game, and what {\em is} a position is $\seq{\oo 1}$, which we here identified with the game $B_1=\seq{\oo 1}(B_0\adc B_1)$.
\end{convention}

Note that, in order to define the $\legal{}{}$ component of   a constant game $A$,  it would suffice  to specify what the {\bf initial legal (lab)moves}\label{iilm} --- i.e., the elements of  $\{\xx\alpha\ |\ \seq{\xx\alpha}\in\legal{A}{}\}$ --- are, and to 
what game the game $A$ is  brought down after such an initial legal labmove $\xx\alpha$ is made (this can be seen to hold even in recursive definitions of game operations as in clauses 1 and 3 of Definition \ref{op} below).  
Then, the set of legal runs of $A$ will be uniquely defined. Similarly, note that defining the $\win{}{}$ component for only legal runs of $A$ would be sufficient, for then it uniquely extends to all runs. With these observations in mind, we can (re)define the operations $\gneg,\mlc,\mld,\adc,\add$  in a new fashion as follows:

\begin{definition}\label{op} Let $A$, $B$, $A_0,A_1,\ldots$ be   constant games, and $n$  a positive integer.\vspace{9pt}

\noindent 1. $\gneg A$\label{igneg2} ({\bf negation}) is defined by: 
\begin{quote}\begin{description}
\item[(i)] $\seq{\xx\alpha}\in\legal{\gneg A}{}$ iff $\seq{\overline{\xx}\alpha}\in\legal{A}{}$. Such an initial legal labmove $\xx\alpha$ brings the game down to 
$\gneg \seq{\overline{\xx}\alpha}A$.
\item[(ii)] Whenever $\Gamma$ is a legal run of $\gneg A$, $\win{\gneg A}{}\seq{\Gamma} = \pp$ iff $\win{A}{}\seq{\overline{\Gamma}} =\oo$. Here $\overline{\Gamma}$ means $\Gamma$ with each label $\xx$ changed to $\overline{\xx}$.\vspace{5pt}
\end{description}\end{quote}

\noindent 2. $A_0\adc\ldots\adc  A_n$\label{iadc2} ({\bf choice conjunction}) is defined by: 
\begin{quote}\begin{description}
\item[(i)] $\seq{\xx\alpha}\in\legal{A_0\adci\ldots\adci  A_n}{}$ iff $\xx= \oo$ and $\alpha= i  \in\{0,\ldots,n\}$.\footnote{Here the number $i$ is identified with its binary representation. The same applies to the other clauses of this definition.}  Such an initial legal labmove $\oo i$ brings the game down to 
$A_i$.
\item[(ii)] Whenever $\Gamma$ is a legal run of $A_0\adc\ldots\adc  A_n$, $\win{A_0\adci\ldots\adci  A_n}{}\seq{\Gamma} = \oo$ iff $\Gamma$ looks like $\seq{\oo i,\Delta}$ ($i  \in\{0,\ldots,n\}$) and $\win{A_i}{}\seq{\Delta} = \oo$.\vspace{5pt} 
\end{description}\end{quote}

\noindent 3. $A_0\mlc\ldots\mlc A_n$\label{imlc2} ({\bf parallel conjunction}) is defined by: 
\begin{quote}\begin{description}
\item[(i)] $\seq{\xx\alpha}\in\legal{A_0\mlci\ldots\mlci  A_n}{}$ iff $\alpha= i.\beta$, where $i\in\{0,\ldots,n\}$ and $\seq{\xx\beta}\in\legal{A_i}{}$. Such an initial legal labmove $\xx i.\beta$ brings the game down to  
\[ A_0\mlc\ldots\mlc A_{i-1}\mlc \seq{\xx\beta}A_i\mlc A_{i\plus 1}\mlc\ldots\mlc A_n.\] 
\item[(ii)] Whenever $\Gamma$ is a legal run of $A_0\mlc\ldots\mlc A_n$, $\win{A_0\mlci\ldots\mlci  A_n}{}\seq{\Gamma}= \pp$ iff, for each $i\in\{0,\ldots,n\}$,  $\win{A_i}{}\seq{\Gamma^{i.}}= \pp$. Here $\Gamma^{i.}$ means the result of removing, from $\Gamma$, all labmoves except those that look like $\xx i.\alpha$, and then further changing each such (remaining) $\xx i.\alpha$ to $\xx \alpha$.\footnote{Intuitively, $\Gamma^{i.}$ is the run played in the $A_i$ component. The present condition thus means that $\pp$ wins a $\mlci$-conjunction of games iff it wins in each conjunct.} \vspace{5pt}  
\end{description}\end{quote}

\noindent 4. $A_0\add\ldots\add A_n$\label{iadd2} ({\bf choice disjunction}) and $A_0\mld\ldots\mld  A_n$ ({\bf parallel disjunction}) 
are defined exactly as $A_0\adc\ldots\adc A_n$ and $A_0\mlc\ldots\mlc  A_n$, respectively, only with ``$\pp$" and ``$\oo$" interchanged.\vspace{7pt}

\noindent 5. The infinite $\adc$-conjunction $A_0\adc A_1\adc\ldots$ is defined exactly as $A_0\adc\ldots\adc A_n$, only with ``$i\in\{0,1,\ldots\}$" instead of ``$i\in\{0,\ldots,n\}$". Similarly for the infinite versions of $\add$, $\mlc$ and $\mld$.\vspace{7pt}

\noindent 6. $A\mli B$\label{imli2} ({\bf strict reduction}) is treated as an abbreviation of $(\gneg A)\mld B$.
\end{definition}

We also agree that, when $k=1$,   $A_1\adc\ldots\adc A_k$ simply means $A_1$, and so do $A_1\add\ldots\add A_k$, $A_1\mlc\ldots\mlc A_k$ and $A_1\mld\ldots\mld A_k$. We further agree that, when the set $\{A_1,\ldots , A_k\}$ is empty ($k=0$, that is), both  $A_1\adc\ldots\adc A_k$ and  $A_1\mlc\ldots\mlc A_k$ mean $\pp$, while both  $A_1\add\ldots\add A_k$ and  $A_1\mld\ldots\mld A_k$ mean $\oo$.

\begin{example}
 The game $(0\equals 0\adc 0\equals 1)\mli(10\equals 11\adc 10\equals 10)$, i.e. \(\gneg (0\equals 0\adc 0\equals 1)\mld(10\equals 11\adc 10\equals 10),\)
 has thirteen legal runs, which are: 
\begin{description}
\item[1] $\seq{}$. It is won by $\pp$, because $\pp$ is the winner in the right $\mld$-disjunct (consequent).
\item[2] $\seq{\pp 0.0}$. (The labmove of) this run brings the game down to $\gneg 0\equals 0\mld(10\equals 11\adc 10\equals 10)$, and $\pp$ is the winner for the same reason as in the previous case.
\item[3] $\seq{\pp 0.1}$. It brings the game down to $\gneg 0\equals 1\mld(10\equals 11\adc 10\equals 10)$, and $\pp$ is the winner because it wins in both $\mld$-disjuncts. 
\item[4] $\seq{\oo 1.0}$. It brings the game down to $\gneg(0\equals 0\adc 0\equals 1)\mld 10\equals 11$.  $\pp$ loses as it loses in both $\mld$-disjuncts. 
\item[5] $\seq{\oo 1.1}$. It brings the game down to $\gneg (0\equals 0\adc 0\equals 1)\mld 10\equals 10$.  $\pp$ wins as it wins in the right $\mld$-disjunct. 
\item[6-7] $\seq{\pp 0.0,\oo 1.0}$ and $\seq{\oo 1.0, \pp 0.0}$. Both bring the game down to the false $\gneg 0\equals 0 \mld 10\equals 11$, and both are lost by  $\pp$. 
\item[8-9] $\seq{\pp 0.1,\oo 1.0}$ and $\seq{\oo 1.0, \pp 0.1}$. Both bring the game down to the true $\gneg 0\equals 1 \mld 10\equals 11$, which makes  $\pp$ the winner.
\item[10-11] $\seq{\pp 0.0,\oo 1.1}$ and $\seq{\oo 1.1, \pp 0.0}$. Both bring the game down to the true $\gneg 0\equals 0 \mld 10\equals 10$, so $\pp$ wins.
\item[12-13] $\seq{\pp 0.1,\oo 1.1}$ and $\seq{\oo 1.1, \pp 0.1}$. Both bring the game down to the true $\gneg 0\equals 1 \mld 10\equals 10$, so $\pp$ wins.
\end{description}
\end{example}

Later we will be using some relaxed informal jargon already established in CoL for describing runs and strategies, referring to moves by their intuitive meanings or their effects on the game. For instance, the initial labmove $\pp 0.0$ in a play of the game $p\adc q\mli r$ we can characterize as ``$\pp$ made the move $0.0$''. Remembering the meaning of the prefix ``$0.$'' of this move, we may as well say that ``$\pp$ made the move $0$ in the antecedent''. Further remembering the effect of such a move on the antecedent, we may just as well say ``$\pp$ chose $p$ (or the left $\adc$-conjunct) in the antecedent''. We may also 
 say something like ``$\pp$ (made the move that) brought the game down to $p\mli r$'', or ``$\pp$ (made the move that) brought the antecedent down to $p$''. 

To (re)define the   operation $\st$ in the style of Definition \ref{op}, we need some preliminaries.  What we call a  {\bf tree of games} is a structure defined inductively as an element of the smallest set satisfying the following conditions:
\begin{itemize}
\item Every constant game $A$ is a tree of games. The one-element sequence $\seq{A}$ is said to be the {\bf yield} of such a tree, and the {\bf address} of $A$ in this tree is the empty bit string. 
\item Whenever $\cal A$ is a tree of games with yield $\seq{A_1,\ldots,A_m}$ and $\cal B$ is a  tree of games with yield $\seq{B_1,\ldots,B_n}$, the pair ${\cal A}\circ{\cal B}$ is a tree of games with yield $\seq{A_1,\ldots,A_m,B_1,\ldots,B_n}$. The {\bf address} of each $A_i$  in this tree is $0w$, where $w$ is the address of $A_i$ in $\cal A$. Similarly,  the address of each $B_i$ is $1w$, where $w$ is the address of $B_i$ in $\cal B$.
\end{itemize}

Example: Where $A,B,C,D$ are constant games, $(A\circ B)\circ(C\circ(A\circ D))$ is a tree of games with yield $\seq{A,B,C,A,D}$. The address of the first $A$ of the yield, to which we may as well refer as the first {\bf leaf} of the tree, is $00$; the address of the second leaf $B$ is $01$; the address of the third leaf $C$ is $10$; the address of the fourth leaf $A$ is $110$; and the address of the fifth leaf $D$ is $111$.

Note that $\circ$ is not an operation on games, but just a symbol used instead of the more common comma to separate the two parts of a pair. And a tree of games itself is not a game, but a collection of games arranged into a certain structure, just as a sequence of games is not a game but a collection of games arranged as a list. 

For bit strings $u$ and $w$, we will write $u\preceq w$ to indicate that $u$ is a (not necessarily proper) {\bf prefix} (initial segment) of $w$.

\begin{definition}\label{op1} 
Let  $A_1,\ldots,A_n$ ($n\geq 1$) be   constant games, and $\cal T$ be a tree of games with yield $\seq{A_1,\ldots,A_n}$. Let $w_1,\ldots,w_n$ be the addresses of $A_1,\ldots,A_n$ in $\cal T$, respectively. The game   $\st {\cal T}$ (the {\bf branching recurrence} of $\cal T$) is defined by:
\begin{description}
\item[(i)] $\seq{\xx\alpha}\in\legal{\sti {\cal T}}{}$ iff one of the following conditions is satisfied: 
\begin{enumerate}
\item $\xx\alpha=\xx u.\beta$, where $u\preceq w_i$ for at least one $i\in\{1,\ldots,n\}$ and, for each  $i$ with $u\preceq w_i$,  $\seq{\xx\beta}\in\legal{A_i}{}$. We call such a move a {\bf nonreplicative} (lab)move. It brings the game down to  \(\st {\cal T}'\), where ${\cal T}'$ is the result of replacing $A_i$ by $\seq{ \xx\beta}A_i$ in $\cal T$ for each $i$ with $u\preceq w_i$. If here $u$ is $w_i$ (rather than a proper prefix of such) for one of $i\in\{1,\ldots,n\}$, we say that the  move $\xx u.\beta$ is {\bf focused}. Otherwise it is {\bf unfocused}. 
\item $\xx\alpha=\oo \col{w_i}$, where $i\in\{1,\ldots,n\}$. We call such a move a {\bf replicative} (lab)move. It brings the game down to  $\st {\cal T}'$, where ${\cal T}'$ is the result of replacing $A_i$ by $(A_i\circ A_i)$ in $\cal T$.
\end{enumerate}
\item[(ii)] Whenever $\Gamma$ is a   legal run of $\st {\cal T}$,   $\win{\sti {\cal T}}{}\seq{\Gamma}=\pp$ iff, for each $i\in\{1,\ldots,n\}$ and every infinite bit string $v$ with $w_i\preceq v$, we have $\win{A_i}{}\seq{\Gamma^{\preceq v}}=\pp$. Here $\Gamma^{\preceq v}$ means the result of deleting, from $\Gamma$, all labmoves except those that look like $\xx u.\alpha$ for some bit string $u$ with $u\preceq v$, and then further changing each such (remaining) labmove  $\xx u.\alpha$ to $\xx \alpha$.\footnote{Intuitively, $\Gamma^{\preceq v}$ is the run played in one of the multiple ``copies'' of $A_i$ that have been generated in the play, with $v$ acting as a (perhaps longer than necessary yet meaningful) ``address'' of that copy.}
\end{description}
\end{definition}

\begin{example} \label{ex37}
 Let \[G\ =\    p\add(q\adc (r\adc (s\add t))) ,\] where $p,q,r,s,t $ are constant elementary games. And let  
\[\Gamma\ = \ \seq{ \oo \col{},\ \pp .1, \ \oo 0.0,\ \oo 1.1,\  \oo \col{1},\ \oo 10.0,\ \oo 11.1,\ \pp 11.0}.\]  Then $\Gamma$ is a legal run of $\st G$. Below we trace, step by step, the effects of its moves on $\st G$.

The 1st (lab)move $\oo \col{}$ means that $\oo$ replicates the (only) leaf of the tree, with the address of that leaf being the empty bit string. This move 
 brings the game down to --- in the sense that $\seq{\oo \col{}}\st G$ is --- the following game:
\[\st \Bigl(\bigl(p\add(q\adc (r\adc (s\add t)))\bigr)\circ \bigl(p\add(q\adc (r\adc (s\add t)))\bigr)\Bigr).\]

The 2nd move $\pp .1$ means choosing the second $\add$-disjunct $q\adc (r\adc (s\add t))$ in {\em both} leaves of the tree. This is so because the addresses of those leaves are $0$ and $1$, and the empty bit string --- seen between ``$\pp$'' and ``$.1$'' in $\pp .1$ --- is an initial segment of both addresses. The effect of this unfocused move is the same as the effect of the two consecutive focused moves $\pp 0.1$ and $\pp 1.1$ (in whatever order) would be, but $\pp$ might have its reasons for having made an unfocused move. Among such reasons could be that $\pp$ did not notice $\oo$'s initial move (or the latter arrived late over the asynchronous network) and thought that the position was still $G$, in which case making the moves $\pp 0.1$ and $\pp 1.1$ would be simply illegal. Note also that the ultimate effect of the move $\pp .1$ on the game would remain the same as it is now even if this move was made before the replicative move $\oo\col{}$. It is CoL's striving to achieve this sort of flexibility and asynchronous-communication-friendliness  that has determined our seemingly ``strange'' choice of trees rather than sequences as the underlying structures for $\st$-games. Any attempt to deal with sequences instead of trees would encounter the problem of violating what CoL calls the  {\em static} (speed-independent) property of games.\label{foot}\footnote{While not technically necessary for the purposes of this paper, here we still reproduce a definition of the static property for constant games. 
For either player $\xx$, let us say that a run $\Upsilon$ is a {\bf $\xx$-delay} of a run $\Gamma$ iff (1) 
 for both players $\xx'\in\{\top,\bot\}$, the subsequence of $\wp'$-labeled moves of $\Upsilon$ is the same as that of $\Gamma$, and (2)
for any $n,k\geq 1$, if the $n$th $\wp$-labeled move is made later than (is to the right of) the $k$th $\gneg\wp$-labeled move in $\Gamma$, then so is it in $\Upsilon$.  Next, 
let us say that a run is {\bf $\wp$-legal} iff it is not $\wp$-illegal. 
Now, we say that a constant game  $A$ is {\bf static} iff, whenever a run $\Upsilon$ is a $\wp$-delay of 
a run $\Gamma$, we have: (i) if $\Gamma$ is a $\wp$-legal run of $A$, then so is $\Upsilon$, and 
(ii) if $\Gamma$ is a $\wp$-won run of $A$, then so is $\Upsilon$.}

 Anyway, the position resulting from the second move of $\Gamma$ is    
\[\st \Bigl(\bigl( q\adc (r\adc (s\add t))\bigr)\circ \bigl(q\adc (r\adc (s\add t))\bigr)\Bigr).\]

The effect of the 3rd move $ \oo 0.0$ is choosing the left $\adc$-conjunct $q$ in the left ($0$-addressed) leaf of the tree, which results in 
\[\st \Bigl(q\circ \bigl(q\adc (r\adc (s\add t))\bigr)\Bigr).\]

Similarly, the 4th move $\oo 1.1$ chooses the right $\adc$-conjunct in the right leaf of the tree, resulting in 
\[\st \Bigl(q\circ \bigl(r\adc (s\add t)\bigr)\Bigr).\]

The 5th move $\oo \col{1}$ replicates  the right leaf, bringing the game down to  
\[\st \Bigl(q\circ \bigl((r\adc (s\add t))\circ (r\adc (s\add t))\bigr)\Bigr).\]

The 6th move $\oo 10.0$ chooses the left $\adc$-conjunct in the second ($00$-addressed) leaf, and, similarly, the 7th move $\oo 11.1$ chooses the right $\adc$-conjunct in the third ($11$-addressed) leaf. These two moves bring the game down to   
\[\st \Bigl(q\circ \bigl(r\circ (s\add t)\bigr)\Bigr).\]

The last, 8th move $\pp 11.0$ chooses the left $\add$-disjunct of the third leaf, and the final position is 
\[\st \bigl(q\circ (r\circ s)\bigr).\]

According to clause (ii) of Definition \ref{op1}, $\Gamma$ is a $\pp$-won run of $\st G$ iff, for any infinite bit string $v$, $\Gamma^{\preceq v}$ is a $\pp$-won run of $G$. Observe that for any infinite --- or, ``sufficiently long'' finite --- bit string $v$, $\Gamma^{\preceq v}$ is either $\seq{\pp 1,\oo 0}$ (if $v=0\ldots$) or $\seq{\pp 1,\oo 1, \oo 0}$ (if $v=10\ldots$) or $\seq{\pp 1,\oo 1, \oo 1, \pp 0}$ (if $v=11\ldots$).  We also have $\seq{\pp 1,\oo 0}G=q$,
$\seq{\pp 1,\oo 1, \oo 0}G=r$ and  $\seq{\pp 1,\oo 1, \oo 1, \pp 0}G=s$. So it is no accident that we see $q,r,s$ at the leaves in the final position. Correspondingly, the game is won iff each one of these three propositions is true. 

The cases where $\oo$ makes infinitely many replications in a run $\Gamma$ of a game $\st H$ and hence the ``eventual tree'' is infinite are similar, with the only difference that the ``addresses'' of the ``leaves'' of such a ``tree'', corresponding to different plays of  $H$, may be infinite bit strings.  But, again, the overall game $\st H$ will be won by $\pp$ iff all of those plays --- all $\Gamma^{\preceq v}$ where $v$ is an infinite bit string, that is --- are $\pp$-won plays of $H$.  

\end{example}

\begin{definition}\label{opp}
 Let $B$,  $A_1,\ldots,A_n$ ($n\geq 0$) be   constant games. We define 
\(A_1,\ldots,A_n\intimpl B\) --- let us call it the {\bf ultimate reduction} of $B$ to $A_1,\ldots,A_n$ --- as the game \(\st A_1\mlc\ldots\mlc\st A_n\mli B.\)
\end{definition}

\section{Generalized universes and non-constant games}\label{nncg}

Constant games can be seen as generalized propositions: while propositions in classical logic are just elements 
of $\{\twg,\tlg\}$, constant games are functions from runs to $\{\twg,\tlg\}$.  Our concept of a  game generalizes that of a constant game in the same sense as the classical concept of a predicate generalizes that of a proposition.

We fix some infinite set 
of expressions called {\bf variables}, and use \label{ivariable} 
 the letters \(x,y,z,s,r,t\)  as metavariables for them. 
We also fix another infinite set (disjoint from the previous one) of expressions called {\bf constants}:\label{iconstant} 
\[\{0,1,10,11,100,101,110,111,1000,\ldots\}.\] These are thus  {\bf binary numerals}\label{ibinnum} --- the strings matching the regular expression $0\cup 1(0\cup 1)^*$.  We will be typically identifying such strings --- by some rather innocent abuse of concepts --- with the natural numbers represented by them in the standard binary notation, and vice versa.  We will be mostly using $a,b,c,d$ as metavariables for constants. 

A {\bf universe} (of discourse) is a pair $(U, ^U)$, where $U$ is a nonempty set, and $^U$, called the {\bf naming function} of the universe,  is a function that sends each constant $c$  to an element $c^U$ of $U$. The intuitive meaning of $c^U=s$ is that $c$ is a {\bf name} of $s$. Both terminologically and notationally, we will typically identify each universe $(U, ^U)$ with its first component and, instead of ``$(U, ^U)$'', write simply ``$U$'', keeping in mind that each such ``universe'' $U$ comes with a fixed  associated function $^U$. 

A universe $U=(U,^U)$ is said to be {\bf ideal} iff $U$ coincides with the above-fixed set of constants, and $^U$ is the identity function on that set.  
All earlier papers on CoL dealt only with ideal universes. This was for simplicity considerations, yielding no loss of generality as so far no results have relied on the assumption that the underlying universes were ideal. Our present treatment, however, for both technical and philosophical reasons, {\em does} call for the above-defined, more general, concept of a universe --- a universe where some objects may have  unique names, some objects have many names, and some objects have no names at all.\footnote{Further generalizations are possible if and when a need arises. Namely, one may depart from our present assumption that the set of constants is infinite and/or fixed, as long as there is a fixed constant --- say, $0$ --- that belongs to every possible set of constants ever considered. No results of this or any earlier papers on CoL would be in any way affected by doing so.} Note that real-world universes are typically not ideal: not all people living or staying in the United States have Social Security numbers;  most stars and planets of the Galaxy have no names at all, while some  have several names (Morning Star = Evening Star = Venus); etc. A natural example of a non-ideal universe from the world of mathematics would be the set  of real numbers, only some of whose elements have names, such as $5$, $1/3$, $\sqrt{2}$ or $\pi$. Generally, no uncountable universe would  be ideal  for the simple reason that  there can  only be countably many names. This is so because names, by their very nature and purpose, have to be finite objects. Observe also that many  properties of common interest such as computability or decidability, are usually sensitive to how objects are named. For instance, strictly speaking, computing a function $f(x)$ means the ability to tell, after seeing a (the) {\em name} of an arbitrary object $\mathfrak{a}$, to produce a (the) {\em name} of the object $\mathfrak{b}$ with $\mathfrak{b}=f(\mathfrak{a})$.  Similarly, an algorithm that decides a predicate $p(x)$ on a set $S$, strictly speaking, takes not {\em elements} of $S$ --- which may be abstract objects such as numbers or graphs --- but rather {\em names} of those elements (such as binary numerals or codes). It is not hard to come up with a nonstandard naming of natural numbers through binary numerals where the predicate ``$x$ is even'' is undecidable.  On the other hand, for any undecidable arithmetical predicate $p(x)$, one can come up with a naming function such that $p(x)$ becomes decidable --- for instance, one that assigns even-length names to all $\mathfrak{a}$ with $p(\mathfrak{a})$    and assigns odd-length names to all $\mathfrak{a}$ with $\gneg p(\mathfrak{a})$. Classical logic exclusively deals with objects of a universe without a need for also considering names for them, as it is not concerned with decidability or computability. CoL, on the other hand, with its computational semantics, inherently calls for being more careful about differentiating between objects and their names, and hence for explicitly considering universes in the form $(U, ^U)$ rather than just $U$ as classical logic does.

By a {\bf valuation}\label{ivaluation} on a universe $U$, or a $U$-valuation,  we mean 
a mapping $e$ that sends each variable $x$ to an element $e(x)$ of $U$. For technical convenience, we extend every such mapping to all constants as well, by stipulating that, for any constant $c$, $e(c)=c^U$. When a universe $U$ is fixed,  irrelevant or clear from the context, we may omit references to it and simply say ``valuation''. In these terms, a classical predicate $p$ can be understood as 
a function that sends each valuation $e$ to a proposition, i.e., to a constant predicate.   Similarly, what we call a game sends valuations to constant games: 

\begin{definition}\label{ngame}
Let $U$ be a universe. A {\bf game on $U$} is a function $A$ from $U$-valuations   to constant games. We write $e[A]$\label{iea} (rather than $A(e)$) to denote the constant game returned by $A$ on valuation $e$. Such a constant game $e[A]$ is said to be an {\bf instance}\label{iinstance} of $A$. 
For readability, we usually write $\legal{A}{e}$\label{ilre} and $\win{A}{e}$ instead of $\legal{e[A]}{}$ and $\win{e[A]}{}$.
\end{definition}

Just as this is the case with propositions versus predicates, constant games in the sense of Definition \ref{game} will
be thought of as special, constant cases of games in the sense of Definition \ref{ngame}. In particular, each constant game $A'$ is the game $A$ such that, for every valuation $e$,
$e[A]= A' $. From now on we will no longer distinguish between such $A$ and $A' $, so that, if $A$ is a constant game,
it is its own instance, with $A= e[A]$ for every $e$.

Where $n$ is a natural number, we say that a game $A$ is {\bf $n$-ary}\label{igarity} iff there is are $n$ variables such that, for any two valuations $e_1$ and $e_2$ that agree on all those variables, we have $e_1[A]= e_2[A]$. Generally, a game that is $n$-ary for some $n$, is said to be {\bf finitary}.\label{ifinitary} Our paper is going to exclusively deal with finitary games and, for this reason, we agree that, from now on, when we say ``game'', we always mean ``finitary game''.  

For a variable $x$ and valuations  $e_1,e_2$, we write $e_1\equiv_x e_2$ to mean that the two valuations have the same universe  and agree on all variables  other than $x$.

We say that a game $A$ {\bf depends} on a variable $x$ iff there are two valuations  $e_1,e_2$ with $e_1\equiv_x e_2$  such that $e_1[A]\not= e_2[A]$. An $n$-ary game thus depends on at most $n$ variables. And constant games are nothing but $0$-ary games, i.e., games that do not depend on any variables. 

We say that a (not necessarily constant) game $A$ is {\bf elementary}\label{ielem2} iff so are all of its instances $e[A]$. 


Just as constant games are generalized propositions, games can be treated as generalized predicates. Namely, we will view each predicate $p$ of whatever arity as  the same-arity elementary game such that, for every valuation $e$,
$\win{p}{e}\emptyrun=\pp$ iff $p$ is true at $e$.  
And vice versa: every elementary game $p$ will be viewed as the same-arity predicate which is true at a given valuation $e$ iff  $\win{p}{e}\emptyrun=\pp$.   
Thus, for us, ``predicate'' and ``elementary game'' are synonyms. Accordingly,  any standard terminological or notational conventions familiar from the literature for predicates also apply to them to them viewed as elementary games.

Just as the Boolean operations straightforwardly extend from propositions to all predicates, our operations 
$\gneg,\mlc,\mld,\mli,\adc,\add,\st,\intimpl$ extend from constant games to all games. This is done by simply stipulating that $e[\ldots]$ commutes with all of those operations: $\gneg A$ is 
the game such that, for every valuation $e$, $e[\gneg A]=\gneg e[A]$; $A\adc B$ is the game such that,
for every $e$, $e[A\adc B]= e[A]\adc e[B]$; etc. So does the operation of prefixation: provided that $\Phi$ is a legal position of every instance of $A$,  $\seq{\Phi}A$ is  understood as the unique game such that, for every $e$, $e[\seq{\Phi}A]= \seq{\Phi}e[A]$.

\begin{definition}\label{sov}
Let $A$ be a game on a universe $U$, $x_1,\ldots,x_n$ be pairwise distinct variables, and $\mathfrak{t}_1,\ldots,\mathfrak{t}_n$ be  constants and/or variables. 
The result of {\bf substituting $x_1,\ldots,x_n$ by $\mathfrak{t}_1,\ldots,\mathfrak{t}_n$ in $A$}, denoted $A(x_1/\mathfrak{t}_1,\ldots,x_n/\mathfrak{t}_n)$, is defined by stipulating that, for every valuation $e$, $e[A(x_1/\mathfrak{t}_1,\ldots,x_n/\mathfrak{t}_n)]= e'[A]$, where $e'$ is the valuation that sends each $x_i$ to $e(\mathfrak{t}_{i})$ and agrees with $e$ on all other variables. 
\end{definition}

Following the standard readability-improving practice established in the literature for predicates, we will often fix pairwise distinct  variables $x_1,\ldots,x_n$ for a game $A$ and write $A$ as $A(x_1,\ldots,x_n)$. 
Representing $A$ in this form  sets a context in which we can write $A(\mathfrak{t}_1,\ldots,\mathfrak{t}_n)$ to mean the same as the more clumsy expression $A(x_1/\mathfrak{t}_1,\ldots,x_n/\mathfrak{t}_n)$. 

\begin{definition}\label{bq}
Let  $A(x)$ be a game on a given universe. On the same universe,   $\ada xA(x)$ ({\bf choice universal quantification}) and  $\ade xA(x)$ ({\bf choice existential quantification}) are defined as the following two games, respectively:    
\[\begin{array}{l}
A(0)\adc A(1)\adc A(10)\adc A(11)\adc A(100)\adc \ldots;\\  
A(0)\add A(1)\add A(10)\add A(11)\add A(100)\add \ldots .   
\end{array}\]
\end{definition}

Thus, every initial legal move of $\ada xA(x)$ or $\ade xA(x)$ is a constant $c\in\{0,1,10,11,100,\ldots\}$,  which in our informal language we may refer to as ``the constant chosen (by the corresponding player) for $x$''. 

 We say that a game $A$ is  {\bf unistructural}\label{iunistructural} iff, for any two valuations $e_1$ and $e_2$,   we have $\legal{A}{e_1}= \legal{A}{e_2}$. Of course, all constant or elementary games are unistructural. It can also be easily seen that all our game operations preserve the unistructural property of games. For the purposes of the present paper, considering only  unistructural games would be sufficient. 

We define the remaining operations $\cla$ and $\cle$ only for unistructural games:

\begin{definition}\label{op5} Below  $A(x)$ is an arbitrary unistructural game on a universe $U$. On the same universe:\vspace{9pt}

\noindent 1. The game $\cla x A(x)$ ({\bf blind universal quantification}) is defined by stipulating that, for every $U$-valuation $e$, player $\xx$ and move $\alpha$, we have: 
\begin{quote}\begin{description}
\item[(i)] $\seq{\xx\alpha}\in\legal{\clai x A(x)}{e}$ iff $\seq{\xx\alpha}\in\legal{A(x)}{e}$. Such an initial legal labmove $\xx\alpha$ brings the game $e[\cla x A(x)]$ down to 
$e[\cla x\seq{\xx\alpha}A(x)]$.
\item[(ii)] Whenever $\Gamma$ is a legal run of $e[\cla x A(x)]$,  $\win{\clai x A(x)}{e}\seq{\Gamma}= \pp$ iff, for every valuation $g$ with $g\equiv_x e$, $\win{A(x)}{g}\seq{\Gamma}= \pp$. \vspace{5pt}
\end{description}\end{quote}
\noindent 2. The game $\cle x A(x)$ ({\bf blind existential quantification})  is defined in exactly the same way, only with $\pp$ and $\oo$ interchanged.  
\end{definition}

\begin{example}\label{may14}
Consider the ideal universe $U=\{0,1,10,11,100,\ldots\}$. Let $G$ be the following game on $U$, with the predicates {\em Even} and {\em Odd} having their expected meanings: 
\[\cla y\Bigl(\mbox{\em Even}(y)\add \mbox{\em Odd}(y) \mli \ada   x\bigl(\mbox{\em Even}(x\plus y)\add \mbox{\em Odd}(x\plus y)\bigr)\Bigr).\]
 Then the sequence 
$\seq{\oo 1.11,\ \oo 0.0,\ \pp 1.1}$ 
is a legal run of $G$, the effects of the moves of which are shown below:
\[\begin{array}{ll}
G:  & \cla y\Bigl(\mbox{\em Even}(y)\add \mbox{\em Odd}(y) \mli \ada   x\bigl(\mbox{\em Even}(x\plus y)\add \mbox{\em Odd}(x\plus y)\bigr)\Bigr)\\
\seq{\oo 1.11}G:  & \cla y\bigl(\mbox{\em Even}(y)\add \mbox{\em Odd}(y) \mli \mbox{\em Even}(11\plus y)\add \mbox{\em Odd}(11\plus y)\bigr)\\
\seq{\oo 1.11, \oo 0.0}G: &  \cla y\bigl(\mbox{\em Even}(y) \mli \mbox{\em Even}(11\plus y)\add \mbox{\em Odd}(11\plus y)\bigr)\\
\seq{\oo 1.11, \oo 0.0,\pp 1.1}G: & \cla y\bigl(\mbox{\em Even}(y) \mli \mbox{\em Odd}(11\plus y)\bigr)
\end{array}\]
The play hits (ends as) the true proposition $\cla y\bigl(\mbox{\em Even}(y) \mli \mbox{\em Odd}(11\plus y)\bigr)$ and hence is won by $\pp$. 

\end{example}

\begin{example}\label{newexample} 
The sequence 
\[\seq{\pp 0.1.\col{},\ \oo 1.10,\ \pp 0.1.0.10,\ \pp 0.1.0.10,\ \oo 0.1.0.100,\ \pp 0.1.1.100,\ \pp 0.1.1.10,\ \oo 0.1.1.1000,\ \pp 1.1000}\] 
is a legal run of the game 
\begin{equation}\label{mar7}
\cla x \bigl(x^3\equals (x\mult x)\mult x\bigr),\ \ada x\ada y \ade z (z\equals x\mult y) \ \intimpl \ \ada x\ade y(y\equals x^3).
\end{equation}
Below we see how the game evolves according to the scenario of this run: 

\[\begin{array}{ll}
& \cla x \bigl(x^3\equals (x\mult x)\mult x\bigr),\ \ada x\ada y \ade z (z\equals x\mult y) \ \intimpl \ \ada x\ade y(y\equals x^3)\\
\mbox{$\pp 0.1.\col{}$ yields} & \cla x \bigl(x^3\equals (x\mult x)\mult x\bigr),\ \ada x\ada y \ade z (z\equals x\mult y)\circ\ada x\ada y \ade z (z\equals x\mult y)  \ \intimpl \ \ada x\ade y(y\equals x^3)\\
 \mbox{$\oo 1.10$ yields} & \cla x \bigl(x^3\equals (x\mult x)\mult x\bigr),\ \ada x\ada y \ade z (z\equals x\mult y)\circ\ada x\ada y \ade z (z\equals x\mult y)  \ \intimpl \ \ade y(y\equals 10^3)\\
\mbox{$\pp 0.1.0.10$ yields} & \cla x \bigl(x^3\equals (x\mult x)\mult x\bigr),\ \ada y \ade z (z\equals 10\mult y)\circ\ada x\ada y \ade z (z\equals x\mult y)  \ \intimpl \ \ade y(y\equals 10^3)\\
 \mbox{$\pp 0.1.0.10$ yields} & \cla x \bigl(x^3\equals (x\mult x)\mult x\bigr),\ \ade z (z\equals 10\mult 10)\circ\ada x\ada y \ade z (z\equals x\mult y)  \ \intimpl \ \ade y(y\equals 10^3)\\
\mbox{$\oo 0.1.0.100$ yields} & \cla x \bigl(x^3\equals (x\mult x)\mult x\bigr),\ (100\equals 10\mult 10)\circ\ada x\ada y \ade z (z\equals x\mult y)  \ \intimpl \ \ade y(y\equals 10^3)\\
\mbox{$\pp 0.1.1.100$ yields} & \cla x \bigl(x^3\equals (x\mult x)\mult x\bigr),\ (100\equals 10\mult 10)\circ\ada y \ade z (z\equals 100\mult y)  \ \intimpl \ \ade y(y\equals 10^3)\\
\mbox{$\pp 0.1.1.10$ yields} & \cla x \bigl(x^3\equals (x\mult x)\mult x\bigr),\ (100\equals 10\mult 10)\circ \ade z (z\equals 100\mult 10)  \ \intimpl \ \ade y(y\equals 10^3)\\
\mbox{$\oo 0.1.1.1000$ yields} & \cla x \bigl(x^3\equals (x\mult x)\mult x\bigr),\ (100\equals 10\mult 10)\circ (1000\equals 100\mult 10)  \ \intimpl \ \ade y(y\equals 10^3)\\
 \mbox{$\pp 1.1000$ yields} &\cla x \bigl(x^3\equals (x\mult x)\mult x\bigr),\ (100\equals 10\mult 10)\circ (1000\equals 100\mult 10)  \ \intimpl \ 1000\equals 10^3

\end{array}\]
The play hits a true proposition and hence is won by $\pp$. Note that here, unlike the case in the previous example, $\top$ is the winner no matter what universe we consider and what the meanings of the expressions $x\times y$ and $x^3$ are. In fact, $\pp$ has a ``purely logical'' winning strategy in this game, in the sense that the strategy is successful  regardless of whether things have their standard arithmetic meanings or some other meanings. This follows from the promised soundness of {\bf CL12} and the fact --- illustrated later in Example \ref{ecube} --- that (\ref{mar7}) is provable in {\bf CL12}.

\end{example}

\section{Interactive machines revisited}\label{icp}

In traditional game-semantical approaches, including Blass's \cite{Bla72,Bla92} approach which is the closest precursor of ours, player's strategies are understood as {\em functions} --- typically as functions from interaction histories (positions) to moves, or sometimes (\cite{Abr94}) as functions that only look at the latest move of the history. This {\em strategies-as-functions} approach, however, is generally inapplicable in the context of CoL, 
whose relaxed semantics, in striving to get rid of  ``bureaucratic pollutants'' and only deal with the remaining true essence of games,  does not impose any regulations on which player can or should move in a given situation. Here, in many cases, either player may have (legal) moves, and then it is unclear whether the next move should be the one prescribed by $\pp$'s strategy function or the one prescribed by the strategy function of $\oo$. For a game semantics whose ambition is to provide a comprehensive, natural and direct tool for modeling interaction, the strategies-as-functions approach would be less than adequate, even if technically possible. This is so for the simple reason that  the strategies that real computers follow are not functions. If the strategy of your personal computer was a function from the history of interaction with you, then its performance would keep noticeably worsening due to the need to read the continuously lengthening --- and, in fact, practically infinite --- interaction history every time before responding. Fully ignoring that history and looking only at your latest keystroke in the spirit of \cite{Abr94} is also certainly not what your computer does, either. The advantages of our approach thus become especially appreciable when one tries to bring complexity theory into interactive computation: hardly (m)any really meaningful and interesting complexity-theoretic concepts can be defined for games (particularly, games that may last long) with the strategies-as-functions approach.  

In CoL, ($\pp$'s effective) strategies are defined in terms of interactive machines, where computation is one continuous process interspersed with --- and influenced by --- multiple ``input'' (environment's moves) and ``output'' (machine's moves) events. Of several, seemingly rather different yet equivalent,  machine models of interactive computation studied in CoL, this paper only employs the most basic, {\bf HPM}\label{ihpm} (``Hard-Play Machine'') model.

Remember that an HPM   is   a Turing machine with the additional capability of making moves. The adversary can also move at any time, with such moves being the only nondeterministic events from the machine's perspective. Along with the ordinary read/write {\bf work tape}, the machine has an additional, read-only   tape called  the  {\bf run tape}.\footnote{The EPMs from the earlier literature on CoL also had a third tape called the {\em valuation tape}. The latter, however, becomes redundant in our present treatment due to the fact that we are exclusively interested in constant games or finitary games identified with their (constant) $\adc$-closures.} The latter, serving as a dynamic input, at any time  spells the ``current position'' of the play. Its role is to make the run fully visible to the machine. 
In these terms,  an  algorithmic solution ($\pp$'s winning strategy) for a given constant game $A$ is understood as an HPM $\cal M$ such that,  no matter how the environment acts during its interaction with $\cal M$ (what moves it makes and when),  the run incrementally spelled on the run tape is a $\pp$-won run of $A$.   
As for $\oo$'s strategies, there is no need to define them: all possible behaviors by $\oo$ are accounted for by the different possible nondeterministic updates  of the run tape of an HPM. 

In the above outline, we described HPMs in a relaxed fashion, without being specific about details  such as, say, how, exactly, moves are made by the machine, how many moves either player can make at once, what happens if both players attempt to move ``simultaneously'', etc. All reasonable design choices yield the same class of winnable games as long as we consider the natural subclass of games called {\bf static}.\label{istatic} Such games are obtained by imposing a certain simple formal condition on games (see the footnote on page \pageref{foot}, or Section 5 of \cite{Japfin}), which is not really necessary to reproduce here as nothing in this paper relies on it. We shall only point out that, intuitively, static games are interactive tasks where the relative speeds of the players are irrelevant, as it never hurts a player to postpone making moves. In other words, static games are games that are contests of intellect rather than contests of speed. And one of the theses that CoL philosophically relies on is that static games present an adequate formal counterpart of our intuitive concept of ``pure'', speed-independent interactive computational problems. Correspondingly, CoL restricts its attention (more specifically, possible interpretations of the atoms of its formal language) to static games. All elementary games turn out to be trivially static, and the class of static games turns out to be closed under all game operations studied in CoL. More specifically, all games expressible in the language of the later-defined logic $\cltw$  are static (as well as finitary and unistructural). Such games are not necessarily constant but, due to being finitary, can and will be thought of to be constant by identifying them with their $\ada$-closures. Correspondingly, in this paper, we use the 
 term ``{\bf computational problem}", or simply ``{\bf problem}", as a synonym of ``constant static game''.

While design choices are unimportant and ``negotiable'', we still want to agree on some technical details for clarity.   
Just like an ordinary Turing machine, an HPM has a finite set of {\bf states},\label{istate} one of which has the special status of being the {\bf start state}. There are no accept, reject, or halt states, but there are specially designated states called {\bf move states}.\label{imovestate}  
Either tape of the machine has a beginning but no end and is divided into infinitely many {\bf cells},\label{icell} arranged in the left-to-right order. At any time, each cell  contains one symbol from a certain fixed finite set of {\bf tape symbols}.\label{itapesymbol} The {\bf blank} symbol, as well as $\pp$ and $\oo$, are among the tape symbols. 
We also assume that these three symbols  are not among the symbols that any (legal or illegal) move can ever contain.  
Either tape has its own {\bf scanning head},\label{ihead} at any given time looking (located) at one of the cells of the tape.  A transition from one {\bf computation step}  (``{\bf clock cycle}'', {\bf ``time}'')   to another happens according to the fixed {\bf transition function}\label{itf} of the machine. The latter, depending on the current state, and the symbols seen by the two heads on the corresponding tapes, deterministically prescribes the next state, the tape symbol by which the old symbol should be overwritten in the current cell   (the cell currently scanned by the  head) of the work tape, and, for each head, the direction --- one cell left or one cell right --- in which the head should move. A constraint here is that the blank symbol, $\pp$ or $\oo$ can never be written by the machine on the work tape. An attempt to move left when the head of a  given   tape is looking at the first (leftmost) cell  results in staying put. So does an attempt 
to move right when the head is looking at the blank symbol. 

When the machine starts working, it is in its start state, both scanning heads are looking at the leftmost cells of the corresponding tapes,  and (all cells of) both   tapes are blank (i.e., contain the blank symbol). Whenever the machine enters a move state, the string $\alpha$ spelled by (the contents of) its work tape cells, starting from the first cell and ending with the cell immediately left to the work-tape scanning head,   will be automatically appended --- at the beginning of the next clock cycle --- to the contents of the run tape in the $\pp$-prefixed form  $\pp\alpha$. And, on every transition, whether the machine is in a move state or not, any finite sequence $\oo\beta_1,\ldots,\oo\beta_m$ of $\oo$-labeled moves may be nondeterministically appended to the content of the run tape. If the above two events happen on the same clock cycle, then the moves will be appended to the contents of the run tape in the following order: $\pp\alpha\oo\beta_1\ldots\oo\beta_m$.  

With each labmove  that emerges on the run tape we associate its {\bf timestamp},  which is the number of the clock cycle immediately preceding the cycle on which the move first emerged on the run tape. Intuitively, the timestamp indicates on which cycle the move was {\em made} rather than {\em appeared} on the run tape: a move made during cycle $\#i$ appears on the run tape on cycle $\#i\plus 1$ rather than $\#i$. Also, we agree that the count of clock cycles starts from $0$, meaning that the very first clock cycle is cycle $\#0$ rather than $\#1$. 

A {\bf configuration}\label{iconfiguration} is a full description of (the ``current'') contents of the work and run tapes, the locations of the two scanning heads, and the state of the machine. 
A {\bf computation branch}\label{icb} is an infinite sequence $C_0,C_1,C_2,\ldots$ of configurations, where $C_0$ is the initial configuration (as explained earlier), and every $C_{i\plus 1}$ is a configuration that could have legally followed (again,  in the sense explained earlier) $C_i$.  For a computation branch $B$, the {\bf run spelled by $B$}\label{irsb} is the run $\Gamma$ incrementally spelled on the run tape in the corresponding scenario of interaction. We say that such a $\Gamma$ is {\bf a run generated by} \label{irgb} the machine.  

We say that a given HPM $\cal M$ {\bf wins} ({\bf computes}, {\bf solves}) a given constant game $A$  --- and write ${\cal M}\models A$\label{imodels} --- iff every run $\Gamma$ generated by $\cal M$  is a $\pp$-won run of $A$. We say that $A$ is {\bf computable}\label{icomputable} iff there is an HPM $\cal M$ with ${\cal M}\models A$; such an HPM is said to be an (algorithmic) {\bf solution},\label{isol} or {\bf winning strategy}, for $A$.  

\section{Towards interactive complexity}\label{s7}

At present, the theory of interactive computation is far from being well developed, and even more so is the corresponding complexity theory. The studies of interactive computation in the context of complexity, while having going on since long ago, have been relatively scattered and ad hoc:  more often than not,  interaction 
has  been used for better understanding certain complexity issues for traditional, non-interactive problems  rather than being treated as an object of systematic studies 
in its own rights (examples would be alternating computation \cite{Chandra}, or   interactive proof systems and Arthur-Merlin games \cite{Goldwasser,Babai}). 
 As if complexity theory was not ``complex'' enough already, taking it to the interactive level would most certainly generate a by an order of magnitude greater diversity of species from the complexity zoo. 
The present work is the first modest attempt to bring complexity issues into CoL.  Here we introduce one, perhaps the simplest, way of measuring the complexity of (our non-functional) strategies out of the huge and interesting potential variety of complexity measures meaningful and useful in the interactive context.

The {\bf size} of a move $\alpha$ means the length of $\alpha$ as a string.  In the context of a given computation branch of a given HPM $\cal M$, by the 
{\bf background} of a clock cycle $c$ we mean the greatest of the sizes of Environment's moves made by (before) time $c$, or $0$ if there are no such moves. 
If $\cal M$ makes a move on cycle $c$, then the background of that move\footnote{As easily understood, here and in similar contexts,  ``move'' means a move not as a {\em string}, but as an {\em event}, namely, the event of $\cal M$ making a move at time $c$.} means the background of $c$. 
Next, whenever $\cal M$ makes a move on cycle $c$, by  the {\bf timecost} of that move  we mean  $c\minus d$, where $d$ is the greatest cycle with $d< c$ on which a move was made by either player, or is $0$ if there is no such cycle.    

Throughout this paper, an $n$-ary ($n\geq 0$)  {\bf arithmetical function} means a total function from $n$-tuples of natural numbers to natural numbers. ``Unary'' is a synonym of ``$1$-ary''. 

\begin{definition}\label{deftcs}
Let  $h$ be an unary arithmetical  function, and $\cal M$ an HPM. 

1. We say that {\bf $\cal M$ runs in time $h$}, or that $\cal M$ is an {\bf $h$ time machine}, or that $h$ is a {\bf bound} for the time complexity of $\cal M$, iff, in every play (computation branch), for  any clock cycle $c$ on which $\cal M$ makes a move, neither the timecost nor the size of that move  exceeds   $h(\ell)$,  where $\ell$ is the background of $c$.  

2. We say that {\bf $\cal M$ runs in space $h$}, or that $\cal M$ is an {\bf $h$ space machine}, or that $h$ is a {\bf bound} for the space complexity of $\cal M$, iff,  in every play (computation branch), for  any clock cycle $c$,    the number of cells ever visited by the work-tape head of $\cal M$ by time $c$ does not exceed $h(\ell)$, where $\ell$ is the background of $c$.       

\end{definition}

Our time complexity concept can be seen to be in the spirit of what is usually called {\em response time}. The latter generally does not and should not depend on the length of the preceding interaction history. On the other hand, it is not and should not  merely be a function of the adversary's last move, either. A   similar characterization applies to our concept of space complexity. Both complexity measures are equally  meaningful whether it be in the context of ``short-lasting'' games (such as the ones represented by the formulas  of the later-defined logic $\cltw$) or  the context of games that may have ``very long'' and even infinitely long legal runs.

Let $A$ be a constant game, $h$ an unary arithmetical function, and $\cal M$ an HPM. We say that {\bf $\cal M$ wins} ({\bf computes}, {\bf solves}) {\bf $A$ in time $h$}, or that {\bf $\cal M$ is an $h$ time solution for $A$},  iff $\cal M$ is an $h$ time machine with ${\cal M}\models A$. 
We say that $A$ is {\bf computable} ({\bf winnable}, {\bf solvable}) {\bf in time $h$} iff it has an $h$ time solution. Similarly for ``{\bf space}'' instead of ``time''. 

When we say {\bf polynomial time}, it is to be understood as ``time $h$ for some polynomial function $h$''. Similarly for {\bf polynomial space}.

\section{The language of logic $\cltw$ and its semantics}\label{ss6}

Logic $\cltw$ will be axiomatically constructed in Section \ref{ss8}. The present section is merely devoted to its {\em language}. The building blocks of the formulas of the latter are:

\begin{itemize} 
\item {\bf Nonlogical predicate letters},\label{ipl} for which we use $p,q$  as metavariables. With each predicate letter is associated a fixed nonnegative integer called its {\bf arity}.\label{iar2} We assume that, for any $n$, there are infinitely many $n$-ary predicate letters.   
\item {\bf Function letters},\label{ifl} for which we use $f,g$ as metavariables. Again, each function letter comes with a fixed {\bf arity},\label{iar3} and  we assume that, for any $n$, there are infinitely many $n$-ary function letters.  
\item The binary {\bf logical predicate letter} $\equals $.
\item Infinitely many {\bf variables} and {\bf constants}.  These are the same as the ones fixed in Section \ref{nncg}.
\end{itemize}

{\bf Terms},\label{ipterm} for which we use $\tau,\psi,\xi$  as metavariables,  are built from variables, constants and function letters in the standard way.  An {\bf atomic formula} is $p(\tau_1,\ldots,\tau_n)$, where $p$ is an $n$-ary predicate letter and the $\tau_i$ are terms. 
When $p$ is $0$-ary, we write $p$ instead of $p()$. Also, we write $\tau_1\equals \tau_2$ instead of $\equals (\tau_1,\tau_2)$, and $\tau_1\notequals \tau_2$ instead of $\gneg (\tau_1\equals \tau_2)$. 
{\bf Formulas} are built from atomic formulas, propositional connectives $\twg,\tlg$ ($0$-ary), $\gneg$ ($1$-ary), $\mlc,\mld,\adc,\add$ ($2$-ary), variables and quantifiers $\cla,\cle,\ada,\ade$  in the standard way, with the exception that, officially, $\gneg$ is only allowed to be applied to atomic formulas. The definitions of {\em free} and {\em bound} occurrences of variables are standard  (with $\ada,\ade$ acting as quantifiers along with $\cla,\cle$). A formula with no free occurrences of variables is said to be {\bf closed}.

Note that, terminologically, $\twg$ and $\tlg$ do not count as atoms. For us, atoms are formulas containing no logical operators. The formulas $\twg$ and $\tlg$ do not qualify because they {\em are} ($0$-ary) logical operators themselves.

$\gneg E$, where $E$ is not atomic, will be understood as a standard abbreviation: 
$\gneg\twg=\tlg$, $\gneg\gneg E= E$, $\gneg(A\mlc B)= \gneg A\mld \gneg B$, $\gneg \ada xE= \ade x\gneg E$, etc. And $E\mli F$ will be understood as an abbreviation of $\gneg E\mld F$. 

Parentheses will often be omitted --- as we just did --- if there is no danger of ambiguity. When omitting parentheses, we assume that $\gneg$ and the quantifiers have the highest precedence, and $\mli$ has the lowest precedence. 
An expression $E_1\mlc\ldots\mlc E_n$, where $n\geq 2$, is to be understood as $E_1\mlc(E_2\mlc (\ldots\mlc(E_{n-1}\mlc E_n)\ldots)$. Sometimes we can write this  expression for an unspecified $n\geq 0$ (rather than $n\geq 2$). Such a formula, in the case of $n= 1$, should be understood as simply $E_1$. Similarly for $\mld,\adc,\add$.  As for the case of $n=0$, $\mlc$ and $\adc$ should be understood as $\twg$ while $\mld$ and $\add$ as $\tlg$.  

  

Sometimes a formula $F$ will be represented as $F(s_1,\ldots,s_n)$, where the $s_i$ are variables. 
When doing so, we do not necessarily mean that each  $s_i$ has a free occurrence in $F$, or that every variable occurring free in $F$ is among $s_1,\ldots,s_n$. However, it {\em will}  always be assumed (usually only implicitly) that the $s_i$ are pairwise distinct, and have no bound occurrences in $F$.  In the context set by the above representation, $F(\tau_1,\ldots,\tau_n)$ will mean the result of replacing, in $F$, each  occurrence of each $s_i$   by term $\tau_i$. When writing $F(\tau_1,\ldots,\tau_n)$, it will always be assumed (again, usually only implicitly) that the terms $\tau_1,\ldots,\tau_n$ contain no variables that have bound occurrences in $F$, so that there are no unpleasant collisions of variables when doing replacements.  

Similar --- well established in the literature --- notational conventions apply to terms.

A {\bf sequent} is an expression $E_1,\ldots,E_n\intimpl F$, where $E_1,\ldots,E_n$ ($n\geq 0$) and $F$ are formulas. Here $E_1,\ldots,E_n$ is said to be the {\bf antecedent} of the sequent, and $F$ said to be the {\bf succedent}. 

By a {\bf free} (resp. {\bf bound}) {\bf variable} of a sequent we shall mean a variable that has a free (resp. bound) occurrence in one of the formulas of the sequent. For safety and simplicity, throughout the rest of this paper we assume that the sets of all free and bound variables of any sequent that we ever consider --- unless strictly implied otherwise by the context ---  are disjoint.   This restriction, of course, does not yield any loss of expressive power as variables can always be renamed so as to satisfy this condition. 

An {\bf interpretation}\label{iint}\footnote{The concept of an interpretation in CoL is usually more general than the present one. Interpretations in our present sense are called  {\bf perfect}. But here we omit the word ``perfect'' as we do not consider any nonperfect interpretations, anyway.} is a pair $(U,^*)$, where $U=(U,^U)$ is a universe and $^*$ is a function that sends:
\begin{itemize}
\item  every $n$-ary function letter $f$ to a function \(f^*:\ U^n\rightarrow U\);
\item  every nonlogical $n$-ary  predicate letter $p$ to an $n$-ary predicate (elementary game) $p^*(s_1,\ldots,s_n)$ on $U$ which does not depend on any variables other than $s_1,\ldots,s_n$. 
 \end{itemize}

The above uniquely extends to  a  mapping that sends each term $\tau$ to a function $\tau^*$, and each formula or sequent $S$ to a game $S^*$, by stipulating that: 
\begin{enumerate}
\item $c^*=c^U$ (any constant $c$).
\item $s^* =  s$ (any variable $s$). 
\item Where $f$ is an $n$-ary function letter and $\tau_1,\ldots,\tau_n$ are terms, $\bigl(f(\tau_1,\ldots,\tau_n)\bigr)^* =  f^*(\tau_{1}^{*},\ldots,\tau_{n}^{*})$. 
\item Where   $\tau_1$ and $\tau_2$ are terms, $(\tau_1\equals \tau_2)^*$ is $\tau_{1}^{*}\equals \tau_{2}^{*}$. 
\item Where $p$ is an $n$-ary nonlogical  predicate letter  and $\tau_1,\ldots,\tau_n$ are terms, $\bigl(p(\tau_1,\ldots,\tau_n)\bigr)^* =  p^*(\tau_{1}^{*},\ldots,\tau_{n}^{*})$. 
\item $^{*}$ commutes with all logical operators, seeing them as the corresponding game operations: $\tlg^* =  \tlg$,  $(E_1\mlc\ldots\mlc E_n)^{*} =  E^{*}_{1}\mlc \ldots\mlc E^{*}_n$, $(\ada x E)^{*} =  \ada x(E^{*})$, etc.
\item Similarly, $^*$ sees the sequent symbol $\intimpl$ as the same-name game operation, that is, $(E_1,\ldots,E_n\intimpl F)^* = E_{1}^{*},\ldots,E_{n}^{*}\intimpl F^{*}$.  
\end{enumerate}

While an interpretation is a pair $(U,^*)$, terminologically and notationally we will usually identify it with its second component and write $^*$ instead of $(U,^*)$, keeping in mind that every such ``interpretation'' $^*$ comes with a fixed universe $U$, said to be the {\bf universe of $^*$}.  When $O$ is a function letter, a predicate letter, a constant or a formula, and $O^* =  W$, we say that $^*$ {\bf interprets} $O$ as $W$. We can also refer to such a $W$ as 
``{\bf $O$ under interpretation $^*$}''.

When a given formula is represented as $F(x_1,\ldots,x_n)$, we will typically write $F^*(x_1,\ldots,x_n)$ instead of 
$\bigl(F(x_1,\ldots,x_n)\bigr)^*$. A similar practice will be used for terms as well.

We agree that, for a sequent or formula $S$, an interpretation $^*$ and an HPM $\cal M$, whenever we say that $\cal M$ is a {\bf  solution} of $S^*$ or write ${\cal M}\models S^*$, we mean that   $\cal M$ is a  solution of the (constant) game $\ada x_1\ldots\ada x_n (S^*)$, where $x_1,\ldots,x_n$ are exactly the free variables of $S$, listed according to their lexicographic order. We call the above game the {\bf $\ada$-closure} of $S^*$, and denote it by 
$\ada S^*$. 

Note that, for any given sequent or formula $S$, the $\legal{}{}$ component of the game $\ada S^*$ does not depend on the interpretation $^*$. Hence we can safely 
say ``legal run of $\ada S$'' --- or even just ``legal run of $S$'' --- without indicating an interpretation applied to the sequent.

We say that an HPM $\cal M$ is a {\bf uniform solution}, or a {\bf logical solution},  of a sequent $X$ iff, for any interpretation $^*$, ${\cal M}\models X^*$. 

Intuitively, a logical  solution is (indeed)  a ``purely logical'' solution. ``Logical'' in the sense that it does not depend on the universe and the meanings of the nonlogical symbols (predicate and function letters) ---  does not depend on a (the) interpretation $^*$, that is. It is exactly these kinds of   solutions that we are interested in when seeing CoL as a logical basis for applied theories or knowledge base systems. As a universal-utility tool, CoL (or a CoL-based compiler) would have no knowledge of the meanings of those nonlogical symbols (the meanings that will be changing from application to application and from theory to theory), other than what is explicitly  given by the target formula and the axioms or the knowledge base   of the system. 

\section{Logic $\cltw$}\label{ss8}

The purpose of the deductive system $\cltw$ that we construct in this section is to axiomatize the set of   sequents with logical solutions. 
 Our formulation of the system relies on the terminology and notation explained below.

\begin{enumerate}
\item A {\bf surface occurrence} of a subformula is an occurrence that is 
not in the scope of any choice operators ($\adc,\add,\ada$ and/or $\ade$). 
\item A formula not containing choice operators --- i.e., a formula of the language of classical first order logic --- is said to be {\bf elementary}. 
\item A sequent is {\bf elementary} iff all of its formulas are so. 
\item The {\bf elementarization} \[\elz{F}\] of a formula $F$ is the result of replacing
in $F$ all $\add$- and $\ade$-subformulas by $\tlg$, and all $\adc$- and $\ada$-subformulas by $\twg$. Note that $\elz{F}$ is (indeed) an elementary formula.
\item The {\bf elementarization} $\elz{G_1,\ldots,G_n\intimpl F}$ of a sequent 
$G_1,\ldots,G_n\intimpl F$ is the elementary formula \[\elz{G_1}\mlc\ldots\mlc \elz{G_n}\mli \elz{F}.\] 
\item A sequent  is said to be {\bf stable} iff its elementarization is classically valid; otherwise it is {\bf unstable}.  By ``classical validity'', in view of G\"{o}del's completeness theorem,  we mean provability in classical first-order calculus with constants, function letters and $\equals$, where $\equals$ is treated as the logical {\em identity} predicate (so that, say, $x\equals x$, $x\equals y\mli (E(x)\mli E(y))$, etc. are provable).
\item We will be using the notation \[F[E]\] to mean a formula $F$ together with some (single) fixed  surface occurrence of a subformula $E$. Using this notation sets a context, in which $F[H]$ will mean the result of replacing in $F[E]$ the (fixed) occurrence of $E$ by $H$.  Note that here we are talking about some {\em occurrence} of $E$. Only that occurrence gets replaced when moving from $F[E]$ to $F[H]$, even if the formula also had some other occurrences of $E$.
\item By a {\bf rule} (of inference) in this section we mean a binary relation $\mathbb{Y}{\cal R} X$, where $\mathbb{Y}=\seq{Y_1,\ldots,Y_n}$ is a finite sequence of sequents and $X$ is a sequent. Instances of such a relation are schematically written as 
\[\frac{Y_1,\ldots,Y_n}{X},\]
where $Y_1,\ldots,Y_n$ are called the {\bf premises}, and $X$ is  called the {\bf conclusion}. Whenever $\mathbb{Y}{\cal R}X$ holds, we say that $X$ {\bf follows} from $\mathbb{Y}$ by $\cal R$.  
\item Expressions such as $\vec{G},\vec{K},\ldots$ will usually stand for finite sequences of formulas. The standard meaning of an expression such as $\vec{G},F,\vec{K}$  should also be clear. 
\end{enumerate}

\begin{center}
\begin{picture}(100,30)

\put(0,10){\bf THE RULES OF $\cltw$}

\end{picture}
\end{center}

$\cltw$ has the six rules listed below, with the following additional conditions/explanations: 
\begin{enumerate}
\item In $\add$-Choose and $\adc$-Choose, $i\in\{0,1\}$.
\item In $\ade$-Choose and $\ada$-Choose,  $\mathfrak{t}$ is either a constant or a variable with no bound occurrences in the premise, and $H(\mathfrak{t})$ is the result of replacing by $\mathfrak{t}$ all free occurrences of $x$ in $H(x)$ (rather than vice versa).
\end{enumerate}
\begin{center}
\begin{picture}(287,70)

\put(14,50){\bf $\add$-Choose}
\put(12,30){$\vec{G}\ \intimpl\  F[H_i]$}
\put(0,22){\line(1,0){78}}
\put(0,8){$\vec{G}\ \intimpl \ F[H_0\add H_1]$}

\put(232,50){\bf $\adc$-Choose}
\put(212,30){$\vec{G},\ E[H_i],\ \vec{K}\  \intimpl \ F$}
\put(200,22){\line(1,0){113}}
\put(200,8){$\vec{G},\ E[H_0\adc H_1], \ \vec{K}\ \intimpl\ F$}

\end{picture}
\end{center}

\begin{center}
\begin{picture}(287,70)

\put(231,50){\bf $\ada$-Choose}
\put(210,30){$\vec{G},\ E[H(\mathfrak{t})],\ \vec{K}\ \intimpl\ F$}
\put(200,22){\line(1,0){113}}
\put(200,8){$\vec{G},\ E[\ada xH(x)],\ \vec{K}\ \intimpl\ F$}

\put(16,50){\bf $\ade$-Choose}
\put(08,30){$\vec{G}\ \intimpl\  F[H(\mathfrak{t})]$}
\put(0,22){\line(1,0){78}}
\put(0,8){$\vec{G}\ \intimpl \ F[\ade x H(x)]$}

\end{picture}
\end{center}

\begin{center}
\begin{picture}(74,70)

\put(12,50){\bf Replicate}
\put(8,8){$\vec{G},E,\vec{K}\intimpl F$}
\put(0,22){\line(1,0){69}}
\put(0,30){$\vec{G},E,\vec{K},E\intimpl F$}
\end{picture}
\end{center}

\begin{center}
\begin{picture}(300,70)
\put(140,50){\bf Wait}
\put(0,30){$Y_1,\ldots,Y_n$}
\put(0,22){\line(1,0){45}}
\put(55,20){($n\geq 0$), where all of the following five conditions are satisfied:}
\put(20,8){$X$}
\end{picture}
\end{center}

\begin{enumerate}
\item {\bf $\adc$-Condition:}  Whenever $X$ has the form $\vec{G}\intimpl F[H_0\adc H_1]$, both of the sequents $\vec{G}\intimpl F[H_0]$ and 
$\vec{G}\intimpl F[H_1]$ are among $Y_1,\ldots,Y_n$.
\item {\bf $\add$-Condition:} Whenever $X$ has the form $\vec{G},E[H_0\add H_1],\vec{K}\intimpl F$, both of the sequents $\vec{G},E[H_0],\vec{K}\intimpl F$ and 
$\vec{G},E[H_1],\vec{K}\intimpl F$ are among $Y_1,\ldots,Y_n$.
\item {\bf $\ada$-Condition:} Whenever $X$ has the form $\vec{G}\intimpl F[\ada xH(x)]$, for some variable $y$ not occurring in $X$, the sequent  $\vec{G}\intimpl F[H(y)]$ is among  $Y_1,\ldots,Y_n$. Here and below, $H(y)$ is the result of replacing by $y$ all free occurrences of $x$ in $H(x)$ (rather than vice versa).
\item {\bf $\ade$-Condition:} Whenever $X$ has the form $\vec{G},E[\ade xH(x)],\vec{K}\intimpl F$, for some variable $y$ not occurring in $X$, the sequent  $\vec{G},E[H(y)],\vec{K}\intimpl F$ is among  $Y_1,\ldots,Y_n$.
\item {\bf Stability condition:} $X$ is stable.
\end{enumerate}

As will be seen in Section \ref{ssoundness}, each rule --- seen bottom-up --- encodes an action that a winning strategy should take in a corresponding situation, and the name of each rule is suggestive of that action. For instance, Wait (indeed) prescribes the strategy to wait till the adversary moves. This explains why we have called  ``Replicate'' the rule which otherwise is nothing but what is commonly known as Contraction.   

A {\bf $\cltw$-proof} of a sequent $X$ is a sequence $X_1,\ldots,X_n$ of sequents, with $X_n=X$, such that, each $X_i$ follows  by one of the rules of $\cltw$ from some (possibly empty in the case of Wait, and certainly empty in the case of $i=1$) set $\cal P$ of premises such that ${\cal P}\subseteq \{X_1,\ldots, X_{i-1}\}$.
When a $\cltw$-proof of $X$ exists, we say that $X$ is {\bf provable} in $\cltw$, and write $\cltw\vdash X$.

   A {\bf $\cltw$-proof} of a formula $F$ will be understood as a  $\cltw$-proof of the empty-antecedent sequent $\intimpl F$. Accordingly, $\cltw\vdash F$ means $\cltw\vdash\intimpl F$.

\begin{fact}\label{fce}
$\cltw$ is a conservative extension of classical logic. That is, an elementary sequent {\em $E_1,\ldots,E_n\intimpl F$} is provable in $\cltw$ iff the formula $E_1\mlc\ldots\mlc E_n\mli F$ is valid in the classical sense.
\end{fact}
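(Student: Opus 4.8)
The plan is to prove both directions of the biconditional, exploiting the fact that on elementary sequents the choice rules never apply, so a $\cltw$-proof collapses to a single application of the Wait rule whose side-conditions all become vacuous except stability.

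First I would establish the forward direction ($\cltw\vdash E_1,\ldots,E_n\intimpl F$ implies classical validity of $E_1\mlc\ldots\mlc E_n\mli F$) by induction on the length of a $\cltw$-proof, observing that for an elementary sequent the only rule that can possibly have been applied is Wait (and, trivially, Replicate, which on elementary formulas just duplicates an antecedental formula and plainly preserves classical validity). The point is that $\add$-Choose, $\adc$-Choose, $\ada$-Choose and $\ade$-Choose all require a genuine choice operator to occur in the conclusion, which an elementary sequent by definition lacks. Moreover, for an elementary sequent $X$, all four of the $\adc$-, $\add$-, $\ada$-, $\ade$-Conditions of the Wait rule are vacuously satisfied, so the rule reduces to its Stability Condition: $X$ must be stable. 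But the elementarization of an elementary sequent is the sequent's own associated formula $E_1\mlc\ldots\mlc E_n\mli F$ (since $\elz{\cdot}$ leaves a choice-free formula unchanged), and stability means exactly that this formula is classically valid. Hence every provable elementary sequent has a classically valid associated formula.

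Conversely, if $E_1\mlc\ldots\mlc E_n\mli F$ is classically valid, then by definition the elementary sequent $E_1,\ldots,E_n\intimpl F$ is stable. I would then apply the Wait rule with no premises ($n=0$ premises is permitted): since the sequent is elementary, the four choice-conditions are vacuously met, and the Stability Condition holds by assumption, so the empty-premise instance of Wait is legitimate. This single line is already a $\cltw$-proof, giving $\cltw\vdash E_1,\ldots,E_n\intimpl F$.

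The only genuinely delicate point, and the step I would treat most carefully, is verifying that the elementarization of an elementary sequent coincides literally with its associated implication $E_1\mlc\ldots\mlc E_n\mli F$. This requires checking that $\elz{G}=G$ for every elementary (choice-free) formula $G$, which is immediate from the definition of $\elz{\cdot}$ since there are no $\add/\ade/\adc/\ada$-subformulas to rewrite, and then that Definition's clause for sequent elementarization reassembles these into precisely the desired conjunction-implication. Everything else is routine; the substance of the Fact is really just the recognition that on the elementary fragment $\cltw$ degenerates to classical provability via the Stability Condition, together with Gödel completeness (already invoked in the definition of stability) to equate classical validity with classical provability.
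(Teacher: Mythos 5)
Your proposal is correct and follows essentially the same route as the paper's proof: both rest on the observations that the four Choose rules are inapplicable to an elementary sequent, that Wait degenerates to the Stability condition (whose elementarization is the sequent's own associated formula), and that Replicate changes nothing of substance. The only cosmetic difference is that you run the forward direction as a direct induction on proof length where the paper argues contrapositively via bottom-up proof search; the content is identical.
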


\begin{proof} Assume $E_1,\ldots,E_n,F$ are elementary formulas. If $E_1\mlc\ldots\mlc E_n\mli F$ is classically valid, then $E_1,\ldots,E_n\intimpl F$ follows   from the empty set of premises by Wait. And if $E_1\mlc\ldots\mlc E_n\mli F$ is not classically valid, then $E_1,\ldots,E_n\intimpl F$  cannot be the conclusion of any of the rules of $\cltw$ except Replicate. However, applying (bottom-up) Replicate does not take us any closer to finding a proof of the sequent, as the premise still remains an unstable elementary sequent.  
\end{proof}

$\cltw$ can also be seen to be a conservative extension of the earlier known logic {\bf CL3} studied in \cite{Japtcs}.\footnote{Essentially the same logic, called {\bf L}, was in fact known as early as in \cite{Jap02}.} The latter is nothing but the empty-antecedent fragment of $\cltw$ without function letters and identity. 

\begin{example}\label{ecube} In this example, $\mult$ is a binary function letter and $^3$ is a unary function letter. We write $x\mult y$ and $x^3$ instead of $\mult(x,y)$ and $^3(x)$, respectively. The following sequence of sequents is a $\cltw$-proof of the sequent (\ref{mar7}) from Example \ref{newexample}. It may be worth observing that the strategy used by $\top$ in that example, in a sense, ``follows'' our present proof step-by-step in the bottom-up direction. And this is no accident: as we are going to see in the course of proving the soundness of $\cltw$, every $\cltw$-proof rather directly encodes a winning strategy.\vspace{7pt}

\noindent 1. $\cla x \bigl(x^3\equals (x\mult x)\mult x\bigr),\    t\equals s\mult s, \  r\equals t\mult s \ \intimpl \  r\equals s^3$  \ \ {Wait: (no premises)} \vspace{3pt}

\noindent 2. $\cla x \bigl(x^3\equals (x\mult x)\mult x\bigr),\    t\equals s\mult s, \  r\equals t\mult s \ \intimpl \  \ade y(y\equals s^3)$  \ \   {$\ade$-Choose: 1}\vspace{3pt}

\noindent 3. $\cla x \bigl(x^3\equals (x\mult x)\mult x\bigr),\    t\equals s\mult s, \  \ade z (z\equals t\mult s) \ \intimpl \  \ade y(y\equals s^3)$  \ \ {Wait: 2} \vspace{3pt}

\noindent 4. $\cla x \bigl(x^3\equals (x\mult x)\mult x\bigr),\    t\equals s\mult s, \ \ada y \ade z (z\equals t\mult y) \ \intimpl \  \ade y(y\equals s^3)$  \ \ {$\ada$-Choose: 3}\vspace{3pt}

\noindent 5. $\cla x \bigl(x^3\equals (x\mult x)\mult x\bigr),\    t\equals s\mult s, \ \ada x\ada y \ade z (z\equals x\mult y) \ \intimpl \  \ade y(y\equals s^3)$  \ \ {$\ada$-Choose: 4}\vspace{3pt}

\noindent 6. $\cla x \bigl(x^3\equals (x\mult x)\mult x\bigr),\    \ade z (z\equals s\mult s), \ \ada x\ada y \ade z (z\equals x\mult y) \ \intimpl \  \ade y(y\equals s^3)$  \ \ {Wait: 5}\vspace{3pt}

\noindent 7. $\cla x \bigl(x^3\equals (x\mult x)\mult x\bigr),\  \ada y \ade z (z\equals s\mult y), \ \ada x\ada y \ade z (z\equals x\mult y) \ \intimpl \  \ade y(y\equals s^3)$  \ \ { $\ada$-Choose: 6}\vspace{3pt}

\noindent 8.  $\cla x \bigl(x^3\equals (x\mult x)\mult x\bigr),\ \ada x\ada y \ade z (z\equals x\mult y), \ \ada x\ada y \ade z (z\equals x\mult y) \ \intimpl \  \ade y(y\equals s^3)$ \ \ { $\ada$-Choose: 7}\vspace{3pt}

\noindent 9.  $\cla x \bigl(x^3\equals (x\mult x)\mult x\bigr),\ \ada x\ada y \ade z (z\equals x\mult y) \ \intimpl \ \ade y(y\equals s^3)$ \ \ {Replicate: 8}\vspace{3pt}

\noindent 10. $\cla x \bigl(x^3\equals (x\mult x)\mult x\bigr),\ \ada x\ada y \ade z (z\equals x\mult y) \ \intimpl \ \ada x\ade y(y\equals x^3)$ \ \ { Wait: 9}
\end{example}

\begin{example}\label{j28a}
The formula $\cla x\hspace{1pt}p(x)\mli\ada x\hspace{1pt}p(x)$ is provable in $\cltw$. It follows  from $\cla x\hspace{1pt}p(x)\mli p(y)$ by Wait. The latter, in turn, follows by Wait from the empty set of premises. 

On the other hand, the formula $\ada x\hspace{1pt}p(x)\mli\cla x\hspace{1pt}p(x)$, i.e. $\ade x\gneg p(x)\mld \cla x\hspace{1pt}p(x)$, in not provable. Indeed, its elementarization is $\tlg\mld \cla x\hspace{1pt}p(x)$, which is not classically valid.  Hence $\ade x\gneg p(x)\mld \cla x\hspace{1pt}p(x)$ cannot be derived by Wait. Replicate can also be dismissed for obvious reasons. This leaves us with $\ade$-Choose. But if $\ade x\gneg p(x)\mld \cla x\hspace{1pt}p(x)$ is derived  by $\ade$-Choose, then the premise should be $\gneg p(\mathfrak{t})\mld \cla x\hspace{1pt}p(x)$ for some variable or constant $\mathfrak{t}$. The latter, however, is a classically non-valid elementary formula and hence, by Fact \ref{fce}, is not provable. 
\end{example}

\begin{example}\label{j28b}
The formula $\ada x\ade y\bigl(p(x)\mli p(y)\bigr)$ is provable in $\cltw$ as follows:\vspace{7pt}

\noindent 1. $\begin{array}{l}
p(s)\mli p(s)
\end{array}$  \ \ Wait:\vspace{3pt}

\noindent 2. $\begin{array}{l}
\ade y\bigl(p(s)\mli p(y)\bigr)
\end{array}$  \ \ $\ade$-Choose: 1\vspace{3pt}

\noindent 3. $\begin{array}{l}
\ada x\ade y\bigl(p(x)\mli p(y)\bigr)
\end{array}$  \ \ Wait: 2\vspace{7pt}

On the other hand, the formula $\ade y\ada x \bigl(p(x)\mli p(y)\bigr)$ can be seen to be unprovable, even though its classical counterpart $\cle y\cla x \bigl(p(x)\mli p(y)\bigr)$ is a classically valid elementary formula and hence provable in $\cltw$.  
\end{example}

\begin{example}\label{j28c}
While the formula $\cla x\cle y \bigl(y\equals f(x)\bigr)$  is classically valid and hence provable in $\cltw$, its constructive counterpart 
$\ada x\ade y \bigl(y\equals f(x)\bigr)$ can be easily seen to  be unprovable. This is no surprise. In view of the expected soundness of $\cltw$,  provability  of $\ada x\ade y \bigl(y\equals f(x)\bigr)$ would imply that every function $f$ is computable, which, of course, is not the case.     
\end{example}

\begin{exercise}\label{feb1a}
To see the resource-consciousness of $\cltw$, show that it does not prove $p\adc q\mli (p\adc q)\mlc (p\adc q)$, even though this formula has the form $F\mli F\mlc F$ of a classical tautology. Then show that, in contrast, $\cltw$ proves the {\em sequent} $p\adc q\intimpl (p\adc q)\mlc (p\adc q)$ because, unlike the antecedent of a $\mli$-combination, the antecedent of a $\intimpl$-combination is reusable (trough Replicate). 
\end{exercise}

\begin{exercise}\label{feb1ae}
Show that $\cltw\vdash \ade x\ada y\hspace{2pt} p(x,y)\intimpl \ade x\bigl(\ada y\hspace{2pt}p(x,y)\mlc \ada y\hspace{2pt}p(x,y)\bigr)$. Then observe that, on the other hand,  $\cltw$ does not prove any of the formulas 
\[\begin{array}{rcl}
\ade x\ada y\hspace{2pt} p(x,y) & \mli & \ade x\bigl(\ada y\hspace{2pt}p(x,y)\mlc \ada y\hspace{2pt}p(x,y)\bigr);\\
\ade x\ada y\hspace{2pt} p(x,y)\ \mlc \ \ade x\ada y\hspace{2pt} p(x,y) & \mli & \ade x\bigl(\ada y\hspace{2pt}p(x,y)\mlc \ada y\hspace{2pt}p(x,y)\bigr);\\
\ade x\ada y\hspace{2pt} p(x,y)\ \mlc\ \ade x\ada y\hspace{2pt} p(x,y)\ \mlc\ \ade x\ada y\hspace{2pt} p(x,y) & \mli & \ade x\bigl(\ada y\hspace{2pt}p(x,y)\mlc \ada y\hspace{2pt}p(x,y)\bigr);\\
 & \ldots & 
\end{array}\]
Intuitively, this contrast is due to the fact that, even though  both $\st A$ and $\pst A = A\mlc A\mlc A\mlc\ldots$ are resources allowing to reuse $A$ any number of times, the ``branching'' form of reusage offered by $\st A$ is substantially stronger than the ``parallel'' form of reusage offered by $\pst A$.  $\st \ade x\ada y\hspace{2pt} p(x,y)\mli \ade x\bigl(\ada y\hspace{2pt}p(x,y)\mlc \ada y\hspace{2pt}p(x,y)\bigr)$ is a valid principle of CoL while $\pst \ade x\ada y\hspace{2pt} p(x,y)\mli \ade x\bigl(\ada y\hspace{2pt}p(x,y)\mlc \ada y\hspace{2pt}p(x,y)\bigr)$ is not.  
\end{exercise}

\section{The soundness of $\cltw$}\label{ssoundness}

We say that a logical solution $\cal M$ of a sequent $X$ is {\bf well-behaved}  iff  the following  conditions are satisfied:
\begin{enumerate} 
\item There is an integer $d$ such that, in every play,  $\cal M$  makes at most $d$ replicative moves in the antecedent of $X$. 
\item Every non-replicative move that $\cal M$ makes in the antecedent of $X$ is focused. 
\item Every time when $\cal M$ chooses a constant $c$ for a variable $x$ in some $\ade x G$ or $\ada x G$ component of $X$, $c$ is either $0$, or a constant that occurs in $X$,  or a constant already chosen by Environment for some variable $y$ in some  $\ada y G$ or $\ade y G$ component of $\ada X$.
\end{enumerate}

The terms of the language of $\cltw$, identified with their parse trees, are tree-style structures, so let us call them {\bf tree-terms}. A more general and economical way to represent terms, however, is to allow merging some or all identical-content nodes in such trees, thus turning them into (directed, acyclic,  rooted, edge-ordered multi-) graphs.  Let us call these (unofficial) sorts of terms   {\bf graph-terms}. The idea of representing linguistic objects in the form of graphs rather than trees is central in the approach called {\em cirquent calculus} (\cite{Cirq,Japdeep}), and has already  proven its worth. We find that idea particularly useful in our present, complexity-sensitive context. Figure 1 illustrates two terms representing the same polynomial function $y^8$, with the term on the right being a tree-term and the term on the left being a graph-term. As this example suggests, graph-terms are generally exponentially smaller than the corresponding tree-terms, which explains our preference for the former as a standard way of writing (in our metalanguage) polynomial terms. Figure 1 also makes it unnecessary to formally define graph-terms, as their meaning must be perfectly clear after looking at this single example.   

\begin{center} \begin{picture}(353,130)

\put(16,112){\circle{12}}
\put(14,111){$y$}

\put(22,90){\line(1,1){10}}
\put(32,100){\vector(-1,1){10}}
\put(10,90){\line(-1,1){10}}
\put(0,100){\vector(1,1){10}}
\put(16,86){\circle{12}}
\put(12,83){$\mult$}

\put(22,64){\line(1,1){10}}
\put(32,74){\vector(-1,1){10}}
\put(10,64){\line(-1,1){10}}
\put(0,74){\vector(1,1){10}}
\put(16,60){\circle{12}}
\put(12,57){$\mult$}

\put(22,38){\line(1,1){10}}
\put(32,48){\vector(-1,1){10}}
\put(10,38){\line(-1,1){10}}
\put(0,48){\vector(1,1){10}}
\put(16,34){\circle{12}}
\put(12,31){$\mult$}

\put(104,112){\circle{12}}
\put(102,111){$y$}
\put(137,112){\circle{12}}
\put(135,111){$y$}
\put(116,90){\vector(-2,3){10}}
\put(125,90){\vector(2,3){10}}
\put(121,86){\circle{12}}
\put(117,83){$\mult$}

\put(174,112){\circle{12}}
\put(172,111){$y$}
\put(207,112){\circle{12}}
\put(205,111){$y$}
\put(186,90){\vector(-2,3){10}}
\put(195,90){\vector(2,3){10}}
\put(191,86){\circle{12}}
\put(187,83){$\mult$}

\put(151,64){\vector(-3,2){26}}
\put(161,64){\vector(3,2){26}}
\put(156,60){\circle{12}}
\put(152,57){$\mult$}

\put(244,112){\circle{12}}
\put(242,111){$y$}
\put(277,112){\circle{12}}
\put(275,111){$y$}
\put(256,90){\vector(-2,3){10}}
\put(265,90){\vector(2,3){10}}
\put(261,86){\circle{12}}
\put(257,83){$\mult$}

\put(314,112){\circle{12}}
\put(312,111){$y$}
\put(347,112){\circle{12}}
\put(345,111){$y$}
\put(326,90){\vector(-2,3){10}}
\put(335,90){\vector(2,3){10}}
\put(331,86){\circle{12}}
\put(327,83){$\mult$}

\put(291,64){\vector(-3,2){26}}
\put(301,64){\vector(3,2){26}}
\put(296,60){\circle{12}}
\put(292,57){$\mult$}

\put(221,38){\vector(-4,1){63}}
\put(231,38){\vector(4,1){63}}
\put(226,34){\circle{12}}
\put(222,31){$\mult$}

\put(50,10){{\bf Figure 1:} A graph-term and the corresponding tree-term}
\end{picture}\end{center}

By an {\bf explicit polynomial function} $\tau$ we shall mean a graph-term  not containing (at its leaves) any constants other than $0$, and not containing (at its internal nodes) any function letters other than $\successor$ (unary), $\plus$ (binary) and $\mult$ (binary).  The total number $k$ of the variables $y_1,\ldots,y_k$ occurring in (at the leaves of) $\tau$   is said to be the {\bf arity} of $\tau$. Terminologically and notationally we shall usually identify such a term $\tau$ with the $k$-ary arithmetical function represented by it under the standard arithmetical interpretation ($x\successor$ means $x\plus 1$). So, for instance, either term of Figure 1 is a unary explicit polynomial function, representing --- and identified with --- the function $y^8$.

When $\tau$ is a unary explicit polynomial function and $\cal M$ is a $\tau$ time (resp. space) machine, we  say that $\tau$ is an {\bf explicit polynomial bound} for the time (resp. space) complexity of $\cal M$. 

Another auxiliary concept that we are going to rely on in this section and later is that of a {\bf generalized HPM} ({\bf GHPM}). For a natural number $n$, an $n$-ary GHPM is defined in the same way as an HPM, with the difference that the former takes $n$ natural numbers as inputs (say, provided on a separate, read-only  {\em input tape}); such  inputs are present  at the very beginning of the work of the machine and remain unchanged throughout it. An ordinary HPM is thus nothing but a $0$-ary GHPM.  When $\cal M$ is an $n$-ary  GHPM and $c_1,\ldots,c_n$ are natural numbers, ${\cal M}(c_1,\ldots,c_n)$ denotes the HPM that works just like $\cal M$ in the scenario where the latter has received $c_1,\ldots,c_n$ as inputs. 
We will assume that some reasonable encoding (through natural numbers) of GHPMs is fixed. When $\cal M$ is a GHPM, $\code{{\cal M}}$ denotes its {\bf code}.

\begin{theorem}\label{feb9a}
Every sequent provable in $\cltw$ has a well-behaved polynomial time and polynomial space\footnote{Note that, unlike ordinary Turing machines, not every polynomial time HPM runs in polynomial space. In this paper we leave unaddressed the natural question about whether every (interactive) computational problem with a polynomial time solution also has a polynomial space solution.}  logical solution.
Furthermore, such a solution, together with an explicit polynomial bound for both its time and space complexities, can be efficiently constructed\footnote{Here and later in similar meta-contexts, by ``efficiently'' we mean ``in polynomial time''. Also, ``can be efficiently constructed'' precisely means that there is an efficient (polynomial time) procedure that does the construction for an arbitrary proof of an arbitrary sequent.}  from a  proof of the sequent.  
\end{theorem}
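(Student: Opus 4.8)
The plan is to prove soundness constructively by induction on the structure of a $\cltw$-proof, reading each inference rule bottom-up as an instruction telling the strategy what to do in the current game position. First I would set up the extraction: given a proof $X_1,\ldots,X_n$, I proceed by induction on the length of the proof, and for each sequent $X_i$ I build an HPM $\mathcal{M}_i$ (in fact a uniform solution, independent of the interpretation $^*$) together with an explicit polynomial bound for its time and space complexity. The base case is the rule Wait with no premises, i.e.\ a stable elementary sequent $X_i$; by the Stability condition and Fact~\ref{fce} its elementarization $\elz{X_i}$ is classically valid, and since $X_i$ is elementary the game $\ada X_i^*$ is won by $\pp$ in the empty run under every interpretation, so the machine that makes no moves at all wins and runs in constant time and space.

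For the inductive step I would treat each rule as prescribing one concrete action, with the new machine simulating the machine(s) for the premise(s). For the four Choose rules the action is to make a single move (a choice in the indicated $\add/\ade$ succedent-component or $\adc/\ada$ antecedent-component, using the term $\mathfrak{t}$ that witnesses the premise), bringing the game down to an instance of the premise sequent, and then hand control to the premise machine; the well-behavedness conditions on which constants are chosen are guaranteed because $\mathfrak{t}$ is a constant or variable already occurring in the sequent or contributed by Environment. For Replicate the action is to perform one replicative move $\oo\col{w}$ in the $\st$-prefixed antecedental resource, duplicating the relevant copy, after which the play continues as in the premise; here I must verify the bound on the number of replicative moves (condition~1 of well-behavedness) accumulates additively along the proof, giving a constant $d$. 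The crucial rule is Wait: here the machine first waits (makes no move) until Environment makes a move in a surface choice-subformula of $X_i$; the $\adc,\add,\ada,\ade$-Conditions guarantee that whatever legal surface choice the adversary can make drives the game down to (an instance of) one of the premises $Y_j$, at which point the machine switches to simulating $\mathcal{M}_j$. The Stability condition covers the ``degenerate'' branches where no surface choice move is ever made, so the run stabilizes at the elementarization, which is classically valid and hence $\pp$-won.

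The two delicate points I would foreground are the $\st$-bookkeeping and the complexity accounting, and I expect the former to be the main obstacle. Because the antecedent is prefixed by branching recurrence $\st$ rather than a simple parallel conjunction, the strategy must track the tree of replicated copies using the address machinery of Definition~\ref{op1}, and must make sure that every antecedental non-replicative move it makes is \emph{focused} (condition~2) and correctly routed to the copy that a Replicate step created for the given premise; showing that the copies generated by the finitely many Replicate steps suffice to service all antecedental demands of the proof, and that the run projected to each infinite address $v$ is $\pp$-won in the corresponding instance, is the technical heart of the argument. The complexity accounting is then comparatively routine but must be done carefully: each Choose or Replicate step contributes a move whose size is polynomially bounded in the background (the relevant terms are explicit polynomial functions and the chosen constants are either $0$, occur in $X$, or were supplied by Environment and so are bounded by the background), the timecost of each machine move is bounded by the simulation overhead, and the total work-tape space is the space of maintaining the current position-term; summing these finitely many contributions and closing polynomials under composition yields a single explicit polynomial bound $\tau$ for both time and space. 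Finally, since the whole construction is a syntax-directed recursion over the proof tree, it is visibly computable in polynomial time in the size of the proof, which delivers the ``efficiently constructed'' clause, and since no step consulted the interpretation $^*$, the resulting $\mathcal{M}$ is a logical (uniform) solution.
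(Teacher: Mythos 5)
Your treatment of the main claim follows essentially the same route as the paper: induction on the length of the proof, one case per rule, with each rule read bottom-up as an instruction (make the indicated choice move, make one replicative move, or wait), the Stability condition together with the elementarization lemma handling the branch where Environment never moves, and the ``reinterpretation'' of moves reconciling $\st(E\circ E)$ with two separate copies of $\st E$ in the Replicate case. That part is sound, and your accounting for well-behavedness (additive accumulation of the replication bound, focusedness, provenance of chosen constants) matches the paper's.

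There is, however, a genuine gap in your handling of the ``furthermore'' (efficiency) clause. You assert that, because the construction is a syntax-directed recursion over the proof, it is ``visibly computable in polynomial time in the size of the proof.'' This does not follow, and the paper explicitly flags why: at each step the new machine's description embeds the descriptions of the machines built for its premises, so if each step even doubles the size of what it embeds, a proof of length $n$ yields a machine of size exceeding $2^n$ --- each individual step is efficient, yet the composite construction is not. (The problem is aggravated by the fact that $\cltw$-proofs are \emph{sequences}, not trees, so a single premise may be reused many times.) The paper's fix is a specific device you would need: replace each ${\cal M}_i$ by an $n$-ary GHPM ${\cal M}'_i$ that, instead of containing the descriptions of the earlier machines, receives the \emph{codes} $\code{{\cal M}'_1},\ldots,\code{{\cal M}'_n}$ as inputs and refers to them indirectly, so that the size of each ${\cal M}'_i$ is independent of the sizes of the others; the final solution is then ${\cal M}'_n(\code{{\cal M}'_1},\ldots,\code{{\cal M}'_n})$. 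A parallel issue arises for the explicit polynomial bounds $\tau_i$: composing them naively as tree-terms can also blow up exponentially, which is why the paper insists on representing them as graph-terms, under which each step only \emph{adds} a polynomial quantity to the size of the bound. Without these two devices (or equivalents), the ``efficiently constructed'' half of the theorem is not established.
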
  

The rest of this section is exclusively devoted to a proof of the above theorem. For pedagogical reasons, we first prove the main part of the theorem, without the ``furthermore'' clause, which will be taken care only at the end of the section. Our proof of the ``pre-furthermore''  part proceeds by induction on the length of (the number of sequents involved in) a  $\cltw$-proof of a sequent $X$ and, as such, is nothing but a combination of six cases, corresponding to the six rules of $\cltw$ by which the final sequent $X$ could have been derived from its premises (if any). 

In each case, our efforts will be focused on showing how to construct an  HPM $\cal M$ --- a   logical solution of the conclusion --- from an arbitrary instance 
\[\frac{Y_1,\ldots,Y_n}{X}\] of the rule and arbitrary HPMs ${\cal N}_1,\ldots,{\cal N}_n$ ---  well-behaved polynomial time and polynomial space solutions of the premises that exist according to the induction hypothesis.  Typically it will be immediately clear from our description of $\cal M$  that  it is well-behaved and runs  in polynomial time and space,  and that its work in no way does depend on an interpretation $^*$ applied to the sequents involved, so that the solution is logical. 
Also,   our implicit assumption will be that $\cal M$'s adversary never makes illegal moves,  or else $\cal M$  easily detects the illegal behavior and retires\footnote{Technically, ``retiring'' can be understood as  going into an infinite loop that makes no moves and consumes no space.} with a decisive victory.

 Since an interpretation $^*$ is typically irrelevant in such proofs, we will usually omit it and write simply $S$ where, strictly speaking, $S^*$ is meant. That is, we identify formulas or sequents with the games into which they turn once an interpretation is applied to them. Accordingly, in  contexts  where $S^*$ has to be understood as $\ada S^*$ anyway (e.g., when talking about computability of $S^*$), we may omit ``$\ada $''  and write $S$ instead of $\ada S$.

\subsection{$\add$-Choose}\label{sep10aa} 
\[\frac{\vec{G}\ \intimpl\  F[H_i]}{\vec{G}\ \intimpl \ F[H_0\add H_1]}\]

Assume (induction hypothesis) that $\xi$ is a unary explicit polynomial function and $\cal N$ is a well-behaved $\xi$ time and $\xi$ space logical solution of the premise $\vec{G} \intimpl  F[H_i]$. We want to (show how to) construct a well-behaved logical solution $\cal M$ of the conclusion $\vec{G} \intimpl  F[H_0\add H_1]$, together with an explicit polynomial bound $\tau$ for its time and space complexities.

For the beginning, let us consider the case when the conclusion (and hence also the premise) is  closed, i.e., has no free occurrences of variables. The idea here is very simple: $\add$-Choose 
 most directly encodes an action that $\cal M$ should perform in order to successfully solve the conclusion.  Namely, $\cal M$ should choose $H_i$ and then continue playing as $\cal N$. $\cal M$ wins because the above initial move brings the conclusion down to the premise, and $\cal N$ wins the latter. And $\cal M$ automatically inherits the well-behavedness of $\cal N$. 
$\cal M$ is only ``slightly''  slower\footnote{Namely, it is simply the {\em same} in the asymptotic sense.} than $\cal N$, and it would be no problem to indicate an explicit polynomial function $\tau$ (depending on $\xi$) such that $\cal M$ runs in time $\tau$. The same applies to  space, so that we may assume that the above $\tau$ is also a polynomial bound for the space complexity of $\cal M$. Also note that our construction does not depend on an interpretation $^*$ applied to the sequents under question, so that $\cal M$ is a logical solution of the conclusion.

It now remains to consider the case when the conclusion is not closed. This   is pretty similar to the previous case. The only difference in the work of $\cal M$ will be that now, before making the move that brings the conclusion down to the premise, $\cal M$ waits till Environment chooses constants  for all  free variables of $\vec{G} \intimpl  F[H_0\add H_1]$.  Then, after choosing $H_i$,  it continues playing as $\cal N$ would play in the scenario where, at the very beginning of the play, the adversary of the latter chose the same constants for the free variables of $\vec{G} \intimpl  F[H_i]$ as $\cal M$'s environment did.

\subsection{$\adc$-Choose}\label{sep10aa2} 
This case is similar to the previous one.

\subsection{$\ade$-Choose}\label{sep10bb} 
\[\frac{\vec{G}\ \intimpl\  F[H(\mathfrak{t})]}{\vec{G}\ \intimpl \ F[\ade x H(x)]}\]
Taking into account that the choice existential quantifier is nothing but a ``long'' choice disjunction, this case is rather similar to the case of $\add$-Choose. As in that case, assume $\cal N$ is a well-behaved  logical solution of the premise and $\xi$ is an explicit polynomial bound for its time and space complexities. We want to construct a well-behaved polynomial time and polynomial space logical solution $\cal M$ of the conclusion.  

First, consider the case of $\mathfrak{t}$ being a constant.  We let ${\cal M}$ be a machine that works as follows. At the beginning,  ${\cal M}$ waits till Environment specifies some constants for all free variables of $\vec{G} \intimpl F[\ade x H(x)]$. For readability, we continue referring to the resulting game as $\vec{G} \intimpl F[\ade x H(x)]$, even though, strictly speaking, it is $e[\vec{G} \intimpl F[\ade x H(x)]]$, where $e$ is a valuation that agrees with the choices that Environment just made for the free variables of the sequent.     Now $\cal M$  makes the move that brings $\vec{G} \intimpl F[\ade x H(x)]$ down to $\vec{G} \intimpl F[H(\mathfrak{t})]$. For instance, if $\vec{G} \intimpl F[\ade x H(x)]$ is $\vec{G} \intimpl E\mlc(K\mld \ade x H(x))$ and thus $\vec{G} \intimpl F[H(\mathfrak{t})]$ is $\vec{G} \intimpl E\mlc (K\mld H(\mathfrak{t}))$, then $1.1.1.\mathfrak{t}$ is such a move.  After this move, ${\cal M}$ ``turns itself into $\cal N$'' in the same fashion as in the proof of the case of $\add$-Choose.  And, again,  ${\cal M}$ is guaranteed to be a $\tau$ time and $\tau$ space logical solution of the conclusion, where $\tau$ is an explicit polynomial function that can be easily constructed from $\xi$.  

The   case of $\mathfrak{t}$ being a variable that is among the free variables of the conclusion can be handled in a similar way, with the only difference that now, when making the move that brings the conclusion down to the premise, $\cal M$ chooses, for $x$, the constant chosen by Environment for $\mathfrak{t}$.

The remaining case is that of $\mathfrak{t}$ being a variable that is not among the free variables of the conclusion. In this case, when making the move that brings the conclusion down to the premise, $\cal M$ (arbitrarily) chooses the constant $0$ for $x$; in addition, when ``turning itself into $\cal N$'', $\cal M$ follows the scenario where $\cal N$'s adversary chose the same constant $0$ for $\mathfrak{t}$. 

In any of the above cases, it is clear that $\cal M$ is a logical solution of the conclusion, and that $\cal M$ inherits the well-behaved property of $\cal N$.

\subsection{$\ada$-Choose}\label{sep10bb2} 
This case is similar to the previous one.

\subsection{Replicate}\label{sep10bb3} 
\[\frac{\vec{G},E,\vec{K},E\intimpl F}{\vec{G},E,\vec{K}\intimpl F}\]
Remembering that we agreed to see no distinction between sequents and the games they represent, and disabbreviating $\vec{G}$,$\vec{K}$,$\intimpl$,   the premise and the conclusion of this rule can be rewritten as the following two games, respectively: 
\begin{equation}\label{fff11a1}
\st G_1\mlc\ldots\mlc\st G_m\ \mlc \ \st E\ \mlc \ \st K_{1}\mlc\ldots\mlc \st K_n \ \mlc\ \st E\ \ \mli\ \  F\end{equation}
\begin{equation}\label{fff11a2}
\st G_1\mlc\ldots\mlc\st G_m\ \mlc\ \st E\ \mlc\ \st K_{1}\mlc\ldots\mlc \st K_n\ \ \mli \ \ F\end{equation}

Assume $\cal N$ is a well-behaved  logical solution of the premise, and $\xi$ is an explicit polynomial bound for its time and space complexities. We let a  solution $\cal M$ of the conclusion be a machine that works as follows.

After Environment chooses some constants for all free variables of (\ref{fff11a2}), $\cal M$ makes a replicative move in the $\st E$ component of the latter, bringing the game down to 
\begin{equation}\label{fff11b}
\st G_1\mlc\ldots\mlc\st G_m\ \mlc\ \st (E\circ E)\ \mlc\ \st K_{1}\mlc\ldots\mlc \st K_n\ \ \mli \ \ F \end{equation}
(more precisely, it will be not (\ref{fff11b}) but $e[(\ref{fff11b})]$, where $e$ is a valuation that agrees with Environment's choices for the free variables of the sequent.   As we did earlier, however, notationally we ignore this difference). 
  
Now we need to observe that (\ref{fff11b}) is ``essentially the same as'' (\ref{fff11a1}), so that $\cal M$ can continue playing ``essentially as'' $\cal N$ would play in the scenario where the adversary of $\cal N$ chose the same constants for free variables as the adversary of $\cal M$ just did. All that $\cal M$ needs to do to account for the minor technical differences between (\ref{fff11b}) and (\ref{fff11a1}) is to make a very simple ``reinterpretation'' of moves. Namely:
\begin{itemize}
\item Any move made within any of the $\st G_i$ or $\st K_i$ components of (\ref{fff11b}) $\cal M$ sees exactly as $\cal N$ would see the same move in the same component of (\ref{fff11a1}), and vice versa.
\item  Any replicative or focused nonreplicative move made within the left (resp. right) leaf of the $\st (E\circ E)$ component of (\ref{fff11b}) $\cal M$ sees as $\cal N$ would see the same  move as if it was made in the (single) leaf of the first (resp. second) $\st E$ component of (\ref{fff11a1}), and vice versa.
\item  Any unfocused nonreplicative move   made by Environment in the $\st (E\circ E)$ component of (\ref{fff11b}) $\cal M$ sees as $\cal N$ would see the same move made twice (but on the same clock cycle) by Environment: once in the leaf of the first $\st E$ component of 
(\ref{fff11a1}), and once in the leaf of the second $\st E$ component of 
(\ref{fff11a1}).
\end{itemize}
With a little thought, it can be seen that $\cal M$ wins because so does $\cal N$. Further, neither making the initial replicative move nor ``reinterpreting'' moves in the above fashion is expensive in terms of time or space, so that, based on  $\xi$, it would be no problem to specify an explicit polynomial bound $\tau$ 
for the time and space complexities of $\cal M$. And, as always, $\cal M$ obviously inherits the well-behavedness of $\cal N$.

\subsection{Wait}\label{iwait} 
\[\frac{Y_1,\ldots,Y_n}{X}\]
(where $n\geq 0$ and the $\adc$-, $\add$-, $\ada$-, $\ade$- and Stability conditions are satisfied). 

We shall rely on the following lemma. It   can be verified by a straightforward induction on the complexity of $Z$, which we omit. Remember that $\seq{}$ stands for the empty run.

\begin{lemma}\label{new1}
For any sequent $Z$, valuation $e$ and interpretation $^*$, $\win{Z^*}{e}\emptyrun= \win{\elzi{Z}^*}{e}\emptyrun$.
\end{lemma}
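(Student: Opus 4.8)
The plan is to prove a slightly more general statement for arbitrary \emph{formulas} first, and then lift it to sequents in one extra step. Concretely, I would first establish, by induction on the structure of a formula $F$, that for every valuation $e$ and interpretation $^*$ we have $\win{F^*}{e}\emptyrun=\win{\elzi{F}^*}{e}\emptyrun$. It is important to phrase the induction hypothesis uniformly over \emph{all} valuations $e$ rather than a single fixed one, since the blind-quantifier cases below quantify over the $x$-variants of $e$. I would work with formulas in the official form in which $\gneg$ occurs only on atoms, so that $\mli$ and compound negations enter only as abbreviations, to be unfolded when needed at the sequent level.

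For the base cases --- $F$ an atom (including $\tau_1\equals\tau_2$ and negated atoms) or $F\in\{\twg,\tlg\}$ --- the claim is immediate, since such an $F$ contains no choice operators and hence $\elz{F}=F$. The genuine inductive cases split into two groups. In the \textbf{collapsing} group, where the principal operator of $F$ is a choice operator, I would read the empty-run winner straight off the definitions: the empty run admits no initial choice labmove of the form $\seq{\oo i,\Delta}$ (resp. $\seq{\pp i,\Delta}$), so clause (ii) of part 2 of Definition \ref{op}, together with its infinite analogue (clause 5 of Definition \ref{op}, applied to $\ada,\ade$ through Definition \ref{bq}), yields $\win{(A_0\adc A_1)^*}{e}\emptyrun=\pp=\win{\twg}{}\emptyrun$ and $\win{(\ada xA)^*}{e}\emptyrun=\pp$, and dually $\win{(A_0\add A_1)^*}{e}\emptyrun=\oo=\win{\tlg}{}\emptyrun$ and $\win{(\ade xA)^*}{e}\emptyrun=\oo$. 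Since $\elz{F}$ is exactly $\twg$ or $\tlg$ in these cases, the two winners coincide and no induction hypothesis is even required.

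In the \textbf{commuting} group --- principal operator $\mlc,\mld,\cla$ or $\cle$ --- elementarization commutes with the operator ($\elz{A\mlc B}=\elz{A}\mlc\elz{B}$, $\elz{\cla xA}=\cla x\elz{A}$, etc.), so here I would invoke the induction hypothesis. The key observation is that the empty run restricted to any $\mlc$- or $\mld$-component, or viewed under any variant $g\equiv_x e$, is still the empty run. Hence clause 3(ii) of Definition \ref{op} reduces $\win{(A\mlc B)^*}{e}\emptyrun$ to the conjunction of $\win{A^*}{e}\emptyrun$ and $\win{B^*}{e}\emptyrun$ (and dually for $\mld$), while clause 1(ii) of Definition \ref{op5} reduces $\win{(\cla xA)^*}{e}\emptyrun$ to the condition ``$\win{A^*}{g}\emptyrun=\pp$ for every $g\equiv_x e$''. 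Applying the induction hypothesis component-wise (and, for the quantifiers, at every such $g$) and then reassembling via the same clauses for $\elz{F}^*$ delivers the equality.

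Finally, to obtain the statement for a sequent $Z=G_1,\ldots,G_n\intimpl F$, I would unfold $Z^*=\st G_1^*\mlc\ldots\mlc\st G_n^*\mli F^*$ and use two empty-run facts. First, $\st$ does not alter the empty-run winner: when $\Gamma=\emptyrun$, every $\Gamma^{\preceq v}$ is again empty, so clause (ii) of Definition \ref{op1} gives $\win{\st G_i^*}{e}\emptyrun=\win{G_i^*}{e}\emptyrun$. Second, at the empty run $\mlc$ behaves as classical conjunction and $\mli$ as classical implication (directly from the negation and parallel clauses of Definition \ref{op}), so $\win{Z^*}{e}\emptyrun$ is a fixed Boolean combination of the values $\win{G_i^*}{e}\emptyrun$ and $\win{F^*}{e}\emptyrun$. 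The formula-level result already identifies each of these with the corresponding value for $\elz{G_i}^*$ and $\elz{F}^*$, and since $\elz{Z}=\elz{G_1}\mlc\ldots\mlc\elz{G_n}\mli\elz{F}$ is governed by the very same Boolean combination, we conclude $\win{Z^*}{e}\emptyrun=\win{\elzi{Z}^*}{e}\emptyrun$. I do not anticipate a real obstacle: this is the promised ``straightforward induction''. The only points needing care are formulating the induction hypothesis over all valuations so the $\cla,\cle$ cases go through, and verifying case by case that the empty run triggers the ``default'' clause of each operation (no move having yet been made); once these are in place, the remaining bookkeeping is routine.
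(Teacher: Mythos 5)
Your proposal is correct and is exactly the ``straightforward induction on the complexity of $Z$'' that the paper invokes and omits: the choice-operator cases collapse to $\twg$/$\tlg$ at the empty run, the parallel and blind cases commute with elementarization, and the sequent level follows since $\st$ and the outer $\mlc$/$\mli$ reduce to classical Boolean combinations on the empty run. The care you take in quantifying the induction hypothesis over all valuations (for the $\cla,\cle$ cases) and in treating $\gneg$ as confined to atoms is precisely what makes the routine argument go through.
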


Assume ${\cal N}_1,\ldots,{\cal N}_n$ are well-behaved  logical solutions of $Y_1,\ldots,Y_n$, respectively, and $\xi_1,\ldots,\xi_n$ are polynomial bounds for the time and space complexities of the corresponding machines. We  
 let ${\cal M}$, a logical solution of $X$,  be a machine that works as follows. 

 At the beginning, as always, $\cal M$ waits till Environment chooses some constants for all free variables of the conclusion. Let $e$ be a (the) valuation that agrees with the choices just made by Environment (in the previous cases, we have suppressed the $e$ parameter, but now we prefer to deal with it explicitly). So, the conclusion is now brought down to $e[X]$. After this event,  ${\cal M}$ continues waiting until Environment makes one more move.  If such a move is never made, then the run of (the $\ada$-closure of) $X$ generated in the play can be simply seen as the empty run of $e[X]$.  Due  to the Stability condition, $\elz{X}$ is classically valid, meaning that $\win{\elzi{X}}{e}\emptyrun=\pp$.  But then, in view of Lemma \ref{new1}, $\win{X}{e}\emptyrun=\pp$. This makes $\cal M$ the winner.  

 Suppose now Environment makes a move $\alpha$.    With a little thought, one can see that any (legal) move $\alpha$ by Environment brings the game $e[X]$ down to $g[Y_i]$ for a certain valuation $g$   and one of the premises $Y_i$ of the rule. For example, if $X$ is $ \intimpl (E\adc F)\mld  \ada x G(x)$, then a legal move $\alpha$ by Environment should be either $1.0.0$ or $1.0.1$ or $1.1.c$ for some constant $c$. In the case $\alpha=1.0.0$, the above-mentioned premise $Y_i$ will be $\intimpl E \mld  \ada x G(x)$, and $g$ will be the same as $e$.  In the case $\alpha=1.0.1$,   $Y_i$ will be $\intimpl F \mld  \ada x G(x)$, and $g$, again, will be the same as $e$. Finally, in the case $\alpha=1.1.c$,   $Y_i$ will be $\intimpl (E\adc F) \mld G(y)$  for a variable $y$  not occurring in $X$, and   $g$ will be the valuation that sends $y$ to the object named by $c$ and agrees with $e$ on all other variables, so that $g[\intimpl (E\adc F) \mld G(y)]$ is $e[\intimpl (E\adc F) \mld G(c)]$, with the latter being the game to which $e[X]$ is brought down by the labmove $\oo 1.1.c$. 

After the above event, ${\cal M}$ does the usual trick of turning itself into --- and continuing playing as --- ${\cal N}_i$,  with the only difference that,  if $g\not=e$, the behavior of ${\cal N}_i$ should be followed for the scenario where the adversary of the latter, at the very beginning of the play, chose constants for the free variables of $Y_i$ in accordance with $g$ rather than $e$.  

As in the preceding proofs, it can be seen that $\cal M$  is a well-behaved logical  solution of $X$. Keeping in mind that $\cal M$ is not billed for the time during which it waits for Environment to move, it is clear that, asymptotically, its time complexity does not exceed (the generously taken) 
$\xi_1\plus \ldots\plus \xi_n$. Unlike time, however, $\cal M$ {\em will} be billed for any extra space consumed while waiting. But, fortunately, it does not use any space during that period, as it simply keeps reading the leftmost blank cell of its run tape to see if Environment has made a move. So,   
 an explicit polynomial bound $\tau$ for both the time and  space complexities of $\cal M$ can be easily obtained from  $\xi_1,\ldots,\xi_n$.   

\subsection{Taking care of the ``furthermore'' clause of Theorem \ref{feb9a}.}\label{ss87}
Thus, we have shown how to construct, from a proof of $X$, an HPM $\cal M$ and an explicit  polynomial function $\tau$ such that $\cal M$ solves $X$ in time and space $\tau$. Obviously our construction is effective. It remains to see that it also is --- or, at least, can be made --- efficient. 
Of course, at every step of our construction (for each sequent  of the proof, that is), in Sections \ref{sep10aa}-\ref{iwait}, the solution $\cal M$ of the step and its time/space complexity bound $\tau$ is obtained efficiently from previously constructed $\cal M$s and $\tau$s. This, however, does not guarantee that the entire construction will be efficient as well. For instance, if the proof has $n$ steps and the size of each $\cal M$ that we construct for each step is twice the size of the previously constructed HPMs, then the size of the eventual HPM will exceed $2^n$ and thus the construction will not be efficient, even if each of the $n$ steps of it is so. 

A trick that we can use to avoid an exponential growth of the sizes of the machines that we construct and thus achieve the efficiency of the entire construction is to deal with GHPMs instead of HPMs. Namely, assume the proof of $X$ is the sequence $X_1,\ldots,X_n$ of sequents, with $X=X_n$. Let ${\cal M}_1,\ldots,{\cal M}_n$ be the HPMs constructed as we constructed ${\cal M}$s earlier in Sections \ref{sep10aa}-\ref{iwait} for the corresponding  steps/sequents. Remember that each such ${\cal M}_i$ was defined in terms of  ${\cal M}_{j_1},\ldots,{\cal M}_{j_k}$for some $j_1,\ldots,j_k< i$. For simplicity and uniformity, we may just as well say that each ${\cal M}_i$ was defined in terms of all  ${\cal M}_{1},\ldots,{\cal M}_{n}$, with those ${\cal M}_j$s that were not among ${\cal M}_{j_1},\ldots,{\cal M}_{j_k}$ simply ignored in the description of the work of ${\cal M}_i$. 
Now, for each such ${\cal M}_i$, let ${\cal M}'_i$ be the $n$-ary GHPM whose description is obtained from that of ${\cal M}_i$  by replacing  each reference to (any previously constructed) ${\cal M}_j$  by ``${\cal M}'_j(\code{{\cal M}'_1},\ldots,\code{{\cal M}'_n})$ where, for each $e\in\{1,\ldots,n\}$,  ${\cal M}'_e$ is the machine encoded by the $e$th input''.\footnote{For simplicity, here we assume that every number is a code of some $n$-ary GHPM; alternatively, ${\cal M}'_i$ can be defined so that it does nothing if any of its relevant inputs is not the code of some $n$-ary GHPM.} As it is easy to see by induction on $i$, ${\cal M}_i$ and ${\cal M}'_i(\code{{\cal M}'_1},\ldots,\code{{\cal M}'_n})$ are essentially the same, in the sense that our earlier analysis of the play and time/space complexity of the former applies to the latter just as well. So, ${\cal M}'_n(\code{{\cal M}'_1},\ldots,\code{{\cal M}'_n})$ wins $X_n$, i.e. $X$. At the same time, note that the size of each GHPM ${\cal M}'_i$ is independent of the sizes of the other (previously constructed) GHPMs. Based on this fact, with some analysis, one can see that then the HPM ${\cal M}'_n(\code{{\cal M}'_1},\ldots,\code{{\cal M}'_n})$  is indeed constructed efficiently. 

As for the explicit polynomial bounds $\tau_1,\ldots,\tau_n$ for the time and space complexities of the $n$ HPMs ${\cal M}'_1(\code{{\cal M}'_1},\ldots,\code{{\cal M}'_n})$, \ldots, ${\cal M}'_n(\code{{\cal M}'_1},\ldots,\code{{\cal M}'_n})$,  their sizes  can be easily seen to be polynomial in the size of the proof. That is because, for each $i\in\{1,\ldots,n\}$, the size of   $\tau_i$  only increases the sizes of the earlier constructed $\tau_j$s by adding (rather than multiplying by) a certain polynomial quantity.\footnote{The fact that we represent complexity bounds as graph-terms rather than tree-terms is relevant here. It would however be irrelevant if we had defined $\cltw$-proofs as trees rather than sequences of sequents. The reason why we have opted for linear rather than tree-like proofs is that, at the expense of recycling/reusing intermediate steps, linear proofs can be exponentially smaller than tree-like proofs.} Thus, the explicit bound $\tau_n$ for the time and space complexities of the eventual HPM    ${\cal M}'_n(\code{{\cal M}_1},\ldots,\code{{\cal M}_n})$ is indeed constructed efficiently. 

\section{The completeness of $\cltw$}\label{scompleteness}

\begin{theorem}\label{feb9b}
Every sequent with a logical solution  is provable in $\cltw$.
\end{theorem}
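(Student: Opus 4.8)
The plan is to prove the contrapositive: assuming $\cltw\not\vdash X$, show that $X$ has no logical solution. Since a logical solution must win $X^*$ under \emph{every} interpretation, it suffices, given an arbitrary HPM $\cal M$, to produce one interpretation $^*$ and one run generated by $\cal M$ that is $\oo$-won in $\ada X^*$. A decisive simplification is available here: because $\cal M$ reads only its run tape, and because (by the remark following the definition of $\ada$-closure) the $\legal{}{}$-component of $\ada X^*$ is independent of $^*$, the run that $\cal M$ generates against a fixed environment behaviour is a single run $\Gamma$ that does \emph{not} depend on $^*$. Thus the environment may run a purely syntactic counter-strategy (which, crucially, need not be effective — we are exhibiting a run, not a machine), and only afterwards do we pin down $^*$ so as to make this fixed $\Gamma$ an $\oo$-won run of $\ada X^*$.

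The counter-strategy is organized around the invariant that the sequent to which $X$ has currently evolved is $\cltw$-unprovable. I would isolate two lemmas. First, \emph{machine moves preserve unprovability}: every legal $\pp$-move — a choice in an $\add$- or $\ade$-subformula of the succedent, a choice in an $\adc$- or $\ada$-subformula of the antecedent, or a replication of an antecedental $\st$-component — is the bottom-up reading of an instance of $\add$-Choose, $\ade$-Choose, $\adc$-Choose, $\ada$-Choose or Replicate; since for each of these rules provability of the premise yields provability of the conclusion, unprovability of the pre-move sequent forces unprovability of the post-move sequent. Second, \emph{a stable unprovable sequent has an unprovable environment-successor}: if the current sequent $Y$ is unprovable yet stable, then it fails to be a conclusion of Wait only because some premise demanded by the $\adc$-, $\add$-, $\ada$- or $\ade$-condition is itself unprovable, and each such premise is exactly the result of a legal $\oo$-move (the environment resolving a surface $\adc$/$\ada$ in the succedent or a surface $\add$/$\ade$ in the antecedent); for the quantifier conditions the environment must choose a constant \emph{fresh} for the play, matching the fresh variable $y$ in those conditions, so that unprovability transfers. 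The environment responds to each machine move using the first lemma and, whenever the current sequent is stable, makes its own move using the second, scheduling these fairly so that every pending surface choice is eventually addressed.

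It then remains to convert a persistently-unprovable play into an $\oo$-win. Along a branch on which all governing surface choice operators have been eliminated, the reached sequent is elementary; being unprovable, by Fact~\ref{fce} it is not classically valid, i.e.\ its elementarization is falsified by some first-order structure. The plan is to read a single structure off the classical counter-models of the elementary sequents encountered, over a universe whose constants supply the fresh names used for the $\ada$/$\ade$ moves, and to let $^*$ interpret the function and predicate letters accordingly. Then at each such elementary position $Y$ one has $\win{\elz{Y}^*}{e}\emptyrun=\oo$, whence by Lemma~\ref{new1} also $\win{Y^*}{e}\emptyrun=\oo$, so that $\Gamma$ is $\oo$-won and $\cal M$ is defeated.

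The hard part will be the simultaneous handling of the limit and of the branching structure of $\st$. When $\cal M$ replicates (or otherwise moves) without bound, no single finite elementary sequent is reached, so the falsifying structure must be assembled as a limit object, in the style of a term/Henkin model, consistent with the counter-models of \emph{all} unstable sequents met along the way; here the fairness of the environment's scheduling is essential, guaranteeing that on every outcome-relevant branch the $\add$/$\ade$/$\adc$/$\ada$ operators are eventually removed. Concurrently one must verify the branching-recurrence winning condition of Definition~\ref{op1}: since the antecedent carries $\st$-components, defeating $\cal M$ requires the environment to win \emph{every} copy of each antecedental formula while falsifying the succedent, and checking that the unprovability invariant yields precisely this — uniformly across all infinite address strings $v$ — is the technical core of the argument.
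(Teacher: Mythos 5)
Your overall architecture coincides with the paper's: a purely syntactic, not-necessarily-effective counterstrategy maintaining the invariant of $\cltw$-unprovability, with machine moves preserving unprovability because they are bottom-up readings of the Choose rules and Replicate, and with every stable unprovable sequent admitting an unprovable Wait-premise reachable by a legal environment move involving a fresh constant. The genuine gap sits exactly where you defer to ``the technical core'': the case of infinitely many replications. Your plan --- assemble a Henkin-style limit structure ``consistent with the counter-models of all unstable sequents met along the way'' --- does not work as stated. The individual unstable sequents each admit some falsifying classical structure, but these structures need not cohere, and in any case the object that must be falsified is not any single sequent of the play but the limit data: the set $\{\elz{E}\ |\ E\in{\cal A}\}\cup\{\elz{\gneg F}\}$, where $F$ is the formula to which the succedent eventually stabilizes and $\cal A$ is the (possibly infinite) set of limit forms of all copies of the antecedent formulas. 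Satisfiability of this set is what simultaneously makes every copy of every antecedental $\st$-component $\pp$-won (as $\oo$ needs in order to win the $\mli$-combination) and the succedent $\oo$-won, and it is \emph{not} a consequence of the instability of the sequents encountered. The paper obtains it by contradiction via the compactness theorem: if some \emph{finite} subset ${\cal A}'\cup\{\elz{\gneg F}\}$ were unsatisfiable, then from the finite stage at which all formulas of ${\cal A}'$ have appeared in the antecedent (and the succedent has reached $F$) onward the current sequent would be permanently stable; the counterstrategy's stable-case moves would then drive it, in finitely many steps, to a stable sequent with no surface $\adc,\ada$ in the succedent and no surface $\add,\ade$ in the antecedent, which follows from the empty set of premises by Wait --- contradicting the unprovability invariant. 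This interplay between compactness and ``permanent stability forces provability'' is the missing idea.

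A secondary inaccuracy: you neither need, nor can in general arrange, that the play terminate in an \emph{elementary} sequent, so Fact \ref{fce} is not the right tool and no fairness condition on the environment's scheduling is required. When only finitely many replications occur, the play simply stabilizes at an \emph{unstable} sequent $Y$, possibly still carrying surface choice operators, on which no further moves are made; instability means $\elz{Y}$ is falsified by some interpretation, and Lemma \ref{new1} then makes the empty continuation of $e[Y]$ a $\oo$-won run under that interpretation. Unresolved choice operators are harmless because elementarization already accounts for them.
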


\begin{proof} Assume $X$ is a sequent not provable in $\cltw$. Our goal is to show that $X$ has no logical solution (let alone a polynomial time, polynomial space and/or well-behaved logical solution).

Here we describe a {\em counterstrategy}, i.e., Environment's strategy, against which any particular HPM (in the usual role of $\pp$) loses $X^*$ for an appropriately selected interpretation $^*$. In precise terms, as a mathematical object, our counterstrategy --- let us call it $\cal C$ --- is a (not necessarily effective) function that prescribes, for each possible content of the run tape that may arise during the process of playing the game, a (possibly empty) sequence of moves that Environment should make during the corresponding clock cycle. In what follows, whenever we say that $\cal C$ wins or loses, we mean that so does $\bot$ when it acts according to such prescriptions. $\cal C$ and $\bot$ will be used interchangeably, that is.

By a {\em variables-to-constants mapping} --- or {\bf vc-mapping} for short --- for a sequent $Y$ we shall mean a function whose domain is some finite set of variables that contains all (but not necessarily only) the free variables of $Y$ and whose range is some set of constants not occurring in $Y$,  such that to any two (graphically) different variables are assigned (graphically) different constants. When $e$ is a vc-mapping for $Y$, by $e[Y]$ we shall mean the result of replacing in $Y$ each free occurrence of every variable with the constant assigned to that variable by $e$.

At the beginning of the play, $\cal C$ chooses different constants for (all) different free variables of $X$, also making sure that none of these constants are among the ones that occur in $X$. Let $g$ be the corresponding vc-mapping for $X$. This initial series of moves brings $X$ (under whatever interpretation) 
down to the constant game $g[X]$ (under the same interpretation).   
 
The way $\cal C$ works after that can be defined recursively.  At any time, $\cal C$ deals with a pair $(Y,e)$, where  $Y$ is a $\cltw$-unprovable sequent and $e$  is a vc-mapping for 
$Y$, such that $e[Y]$ is the game to which the initial $\ada X$ has been brought down ``by now''.   The  initial value of $Y$ is $X$, and the initial value of $e$ is the above vc-mapping $g$.  How $\cal C$ acts on $(Y,e)$ depends on whether $Y$ is stable or not. 

{\em CASE 1}: $Y$ is stable.  Then there should be a $\cltw$-unprovable sequent $Z$ satisfying one of the following conditions, for otherwise $Y$ would be derivable 
by Wait. $\cal C$  selects one such $Z$ (say, lexicographically the smallest one), and acts according to the corresponding prescription as given below. 

{\em Subcase 1.1:} $Y$ has the form $\vec{E}\intimpl F[G_0\adc G_1]$, and $Z$ is $\vec{E}\intimpl F[G_i]$ ($i=0$ or $i=1$). In this case, $\cal C$ makes the move that brings $Y$ down to $Z$ (more precisely, $e[Y]$ down to $e[Z]$), and calls itself on $(Z,e)$. 

{\em Subcase 1.2:} $Y$ has the form $\vec{E},F[G_0\adc G_1],\vec{K}\intimpl H$, and $Z$ is $\vec{E},F[G_i],\vec{K}\intimpl H$. This subcase is similar to the previous one. 
 
{\em Subcase 1.3:} $Y$ has the form $\vec{E}\intimpl F[\ada xG(x)]$, and $Z$ is $\vec{E}\intimpl F[G(y)]$, where $y$ is a variable not occurring in $Y$. In this case, $\cal C$ makes a move that brings $Y$ down to $\vec{E}\intimpl F[G(c)]$ for some (say, the smallest) constant $c$ such that $c$ is different from any constant occurring in $e[Y]$.  After this move, $\cal C$  calls itself on $(Z,e')$, where $e'$ is the vc-mapping for $Z$ that sends $y$ to $c$ and agrees with $e$ on all other variables. 

{\em Subcase 1.4:} $Y$ has the form $\vec{E},F[\ada xG(x)],\vec{K}\intimpl H$, and $Z$ is $\vec{E},F[G(y)],\vec{K}\intimpl H$, where $y$ is a variable not occurring in $Y$. This subcase is similar to the previous one.

$\cal C$ repeats the above until (the continuously updated) $Y$ becomes unstable. This results is some finite series of moves made by $\cal C$. We assume that all these moves are made during a single clock cycle (remember that there are no restrictions in the HPM model on how many moves Environment can make during a single cycle).   

{\em CASE 2}:  $Y$ is unstable.  ${\cal C}$ does not make any moves, but rather waits until its adversary makes a move. 

{\em Subcase 2.1}: The adversary never makes a move. Then the run of $e[Y]$ that is generated is empty.  As $Y$ is unstable, $\elz{Y}$ and hence $\elz{e[Y]}$ is not classically valid. That is, $\elz{e[Y]}$ is false in some classical model. But classical models are nothing but our interpretations restricted to elementary formulas.   So, $\elz{e[Y]}$ is false under some interpretation $^*$. This, in view of Lemma \ref{new1}, implies that $\win{(e[Y])^*}{}\seq{}=\oo$ and hence $\cal C$ is the winner.

{\em Subcase 2.2}: The adversary makes a move $\alpha$. We may assume that such a move is legal, or else $\cal C$ immediately wins.  There are two further subcases to  consider here:

{\em Subsubcase 2.2.1}: $\alpha$ is a move in the succedent, or a  nonreplicative move in one of the components of the antecedent, of $Y$. With a little thought, it can be seen that then 
$\alpha$ brings $e[Y]$ down to $e'[Z]$, where $Z$ is a sequent from which $Y$ follows by one of the four Choose rules, and $e'$ is a certain vc-mapping for $Z$. In this case, $\cal C$ calls itself on $(Z,e')$.  
 
{\em Subsubcase 2.2.2}: $\alpha$ is a replicative move in one of the components of the antecedent of $Y$. Namely, assume $Y$   (after disabbreviating $\intimpl$) is 
the game (\ref{fff11a2}) of Subsection \ref{sep10bb3}, and the replicative move is made in its $\st E$ component. This brings $e[Y]$ down to $e[(\ref{fff11b})]$. The latter, however, is ``essentially the same as'' $e[Z]$, where $Z$ abbreviates the game (\ref{fff11a1}). So, $\cal C$ can pretend that $e[Y]$ has been brought down to $e[Z]$, and call itself on $(Z,e)$. The exact meaning of ``pretend'' here is that, after calling itself on $(Z,e)$, $\cal C$ modifies its behavior --- by ``reinterpreting'' moves --- in the same style as machine $\cal M$ modified $\cal N$'s behavior in Subsection \ref{sep10bb3}.

This completes our description of the work of $\cal C $.
 
Assume a situation corresponding to Subsubcase 2.2.2 occurs only finitely many times. Note that all other cases, except Subcase 2.1, strictly  decrease the complexity of $Y$. So, the play finally stabilizes in a situation corresponding to Subcase 2.1 and, as was seen when discussing that subcase, $\cal C$ wins.  

Now, assume a situation corresponding to Subsubcase 2.2.2 occurs infinitely many times, that is, $\cal C$'s adversary makes infinitely many replications in the antecedent. And, for a contradiction, assume that
\begin{equation}\label{ffeb12a}
\mbox{\em $\cal C$ loses the play of $\ada X^*$ on every interpretation $^*$.}
\end{equation}

Let $F$ be the (constant/closed) game/formula to which the succedent of the original $g[X]$ is eventually brought down. Similarly, let $\cal A$ be the set of all (closed) formulas to which various copies of various formulas of the antecedent of $g[X]$ are eventually brought down.  With a little thought and with Lemma \ref{new1} in mind, it can be seen that (\ref{ffeb12a}) implies the following:

\begin{equation}\label{ffeb12b}
\mbox{\em The set $\{\elz{E}\ |\ E\in{\cal A}\}\cup\{\elz{\gneg F}\}$ is unsatisfiable (in the classical sense).}
\end{equation}

By the  compactness theorem for classical logic, (\ref{ffeb12b}) implies that, for some {\em finite} subset ${\cal A}'$ of $\cal A$, we have:

\begin{equation}\label{ffeb12c}
\mbox{\em The set $\{\elz{E}\ |\ E\in{\cal A}'\}\cup\{\elz{\gneg F}\}$ is unsatisfiable (in the classical sense).}
\end{equation}

Consider a step $t$ in the work of $\cal C$ such that, beginning from $t$ and at every subsequent step, the antecedent of (the then current) $e[Y]$ contains all formulas of ${\cal A}'$. It follows easily from (\ref{ffeb12c}) that, beginning from $t$, (the continuously updated) $Y$ remains stable. This means that $\cal C$ deals only with CASE 1. But, after making a certain finite number of moves as prescribed by CASE 1, $Y$ is brought down to a stable sequent that contains no surface occurrences of $\adc,\ada$ in the succedent and no surface occurrences of $\add,\ade$ in the antecedent. Every such sequent follows from the empty set of premises by Wait, which is a contradiction because, as we know, the sequent $Y$ at any step of the work of $\cal C$ remains $\cltw$-unprovable. \end{proof}
    
\begin{remark}\label{remark}
While CoL takes no interest in nonalgorithmic ``solutions'' of problems, it would still be a pity to let one fact go officially unnoticed. Virtually nothing in our proof of Theorem \ref{feb9b}  relies on the fact that an HPM whose non-existence is proven there follows an algorithmic strategy. So,
Theorem \ref{feb9b} can be strengthened by saying that, if $\cltw$ does not prove a sequent $X$, then $X$ does not even have a nonalgorithmic logical solution. Precisely defining the meaning of a ``nonalgorithmic'', or rather ``not-necessarily-algorithmic'' logical solution, is not hard. The most straightforward way to do so would be to simply take our present definition of a logical solution but generalize its underlying model of computation by allowing HPMs to have oracles  --- in the standard sense --- for whatever functions.        
\end{remark} 

As an aside, among the virtues of CoL is that it eliminates the need for many ad hoc inventions such as the just-mentioned oracles. Namely, observe that a problem $A$ is computable  by an HPM with an oracle for a function $f(x)$ if and only if the problem $\ada x\ade y\bigl(y\equals f(x)\bigr)\intimpl A$ is computable in the ordinary sense (i.e., computable by an ordinary HPM without any oracles). So, a CoL-literate person, regardless of his or her aspirations, would never really have to speak in terms of  oracles or nonalgorithmic strategies.   This explains why `{\em CoL takes no interest in nonalgorithmic ``solutions'' of problems}'.
 
\section{Logical consequence}\label{slc}

The following theorem is an immediate corollary of Theorems \ref{feb9a} and \ref{feb9b}.

\begin{theorem}\label{feb9c}
For any sequent $X$, the following conditions are equivalent: 
\begin{description}
\item[(i)] $\cltw\vdash X$.
\item[(ii)] $X$ has a logical solution.
\item[(iii)] $X$ has a well-behaved  logical solution which runs in polynomial time and space.
\end{description}
\end{theorem}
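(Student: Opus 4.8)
The plan is to establish the equivalence by proving the cyclic chain of implications (i)$\Rightarrow$(iii)$\Rightarrow$(ii)$\Rightarrow$(i), after which all three conditions become mutually equivalent. The real substance has already been carried out in the two main theorems of the preceding sections, so the only task here is to assemble those theorems into the cycle correctly; accordingly, I expect no genuine obstacle, and the proof will be short.

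First I would handle (i)$\Rightarrow$(iii), which is nothing other than the (non-``furthermore'') content of Theorem \ref{feb9a}: if $\cltw\vdash X$, then by that theorem $X$ has a well-behaved logical solution that runs in polynomial time and polynomial space, which is exactly what (iii) asserts. Next, (iii)$\Rightarrow$(ii) is an immediate weakening: a well-behaved logical solution running in polynomial time and space is, in particular, a logical solution, so one simply forgets the extra well-behavedness and complexity attributes to obtain condition (ii). Finally, (ii)$\Rightarrow$(i) is precisely Theorem \ref{feb9b}, which states that every sequent possessing a logical solution is provable in $\cltw$. Composing the three implications closes the cycle and yields the full equivalence.

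The hard part will be essentially nonexistent, since all the difficulty lives in Theorems \ref{feb9a} and \ref{feb9b}; the only point deserving a moment of care is to check that the trivial weakening step genuinely lands on condition (ii) as it is literally stated, so that the three implications chain into a single loop rather than leaving any one condition disconnected from the others. Once that is confirmed, the corollary follows immediately.
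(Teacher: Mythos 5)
Your proposal is correct and matches the paper exactly: the paper states Theorem \ref{feb9c} as an immediate corollary of Theorems \ref{feb9a} and \ref{feb9b}, which is precisely your cycle (i)$\Rightarrow$(iii) by Theorem \ref{feb9a}, (iii)$\Rightarrow$(ii) by trivial weakening, and (ii)$\Rightarrow$(i) by Theorem \ref{feb9b}.
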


\begin{definition}\label{dlc}
Let $E_1,\ldots,E_n$ ($n\geq 0$) and $F$ be any formulas. We say that $F$ is a {\bf logical consequence} of $E_1,\ldots,E_n$ iff the sequent $E_1,\ldots,E_n\intimpl F$, in the role of $X$, satisfies any of the (equivalent) conditions (i)-(iii) of Theorem \ref{feb9c}. 
\end{definition}

As noted in Section \ref{intr}, the following rule, which we (also) call {\bf Logical Consequence},   will be the only logical rule of inference in $\cltw$-based applied systems:
\[\mbox{\em From $E_1,\ldots,E_n$ conclude $F$ as long as $F$ is a logical consequence of $E_1,\ldots,E_n$}.\]   
 
Remember from the earlier essays on CoL that, philosophically speaking, computational {\em resources} are symmetric to computational problems: what is a problem for one player to solve is a resource that the other player can use. Namely, having a problem $A$ as a computational resource intuitively means having the ability  to successfully solve/win $A$. For instance, as a resource, $\ada x\ade y(y=x^2)$ means the ability to tell the square of any number.   

According to the following thesis, logical consequence lives up to its name. A justification for it, as well as an outline of its significance, was provided in Section \ref{intr}:

\begin{thesis}\label{thesis}
Assume  $E_1,\ldots,E_n,F$ are formulas such that  there is a $^*$-independent (whatever interpretation $^*$) intuitive description and justification of a winning strategy for $F^*$, which relies on the availability and ``recyclability'' --- in the strongest sense possible --- of $E_{1}^{*},\ldots,E_{n}^{*}$ as computational resources.   Then $F$ is a logical consequence of $E_1,\ldots,E_n$. 
\end{thesis}

\begin{example}\label{intex}
Imagine a $\cltw$-based applied formal theory, in which we have already proven two facts: $\cla x\bigl(x^3\equals(x\mult x)\mult x\bigr)$ (the meaning of ``cube'' in terms of multiplication) and $\ada x\ada y\ade z(z\equals x\mult y)$ (the computability of multiplication), and now we want to derive $\ada x\ade y(y\equals x^3)$ (the computability of ``cube''). This is how we can reason to justify $\ada x\ade y(y\equals x^3)$:  
\begin{quote}{\em Consider any $s$ 
(selected by Environment for $x$ in $\ada x\ade y(y\equals x^3)$). We need to find  $s^3$. Using the resource $\ada x\ada y\ade z(z\equals x\mult y)$, we first find the value $t$ of $s\mult s$, and then  the value $r$ of $t\mult s$. According to $\cla x(x^3\equals (x\mult x)\mult x)$, such an $r$ is the sought $s^3$.}
\end{quote}   

Thesis \ref{thesis} promises that the above intuitive argument will be translatable into a $\cltw$-proof of 
\[\cla x\bigl(x^3\equals(x\mult x)\mult x\bigr),\ \ada x\ada y\ade z(z\equals x\mult y)\ \intimpl \ \ada x\ade y(y\equals x^3) \]
(and hence the succedent will be derivable in the  theory by Logical Consequence as the formulas of the antecedent are already proven). Such a proof indeed exists --- see 
 Example \ref{ecube}.
\end{example}

Remember the concept of a $k$-ary explicit polynomial function $\tau$ defined in Section \ref{ssoundness}. Here we generalize it to the concept of a $(k,n)$-ary {\bf explicit polynomial functional} by allowing the term  $\tau$ to contain, on top of variables and $0,\successor,\plus,\mult$, additional $n$ ($n\geq 0$) unary function letters $f_1,\ldots,f_n$,  semantically treated as placeholders for unary arithmetical functions.  Replacing $f_1,\ldots,f_n$ by names $g_1,\ldots,g_n$ of some particular unary arithmetical functions turns $\tau$ into  the corresponding $k$-ary arithmetical function, which we shall denote by $\tau(g_1,\ldots,g_n)$.  For instance, the term of Figure 2 is a 
 $(1,2)$-ary explicit polynomial functional. Let us denote it by $\tau$. Then, if $g$ means ``square'' and $h$ means ``cube'',  $\tau(g,h)$ is  the unary arithmetical function  
 $(y^2\plus y^3)^3$. 

\begin{center} \begin{picture}(353,123)

\put(157,105){\circle{14}}
\put(155,104){$y$}

\put(144,87){\vector(1,1){11}}
\put(138,82){\circle{14}}
\put(134,80){$f_1$}

\put(170,87){\vector(-1,1){11}}
\put(176,82){\circle{14}}
\put(172,80){$f_2$}

\put(152,65){\vector(-1,1){11}}
\put(162,65){\vector(1,1){11}}

\put(157,60){\circle{14}}
\put(153,58){$\plus$}
\put(157,42){\vector(0,1){11}}
\put(157,34){\circle{14}}
\put(153,32){$f_2$}

\put(0,10){{\bf Figure 2:} An explicit polynomial functional expressing $f_2\bigl(f_1(y)\plus f_2(y)\bigr)$}
\end{picture}\end{center}

The following theorem is a central result of this section. The form in which its ``furthermore'' clause is stated may seem a little strange or arbitrary (for instance, why does it deal with GHPMs rather than HPMs? And why do those GHPMs take the codes of each other as inputs?). However, the author foresees that it is exactly the present form of the theorem that will be of use when developing $\cltw$-based applied theories in the future, namely, in showing that winning strategies can be not only effectively but also efficiently extracted from proofs in such theories. Remembering the technique that we employed in Section \ref{ss87} may provide some insights into the reasons for such an expectation. 

\begin{theorem}\label{feb9d}  
If a formula $F$ is a logical consequence of formulas $E_1,\ldots,E_n$ and $^*$ is an interpretation such that each $E_{i}^{*}$ ($1\leq i\leq n$) is computable, then  $F^*$ is computable.
Furthermore:
\begin{enumerate} 
\item There is an  efficient procedure that takes an arbitrary $\cltw$-proof of an arbitrary sequent {\em $E_1,\ldots,E_n\intimpl F$}
 and constructs a $n$-ary GHPM $\cal M$, together with a $(1,n)$-ary explicit polynomial functional $\tau$,    such that, for any interpretation $^*$, any 
$n$-ary  GHPMs ${\cal N}_1,\ldots,{\cal N}_n$   and any unary arithmetical functions $g_1,\ldots,g_n$,   if each ${\cal N}_i(\code{{\cal N}_1},\ldots,\code{{\cal N}_n})$ is a $g_{i}$ time solution of $E_{i}^{*}$, then ${\cal M}(\code{{\cal N}_1},\ldots,\code{{\cal N}_n})$ is a $\tau(g_1,\ldots,g_n)$ time solution of $F^*$. 
\item The same holds for ``space'' instead of ``time''.   
\end{enumerate}
\end{theorem}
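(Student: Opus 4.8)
The plan is to reduce everything to the soundness theorem together with the ``resource-symmetry'' reading of the sequent arrow. Since $F$ being a logical consequence of $E_1,\ldots,E_n$ means, by Theorem~\ref{feb9c} (equivalently, by Theorems~\ref{feb9a} and~\ref{feb9b}), that $\cltw\vdash E_1,\ldots,E_n\intimpl F$, I would first invoke the ``furthermore'' part of Theorem~\ref{feb9a} to extract, efficiently and from the given proof, a well-behaved logical solution $\cal M_0$ of the sequent together with an explicit polynomial bound $p$ for both its time and space complexities. Recall that $\cal M_0$ wins, under every interpretation, the game $\st E_1^*\mlc\ldots\mlc\st E_n^*\mli F^*$, i.e.\ $\gneg(\st E_1^*\mlc\ldots\mlc\st E_n^*)\mld F^*$. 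The whole construction then rests on the observation that, in this game, it is $\cal M_0$ (playing inside the negated antecedent) that occupies the $\oo$-role within each $\st E_i^*$, so that $\cal M_0$ is the party issuing the replicative moves in the antecedent --- and, being well-behaved, issues at most some fixed number $d$ of them, all of its other antecedent moves being focused.

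Second, I would describe the machine $\cal M$ that solves $F^*$ as a ``routing'' device running $\cal M_0$ against solvers of the antecedental resources. Moves that $\cal M_0$ makes in the succedent $F^*$ are copied verbatim into the real play of $F^*$, and Environment's moves in the real $F^*$ are fed back to $\cal M_0$ as succedent moves. For the antecedent, I would build, out of the given solution ${\cal N}_i$ of $E_i^*$, a solver of $\st E_i^*$ in the standard CoL fashion: keep one running instance of ${\cal N}_i$ per leaf of the $\st E_i^*$-tree, and whenever $\cal M_0$ issues a replicative move $\oo\col{w}$ (seen by the $\st E_i^*$-solver as an $\oo$-replication), fork the corresponding instance by duplicating its configuration so that both children continue from the shared history. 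Focusedness of $\cal M_0$'s non-replicative antecedent moves guarantees that each such move addresses a single leaf, so routing to the appropriate ${\cal N}_i$-instance is unambiguous; the ``reinterpretation'' of moves is exactly the one already used in Subsection~\ref{sep10bb3}. Since $\cal M_0$ makes at most $d$ replications, at most $n+d$ instances of the ${\cal N}_i$ are ever alive. The correctness argument is then the usual one: each ${\cal N}_i$-instance wins its copy of $E_i^*$, so each $\st E_i^*$ is won, the negated antecedent is therefore lost by its $\oo$-side; as $\cal M_0$ wins the overall disjunction, it must win $F^*$, whence $\cal M$ solves $F^*$ and the main (pre-``furthermore'') statement follows.

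Third comes the complexity bookkeeping, which I expect to be the main obstacle. The delicacy is that the background (largest adversary move) seen by $\cal M_0$ includes the moves produced by the ${\cal N}_i$-instances, whose sizes are governed by the $g_i$ applied to the backgrounds those instances see, which are in turn made of $\cal M_0$'s own moves --- a mutual dependency. The way around it is to note that, for the fixed sequent in question and a fixed replication bound $d$, the total number of labmoves in any generated run is bounded by a constant $K$: the succedent $F$ is never replicated and contributes only its finitely many surface choices, while each of the $\le n+d$ live copies of an $E_i$ contributes only the finitely many surface choices of $E_i$. Hence the chain ``$\cal M_0$ answers, then some ${\cal N}_i$ answers, then $\cal M_0$ answers$\ldots$'' has length at most $K$, and the size and timecost of every move are bounded by a $K$-fold alternating composition of $p$ with the $g_i$, applied to the genuine background $\ell$ coming from the real play of $F^*$. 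Such a $K$-fold composition is precisely what an explicit polynomial functional can express as a graph-term over $0,\successor,\plus,\mult$ and the placeholders $f_1,\ldots,f_n$ for $g_1,\ldots,g_n$; this yields the required $(1,n)$-ary $\tau$. Well-behavedness condition~3 is used to keep the constants $\cal M_0$ itself introduces bounded by $\ell$ and the fixed data of the sequent, so they contribute nothing beyond $p$. For space the accounting is the same: the total work-tape usage is the sum over the $\le n+d$ live instances of their individual $g_i$-space plus the polynomially bounded overhead of $\cal M_0$ and of the forking step, again a polynomial functional in the $g_i$.

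Finally, to obtain the ``furthermore'' clause in its stated GHPM form and the efficiency of the construction, I would repackage $\cal M$ exactly as in Section~\ref{ss87}: rather than hard-wiring the ${\cal N}_i$, let $\cal M$ be an $n$-ary GHPM that receives $\code{{\cal N}_1},\ldots,\code{{\cal N}_n}$ as inputs and, whenever it needs an antecedent response, simulates ${\cal N}_i(\code{{\cal N}_1},\ldots,\code{{\cal N}_n})$. Passing all $n$ codes to each ${\cal N}_i$ is what accommodates solutions that were themselves produced by mutual recursion (as in Section~\ref{ss87}), which is the reason for the seemingly self-referential shape of the statement. The same GHPM device prevents an exponential blow-up in the size of $\cal M$, so that $\cal M$ and $\tau$ are produced from the proof in polynomial time; the size of $\tau$ stays polynomial in the proof because each proof step only adds --- rather than multiplies in --- a fixed polynomial amount, exactly as argued at the end of Section~\ref{ss87}.
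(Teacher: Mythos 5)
Your proposal is correct and follows essentially the same route as the paper's proof: extract the well-behaved polynomial-bounded solution of the sequent via Theorem \ref{feb9a}, run it in copycat against (forked, upon the boundedly many replications) instances of the ${\cal N}_i$, and resolve the mutual background dependency by observing that the total number of labmoves in any run of the fixed sequent is bounded by a constant, so the complexity bound is a constant-depth composition of the given bounds expressible as an explicit polynomial functional. The GHPM repackaging for efficiency is likewise exactly the paper's device from Section \ref{ss87}.
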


\begin{proof}  Consider an arbitrary sequent $E_1,\ldots,E_n\intimpl F$ together with a $\cltw$-proof of it. By Theorem \ref{feb9a}, there a well-behaved logical solution  $\cal K$ of the sequent which runs in time and space $\xi$ for some explicit polynomial function $\xi$, and these $\cal K$ and $\xi$ --- fix them --- can be efficiently found. Consider an arbitrary interpretation $^*$ (which, as done before, we shall notationally suppress), 
arbitrary $n$-ary GHPMs ${\cal N}_1,\ldots,{\cal N}_n$ and assume that each ${\cal N}_i(\code{{\cal N}_1},\ldots,\code{{\cal N}_n})$ wins $E_i$ in time (resp. space) $g_i$ under that interpretation. Below we describe an $n$-ary GHPM $\cal M$ such that ${\cal M}(\code{{\cal N}_1},\ldots,\code{{\cal N}_n})$
 wins $F$ under the same interpretation. It is important to note that our {\em construction} of $\cal M$ does not depend on $^*$, ${\cal N}_1,\ldots,{\cal N}_n$,  $g_1,\ldots,g_n$ and hence on the assumptions that we have just made about them; only our {\em claim} that ${\cal M}(\code{{\cal N}_1},\ldots,\code{{\cal N}_n})$ wins $F$, and our further claims about its time and space complexities, do. 
  
To describe the above GHPM $\cal M$ means to describe the work of the HPM ${\cal M}(c_1,\ldots,c_n)$ for arbitrary numbers $c_1,\ldots,c_n$. Furthermore, we may assume that these numbers are the codes of (the earlier-mentioned, arbitrary) $n$-ary GHPMs  ${\cal N}_1,\ldots,{\cal N}_n$, because, if this is not the case (i.e., if some $c_i$ is not the code of some $n$-ary GHPM ${\cal N}_i$), how ${\cal M}(c_1,\ldots,c_n)$ works is irrelevant, so we may let $\cal M$ simply do nothing when its inputs do not have the expected forms. Thus, in what follows, we need to describe the work of the HPM ${\cal M}(\code{{\cal N}_1},\ldots,\code{{\cal N}_n})$.

As always, we let our machine  ${\cal M}(\code{{\cal N}_1},\ldots,\code{{\cal N}_n})$, at the beginning of the play, wait till Environment selects a constant for each free variable of $F$. Let us fix $e$ as the vc-mapping (see Section 
\ref{scompleteness}) whose domain is 
the set of all free variables of $E_1,\ldots,E_n\intimpl F$ such that $e$ sends every free variable of $F$ to the constant just chosen by Environment for it, and (arbitrarily) sends all other variables  to $0$. 

We describe the work of ${\cal M}(\code{{\cal N}_1},\ldots,\code{{\cal N}_n})$ afterwards at a high level. A more detailed description and analysis would be neither feasible (since it would be prohibitively long and technical) nor necessary. 

To understand the idea, let us first consider the simple case where $\cal K$ never makes any replicative moves in the antecedent of $E_1,\ldots,E_n\intimpl F$. 
  The main part of the work of ${\cal M}(\code{{\cal N}_1},\ldots,\code{{\cal N}_n})$  consists in continuously polling its run tape to see if Environment has made any new moves, combined with  simulating, in parallel, a  play of $E_1,\ldots,E_n\intimpl F$ by the machine $\cal K$ and --- for each $i\in\{1,\ldots,n\}$ --- a play of $E_i$ by the machine ${\cal N}_i(\code{{\cal N}_1},\ldots,\code{{\cal N}_n})$. 
 In this simulation, ${\cal M}(\code{{\cal N}_1},\ldots,\code{{\cal N}_n})$ ``imagines'' that, at the beginning of the play, for each free variable $x$ of the corresponding formula or sequent, 
the adversary of each machine has chosen the constant $e(x)$, where $e$ is the earlier fixed vc-mapping. After the above initial moves by the real and imaginary adversaries, each of the $n\plus 2$ games $G$ that we consider here will be brought down to $e[G]$ but, for readability and because $e$ is fixed, we shall usually omit $e$ and write simply $G$ instead of $e[G]$. 

Since we here assume that $\cal K$ never makes any replications in the antecedent of $E_1,\ldots,E_n\intimpl F$, playing this game essentially means simply playing 
\begin{equation}\label{feb18a}
E_1\mlc \ldots\mlc E_n\mli F.
\end{equation}

We may assume that, in the real play of $F$, Environment does not make illegal moves, for then ${\cal M}(\code{{\cal N}_1},$ $\ldots,\code{{\cal N}_n})$ immediately detects this and retires, being the winner.  We can also safely assume that the simulated machines do not make illegal moves of the corresponding games, or else our assumptions about their winning those games would be wrong.\footnote{Since we  need to construct ${\cal M}(\code{{\cal N}_1},\ldots,\code{{\cal N}_n})$ no matter whether those assumptions are true or not, we can let ${\cal M}(\code{{\cal N}_1},\ldots,\code{{\cal N}_n})$ simply retire as soon as it detects some illegal behavior.}  

If so, what ${\cal M}(\code{{\cal N}_1},\ldots,\code{{\cal N}_n})$ does in the above mixture of the real and  simulated plays is that it applies copycat between $n\plus 1$ pairs of (sub)games, real or imaginary. Namely, it mimics, in (the real play of) $F$,  $\cal K$'s moves made
in the consequent of  (the imaginary play of)  (\ref{feb18a}),  and vice versa: uses   Environment's moves made in the real play of $F$ as $\cal K$'s (imaginary) adversary's moves in the consequent of (\ref{feb18a}).  Further, for each   $i\in\{1,\ldots,n\}$,  ${\cal M}(\code{{\cal N}_1},\ldots,\code{{\cal N}_n})$ uses the moves made by ${\cal N}_i(\code{{\cal N}_1},\ldots,\code{{\cal N}_n})$ in $E_i$  as $\cal K$'s adversary's 
moves in the $E_i$ component of (\ref{feb18a}), and vice versa: uses the moves made by $\cal K$ in that component as $ {\cal N}_i(\code{{\cal N}_1},\ldots,\code{{\cal N}_n})$'s adversary's  moves in $E_i$.

Therefore, the final positions   hit by the $n\plus 2$ imaginary and real plays 
\[\mbox{$E_1,\ \ldots,\ E_n,\ \ E_1\mlc\ldots\mlc E_n\mli F$ \ and \ $F$}\]
will retain the above forms, i.e., will be
\[\mbox{$E'_1,\ \ldots,\ E'_n,\ \ E'_1\mlc\ldots\mlc E'_n\mli F'$ \ and \ $F'$}\]
for some $E'_1,\ldots,E'_n,F'$. Our assumption that the machines ${\cal N}_1 (\code{{\cal N}_1},\ldots,\code{{\cal N}_n}),\ldots,{\cal N}_n (\code{{\cal N}_1},\ldots,\code{{\cal N}_n})$ and ${\cal K}$ win the games $E_1, \ldots, E_n$ and $F_1\mlc\ldots\mlc F_n\mli F$ implies that each $G\in\{E'_1,\ \ldots,\ E'_n,\ E'_1\mlc\ldots\mlc E'_n\mli F'\}$ is $\pp$-won, in the sense that $\win{G}{}\seq{}=\pp$. It is then obvious that so should be  $F'$. Thus, the (real) play of $F$ brings it down to the $\pp$-won  $F'$, meaning that ${\cal M}(\code{{\cal N}_1},\ldots,\code{{\cal N}_n})$ wins $F$. Note that   ${\cal M}$ can be constructed efficiently, as promised in the theorem.
 
The next thing to clarify is why a bound for the running time (resp. space) of ${\cal M}(\code{{\cal N}_1},\ldots,\code{{\cal N}_n})$ can be expressed as $\tau(g_1,\ldots,g_n)$, where  $\tau$ is an $n$-ary explicit polynomial functional.  Let $\phi(x)$ be an abbreviation of
$g_1(x)\plus\ldots\plus g_n(x)\plus \xi(x)$. Thus, each of the $n\plus 1$ simulated machines that we consider runs in --- the very generously selected for the sake of simplicity --- time (resp. space) $\phi$. Note that, because $\phi$ contains $\xi$, we have $\phi(x)\geq x$.  Next, let us fix $\mathfrak{b}$ as twice the sum of the maximum lengths of legal runs of 
$E_1,\ldots,E_n$ and 
 $F$. 
We may assume that, in any case, $\mathfrak{b}\geq n\plus 1$. 

The simulation and copycat performed by ${\cal M}(\code{{\cal N}_1},\ldots,\code{{\cal N}_n})$ do impose some time and space overhead. But the latter is only polynomial and, in our subsequent analysis, can be safely ignored. That is, for the sake of simplicity, we are going to pretend that the space that ${\cal M}(\code{{\cal N}_1},\ldots,\code{{\cal N}_n})$ consumes does not exceed the sum of the spaces consumed by the simulated machines, that ${\cal M}(\code{{\cal N}_1},\ldots,\code{{\cal N}_n})$ copies moves in its copycat routine instantaneously, and that the times that ${\cal M}(\code{{\cal N}_1},\ldots,\code{{\cal N}_n})$ ever spends ``thinking'' about what move to make are the times during which it is waiting for simulated machines to make one or several moves. Furthermore, we will pretend that simulation happens in a truly parallel fashion, in the sense that ${\cal M}(\code{{\cal N}_1},\ldots,\code{{\cal N}_n})$ spends a single clock cycle on tracing a single computation step of  all machines simultaneously.  Also, a move $\alpha$ made in $F$, when ``copied'' in the consequent of (\ref{feb18a}),   will become $1.\alpha$; however, we shall ignore this minor difference and pretend that the size of $\alpha$ is the same as that of $1.\alpha$. Similarly for moves made in $E_1,\ldots,E_n$ and ``copied'' in the antecedent of (\ref{feb18a}).

We start with space complexity. Consider an arbitrary play (computation branch) of ${\cal M}(\code{{\cal N}_1},\ldots,\code{{\cal N}_n})$, and an arbitrary clock cycle $c$. Let $\ell$ be the background of $c$.  In the simulations of $\cal K$ and ${\cal N}_1(\code{{\cal N}_1},\ldots,\code{{\cal N}_n})$, \ldots, ${\cal N}_n (\code{{\cal N}_1},\ldots,\code{{\cal N}_n})$, every move made by the imaginary adversary of one of these machines is a copy of either a move made by Environment in the real play, or a move made by one of the machines ${\cal K}$, ${\cal N}_1(\code{{\cal N}_1},\ldots,\code{{\cal N}_n})$, \ldots, ${\cal N}_n(\code{{\cal N}_1},\ldots,\code{{\cal N}_n})$ during simulation.  
Let $\beta_1,\ldots,\beta_m$ be the moves by simulated machines that ${\cal M}(\code{{\cal N}_1},\ldots,\code{{\cal N}_n})$ detects by time $c$, arranged according to the times of their detections.\footnote{According to our simplified view, ${\cal M}(\code{{\cal N}_1},\ldots,\code{{\cal N}_n})$ detects such a move at the same time (clock cycle) as the time at which the move is made in the simulated play by the corresponding machine. So, in case two or more of the moves $\beta_1,\ldots,\beta_m$ are made simultaneously by the corresponding machines, there can be more than one arrangement of these moves ``according to their detection times''; which one is chosen, however, is irrelevant.} Let  \({\cal H}_1,\ldots,{\cal H}_m\in\{{\cal K},{\cal N}_1 (\code{{\cal N}_1},\ldots,\code{{\cal N}_n}),\ldots,{\cal N}_n (\code{{\cal N}_1},\ldots,\code{{\cal N}_n})\}\) be the machines that made these moves, respectively.      The size of $\beta_1$ cannot exceed $\phi(\ell)$. That is because, by the time when    ${\cal H}_1$ made the move $\beta_1$,  all (if any) moves by ${\cal H}_1$'s imaginary adversary were copies of moves made by Environment in the real play rather than moves made by some other simulated machines, and hence the background of $\beta_1$ in the simulated play of ${\cal H}_1$ did not exceed $\ell$. And this means that the $\phi$ space machine ${\cal H}_1$ would not have enough space to construct $\beta_1$ on its work tape before making this move if the size of the latter was greater than $\phi(\ell)$. For similar reasons, with $\phi(\ell)$ now acting in the role of $\ell$, the size of $\beta_2$ cannot exceed $\phi(\phi(\ell))$. Similarly, the size of $\beta_3$ cannot exceed $\phi(\phi(\phi(\ell)))$, etc. Also notice that at most $\mathfrak{b}$ moves can be made altogether in the mixture of the real and the imaginary plays, so that $m\leq \mathfrak{b}$. Thus, the size of no move made in this mixture by time $c$ exceeds $\phi^\mathfrak{b}(\ell)$ ($\phi^\mathfrak{b}$ means the $\mathfrak{b}$-fold composition of $\phi$ with itself). Therefore, as all simulated machines run in space $\phi$, the space consumed by each of the $n\plus 1$ simulations by time $c$ does not exceed  $\phi^{\mathfrak{b}+1}(\ell)$. Hence the total space that ${\cal M}(\code{{\cal N}_1},\ldots,\code{{\cal N}_n})$ has used by time $c$ does not exceed $(n\plus 1)\mult\phi^{\mathfrak{b}+1}(\ell)$. Let us be generous and, remembering that $\mathfrak{b}\geq n\plus 1$,  write the latter as $\mathfrak{b}\mult\phi^{\mathfrak{b}+1}(\ell)$ instead, to eliminate explicit dependence on $n$ (this is helpful for our later purposes).   Of course, $\mathfrak{b}\mult\phi^{\mathfrak{b}+1}(\ell)$, even after accounting for various overheads that we have suppressed in our simplified bookkeeping, can be written as   $\tau(g_1,\ldots,g_n)$, where $\tau$ is an $n$-ary explicit polynomial functional.  This is exactly the sought bound for the space complexity of ${\cal M}(\code{{\cal N}_1},\ldots,\code{{\cal N}_n})$. Obviously $\tau$ can be constructed efficiently.

Now we look at time complexity. Consider an arbitrary play of ${\cal M}(\code{{\cal N}_1},\ldots,\code{{\cal N}_n})$, and an arbitrary clock cycle $c$ on which ${\cal M}(\code{{\cal N}_1},\ldots,\code{{\cal N}_n})$ makes a move $\alpha$. Let $\ell$, $\beta_1,\ldots,\beta_m$, ${\cal H}_1,\ldots,{\cal H}_m$ be as in the previous case.   For reasons similar to those employed there, we find that, for each $i\in\{1,\ldots,m\}$, the size of $\beta_i$ does not exceed  $\Re$, where $\Re=\phi^\mathfrak{b}(\ell)$. 
Let $t_1,\ldots, t_k$ be the times at which the above moves $\beta_1,\ldots,\beta_m$ were detected by ${\cal M}(\code{{\cal N}_1},\ldots,\code{{\cal N}_n})$ (that is, made by the corresponding machines).   
Further, let $k$ be the timecost of $\alpha$, and let   $d=c\minus k$. Let $j$ be the smallest integer among $1,\ldots,m$ such that $t_j\geq d$. Since ${\cal H}_j$ runs in time $\phi$, it is clear that $t_j\minus d$ does not exceed $\phi(\Re)$. Nor does $t_{i\plus 1}\minus t_{i}$ for any $i\in\{j,\ldots,m\}$. Hence $t_m\minus d\leq  (m\minus j\plus 1)\mult \phi(\Re)$. But notice that $\beta_m$ is a move made by $\cal K$ in the consequent of (\ref{feb18a}), immediately (by our simplifying assumptions) copied by ${\cal M}(\code{{\cal N}_1},\ldots,\code{{\cal N}_n})$ in the real play when it made its move $\alpha$. In other words, $c=t_m$. And $c\minus d=k$. So, $k$ --- the timecost of $\alpha$ --- does not exceed $(m\minus j\plus  1)\mult  \phi(\Re)$ and hence $\mathfrak{b}\mult \phi(\Re)$.  Nor does the size of $\alpha$.  Now $\mathfrak{b}\mult \phi(\Re)$, even after accounting for various overheads that we have suppressed in our simplified bookkeeping, can be written as   $\tau(g_1,\ldots,g_n)$, where $\tau$ is an $n$-ary explicit polynomial functional, which can be constructed efficiently.  This is exactly the sought bound for the time complexity of ${\cal M}(\code{{\cal N}_1},\ldots,\code{{\cal N}_n})$.

Whatever we have said so far was about the simple case when $\cal K$ makes no replicative moves in the antecedent of $E_1,\ldots,E_n\intimpl F$. How different is the general case, where $\cal K$ can make replications? Not very different. The overall work of ${\cal M}(\code{{\cal N}_1},\ldots,\code{{\cal N}_n})$ remains the same, with the only difference that, every time $\cal K$ replicates one of $E_i$ (more precisely, to whatever a given copy of $E_i$ has evolved by that time), ${\cal M}(\code{{\cal N}_1},\ldots,\code{{\cal N}_n})$ splits the corresponding simulation of ${\cal N}_i(\code{{\cal N}_1},\ldots,\code{{\cal N}_n})$ into two identical copies, with the same past but possibly diverging futures. This increases the number of simulated plays and the corresponding number of to-be-synchronized (by the copycat routine) pairs of games  by one, but otherwise ${\cal M}(\code{{\cal N}_1},\ldots,\code{{\cal N}_n})$ continues working as in the earlier described scenario. ${\cal M}(\code{{\cal N}_1},\ldots,\code{{\cal N}_n})$ is guaranteed to win for the same  reasons as before. Furthermore, the time and space complexity analysis that we provided earlier still remains valid. The point is that, as we remember, $\cal K$ is well-behaved, so that, even if it makes replicative moves, it does so only a certain bounded number of times. Thus, the parameter $\mathfrak{b}$ on which we relied earlier still remains constant (the new $\mathfrak{b}$ is only by a constant factor greater than the old one), and thus so do the  the overall number of simulations performed by ${\cal M}(\code{{\cal N}_1},\ldots,\code{{\cal N}_n})$ and 
the   depth of compositions of $\phi$ within the $\tau$ term.  
\end{proof}

The following fact is an immediate corollary of Theorem  \ref{feb9d}:

\begin{corollary}\label{feb9e} \ 

1. Whenever a formula $F$ is a logical consequence of formulas $E_1,\ldots,E_n$ and the latter have polynomial time solutions under a given interpretation $^*$,  so does the former. Such a solution, together with an explicit polynomial bound for its time complexity, can be  efficiently constructed from a $\cltw$-proof of {\em $E_1,\ldots,E_n\intimpl F$}, solutions for $E_{1}^{*},\ldots,E_{n}^{*}$, and explicit polynomial bounds for their time complexities.

2. The same holds for ``space'' instead of ``time''.   
\end{corollary}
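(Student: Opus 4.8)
The plan is to read the corollary off Theorem~\ref{feb9d} essentially verbatim, the only genuine work being to repackage the given solutions into the GHPM-with-codes format that the theorem demands, and then to observe that polynomials substituted into an explicit polynomial functional remain polynomials. First I would take the data that the corollary hands us: for each $i\in\{1,\ldots,n\}$, an HPM ${\cal O}_i$ solving $E_i^*$ together with an explicit polynomial bound $p_i$ for its time complexity (so that ${\cal O}_i$ is a $p_i$ time solution of $E_i^*$). Since Theorem~\ref{feb9d} is phrased in terms of $n$-ary GHPMs whose inputs are the codes of one another, I would convert each ${\cal O}_i$ into an $n$-ary GHPM ${\cal N}_i$ that simply ignores its $n$ inputs and simulates ${\cal O}_i$. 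This trivial wrapping costs only polynomial overhead, so ${\cal N}_i(\code{{\cal N}_1},\ldots,\code{{\cal N}_n})$ is again a $p_i$ time solution of $E_i^*$; moreover $\code{{\cal N}_i}$ is obtained efficiently from ${\cal O}_i$.

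Next I would invoke Theorem~\ref{feb9d}. Applied to the given $\cltw$-proof of $E_1,\ldots,E_n\intimpl F$, its ``furthermore'' clause efficiently yields an $n$-ary GHPM $\cal M$ and a $(1,n)$-ary explicit polynomial functional $\tau$ with the following guarantee: since each ${\cal N}_i(\code{{\cal N}_1},\ldots,\code{{\cal N}_n})$ is a $p_i$ time solution of $E_i^*$, the HPM ${\cal M}(\code{{\cal N}_1},\ldots,\code{{\cal N}_n})$ is a $\tau(p_1,\ldots,p_n)$ time solution of $F^*$. This HPM is the sought solution of $F^*$, and it is constructed efficiently, because $\cal M$ and $\tau$ come efficiently from the proof while the codes $\code{{\cal N}_i}$ come efficiently from the ${\cal O}_i$.

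It then remains only to note that $\tau(p_1,\ldots,p_n)$ is itself a polynomial. Indeed, $\tau$ is built solely from $0$, $\successor$, $\plus$, $\mult$, variables, and the placeholders $f_1,\ldots,f_n$; substituting the explicit polynomials $p_1,\ldots,p_n$ for these placeholders and invoking closure of the polynomials under composition, addition, and multiplication shows that $\tau(p_1,\ldots,p_n)$ is a unary explicit polynomial function. Hence ${\cal M}(\code{{\cal N}_1},\ldots,\code{{\cal N}_n})$ runs in polynomial time, and the explicit polynomial bound $\tau(p_1,\ldots,p_n)$ is obtained efficiently by plugging the $p_i$ into $\tau$. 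Part~2 (space) is established in exactly the same manner, now reading off the space clause of Theorem~\ref{feb9d} instead of the time clause.

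I do not expect a real obstacle here, which is precisely why the result is an immediate corollary. The only point requiring any care is the bookkeeping that bridges the plain-HPM formulation of the corollary with the GHPM-with-codes formulation of the theorem; the wrapper machines ${\cal N}_i$ that discard their inputs dissolve this gap, and the polynomial-closure observation supplies the final complexity bound with no further effort.
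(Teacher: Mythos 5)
Your argument is correct and is exactly the intended reading of the paper, which offers no separate proof but simply declares the corollary an immediate consequence of Theorem \ref{feb9d}; your wrapper GHPMs that discard their inputs, the appeal to the ``furthermore'' clause, and the closure of explicit polynomial functions under substitution into a $(1,n)$-ary explicit polynomial functional supply precisely the bookkeeping the paper leaves implicit.
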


But the import of Theorem \ref{feb9d} extends far beyond the above corollary. The theorem implies that, for any class $\Omega$ of functions that contains all polynomial functions and is closed under composition, the rule of Logical Consequence preserves $\Omega$-time and $\Omega$-space computabilities. This means that $\cltw$ is an adequate logical basis for a wide class of complexity-oriented or complexity-sensitive applied systems. Among those, other than systems for  
 polynomial time and polynomial space computabilities, are many other naturally emerging theories worth studying, such as those for 
 elementary recursive (in the sense of Kalmar) computability,\footnote{It can be seen that elementary  recursive time is equivalent to elementary recursive space, so we omit the specification ``time'' or ``space'' here. The same applies to primitive recursive computability, provably recursive computability and general recursive computability.} primitive recursive computability, ({\bf PA}-) provably recursive computability, general recursive computability, etc. $\cltw$ --- more precisely, the associated rule of Logical Consequence --- is adequate because, on one hand, by Theorem \ref{feb9d}, it is sound for all such systems, and, on the other hand, by Theorem \ref{feb9b}  and/or Thesis \ref{thesis} (feel free to also throw  Remark \ref{remark} into the mix), it is as strong as a logical rule of inference could possibly be. 
  
\section{Some admissible rules of $\cltw$}
We say that a given rule is {\bf admissible} in $\cltw$ iff, for every instance of the rule, whenever all premises are provable, so is the conclusion. 

Before closing this paper, we want to identify a few admissible rules of $\cltw$ for possible future use and reference. 
 Among such rules are:

\begin{center}
\begin{picture}(302,60)

\put(15,45){\bf Exchange}
\put(0,22){$\frac{\mbox{$\vec{E},H,G,\vec{K}\intimpl F$}}{\mbox{$\vec{E},G,H,\vec{K}\intimpl F$}}$}

\put(114,45){\bf Weakening}
\put(117,22){$\frac{\mbox{$\vec{E}\intimpl F$}}{\mbox{$\vec{E},\vec{K}\intimpl F$}}$}

\put(242,45){\bf Cut}
\put(200,22){$\frac{\mbox{$\vec{E}\intimpl F\hspace{30pt}\vec{K},F\intimpl G$}}{\mbox{$\vec{E},\vec{K}\intimpl G$}}$}

\end{picture}
\end{center}

\begin{fact}\label{feb19a}
Exchange, Weakening and Cut are admissible in $\cltw$.
\end{fact}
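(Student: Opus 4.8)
Let me think about how to prove each of these three admissible rules.

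The key tool available is Theorem \ref{feb9c}, which says that CL12-provability is equivalent to having a logical solution. So to show a rule is admissible, I can either give a syntactic argument (manipulating proofs directly) or a semantic argument (showing that if the premises have logical solutions, so does the conclusion). Let me think about which approach is cleaner for each rule.

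**Exchange.** The rule permutes two adjacent antecedent formulas. This is the easiest case. Semantically, the antecedent is interpreted via $\st E_1 \mlc \cdots \mlc \st E_n$, and since $\mlc$ (parallel conjunction) — well, actually wait, I need to be careful. The sequent $E_1,\ldots,E_n\intimpl F$ means $\st E_1\mlc\ldots\mlc\st E_n\mli F$. The order of the conjuncts matters for the *addressing* of moves (the $i.\beta$ prefix encodes which conjunct), but the game itself is essentially symmetric under permutation. A logical solution of $\vec{E},H,G,\vec{K}\intimpl F$ can be trivially converted to one for $\vec{E},G,H,\vec{K}\intimpl F$ by a copycat/relabeling that swaps the move-prefixes corresponding to the $H$ and $G$ positions. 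This preserves being a logical solution (interpretation-independent), hence by Theorem \ref{feb9c} the conclusion is provable.

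**Weakening.** The rule adds unused antecedent formulas $\vec K$. Given a logical solution $\cal N$ of $\vec E\intimpl F$, I build a solution of $\vec E,\vec K\intimpl F$ that simply ignores the $\st K_i$ components entirely — it never moves in them and is never affected by the opponent's moves there. Since $\pp$ wins a $\mli$-game by winning the consequent (or losing the antecedent), and ignoring extra antecedent resources is harmless (the opponent's activity in $\st K_i$ is irrelevant to the play in $\vec E$ and $F$), the solution $\cal N$ lifted in this way wins. Again this is interpretation-independent.

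**Cut.** This is the main obstacle and the heart of the Fact. The cut rule composes a solution of $\vec E\intimpl F$ with a solution of $\vec K,F\intimpl G$. The natural approach is *exactly* the simulation-plus-copycat construction already developed in the proof of Theorem \ref{feb9d}. Here is the plan.

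First I would set up the semantic reformulation. By Theorem \ref{feb9c}, the premises having logical solutions means there are HPMs $\cal N_1\models(\vec E\intimpl F)^*$ and $\cal N_2\models(\vec K,F\intimpl G)^*$ for every interpretation $^*$. I want to construct, uniformly in $^*$, an HPM $\cal M$ solving $(\vec E,\vec K\intimpl G)^*$, and then invoke Theorem \ref{feb9c} (the (ii)$\Rightarrow$(i) direction) to conclude provability.

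The construction of $\cal M$ is a copycat-composition. The machine $\cal M$ runs $\cal N_1$ and $\cal N_2$ in parallel simulation. The crucial point is that the formula $F$ appears in the succedent of the first premise and in the antecedent of the second. So $\cal M$ applies copycat on the $F$-interface: whatever $\cal N_1$ plays in its succedent $F$, $\cal M$ feeds as an opponent move into the $\st F$ component of $\cal N_2$'s antecedent, and vice versa. Simultaneously, $\cal M$ relays the real opponent's moves in the $\vec E$ part to $\cal N_1$'s antecedent, the real opponent's moves in the $\vec K$ part to $\cal N_2$'s antecedent, and $\cal N_2$'s succedent moves to the real play of $G$. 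The correctness argument is the standard copycat lemma: if $\cal N_1$ wins $\st\vec E\mli F$ and $\cal N_2$ wins $\st\vec K\mlc\st F\mli G$, then the composed play wins $\st\vec E\mlc\st\vec K\mli G$, because the two plays of $F$ (as a subgame, via copycat) are mutually negated, so a win in one forces a loss-avoidance in the other. The one subtlety I must handle carefully is that the antecedent $F$ of the second premise is $\st F$, a *branching recurrence* allowing the opponent ($\cal N_2$, in the role of the user of $F$) to replicate the $F$-resource. Copycat against a single $\cal N_1$ that provides only one thread of $F$ would fail. The resolution — and this is where I expect the real work to lie — is that $\st F$ in the antecedent lets $\cal N_2$ replicate, and each replicated copy must be served by a fresh simulated run of $\cal N_1$, exactly as in the replication-handling at the end of the proof of Theorem \ref{feb9d}, where the simulation of ${\cal N}_i$ is split into independent copies upon each replicative move. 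I would mirror that device: whenever $\cal N_2$ makes a replicative move in its $\st F$ component, $\cal M$ spawns a new simulated copy of $\cal N_1$ (sharing the common past of the $F$-play up to that point) and continues copycat on each copy independently. Since everything is interpretation-independent and effective, $\cal M$ is a logical solution of $\vec E,\vec K\intimpl G$, and admissibility of Cut follows. The same reasoning shows Exchange and Weakening more easily, so I would present Cut in full and the other two briefly.
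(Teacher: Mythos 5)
Your overall strategy is sound but genuinely different from the paper's. The paper disposes of this Fact in two lines: it cites Section 6 of \cite{Japtowards}, where these three rules are shown to preserve a (differently based, $\fintimpl$-style) notion of validity, and Section 8 of the same paper, where $\cltw$ is proved sound and complete with respect to that notion; admissibility then follows by composing the two completeness characterizations (the paper adds that Exchange and Weakening can also be seen directly by a syntactic argument). Your route is the self-contained semantic one: use Theorem \ref{feb9c} to translate admissibility into preservation of the existence of logical solutions, and then build the solution of the conclusion directly. For Exchange and Weakening this is straightforward and unproblematic. For Cut your construction is essentially the known proof that $\intimpl$ composes (transitivity of branching-recurrence-based reduction), and it does work; what it buys you is independence from the external reference, at the cost of having to carry out that composition argument in full, which the paper deliberately avoids.

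One point in your Cut argument needs repair. You correctly identify that ${\cal N}_2$ may replicate inside its antecedental $\st F$ and that each resulting leaf must be served by its own simulated copy of ${\cal N}_1$. But each such copy of ${\cal N}_1$ plays the whole game $\st \vec{E}\mli F$ and therefore consumes its own antecedental resources $\st E_i$, whereas in the conclusion $\st\vec{E}\mlc\st\vec{K}\mli G$ there is only one $\st E_i$ for each $i$. So ${\cal M}$ must, on every replicative move of ${\cal N}_2$ in its $\st F$ component, also make matching replicative moves in each $\st E_i$ of the conclusion, keeping the tree structure of the conclusion's $\st E_i$ aligned with that of ${\cal N}_2$'s $\st F$ so that every spawned copy of ${\cal N}_1$ owns its own branch of each $\st E_i$. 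The analogy you draw with the replication handling in the proof of Theorem \ref{feb9d} is imperfect precisely here: there the split simulations were of machines solving games with no antecedents of their own, so no further resource duplication was required. With this addition the copycat and winning argument go through as you describe, and the conclusion follows from Theorem \ref{feb9c}.
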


\begin{proof} Section 6 of \cite{Japtowards} proves that these rules preserve a certain concept of validity. Section 8 of the same paper also proves that $\cltw$ is sound and complete with respect to that concept of validity. So, the rules are admissible in $\cltw$. Of course, the admissibility of Exchange and Weakening can as well be seen directly, using a straightforward syntactic argument.
\end{proof}

Note that, in view of the admissibility of Exchange and Weakening (and the presence of Replicate), an equivalent formulation of $\cltw$ would be one that sees the antecedent of a sequent as a set rather than sequence or even a multiset of formulas. Such a formulation would only have five rules, with the  rule of Replicate  being trivial and hence redundant there.  

Among the six rules of $\cltw$, the expensive one is Wait as, at times, it may require too many premises. Using the following four rules instead of Wait can very significantly shorten proofs. The notation $F^{\vee}[G]$ (resp. $F^{\wedge}[G]$) employed in our formulation of those rules means the same as our earlier (Section \ref{ss8}) agreed-on $F[G]$, with the additional restriction that the fixed surface occurrence of $G$ is not in the scope of $\mlc$ (resp. $\mld$). Also,  $y$ is a variable not occurring in the conclusion.  

\begin{center}
\begin{picture}(382,60)

\put(34,45){\bf $\adc$-Introduction}
\put(0,22){$\frac{\mbox{$\vec{E}\intimpl F^{\vee}[G_0]\hspace{35pt}\vec{E}\intimpl F^{\vee}[G_1]$}}{\mbox{$\vec{E}\intimpl F^{\vee}[G_0\adc G_1]$}}$}

\put(252,45){\bf $\add$-Introduction}
\put(190,22){$\frac{\mbox{$\vec{E},F^{\wedge}[G_0],\vec{K}\intimpl H\hspace{35pt}\vec{E},F^{\wedge}[G_1],\vec{K}\intimpl H$}}{\mbox{$\vec{E},F^{\wedge}[G_0\add G_1],\vec{K}\intimpl H$}}$}
\end{picture}
\end{center}

\begin{center}
\begin{picture}(382,60)

\put(34,45){\bf $\ada$-Introduction}
\put(32,22){$\frac{\mbox{$\vec{E}\intimpl F^{\vee}[G(y)]$}}{\mbox{$\vec{E}\intimpl F^{\vee}[\ada xG(x)]$}}$}

\put(252,45){\bf $\ade$-Introduction}
\put(237,22){$\frac{\mbox{$\vec{E},F^{\wedge}[G(y)],\vec{K}\intimpl H$}}{\mbox{$\vec{E},F^{\wedge}[\ade xG(x)],\vec{K}\intimpl H$}}$}
\end{picture}
\end{center}

\begin{fact}\label{feb19b}
$\adc$-Introduction, $\add$-Introduction, $\ada$-Introduction and $\ade$-Introduction are admissible in $\cltw$.
\end{fact}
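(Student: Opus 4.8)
Let me plan a proof.

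The plan is to show that each of the four Introduction rules is derivable in $\cltw$ using only the six official rules, from which admissibility follows immediately. The key observation is that each Introduction rule combines, in a single step, one application of the relevant Wait rule together with a matching sequence of Choose applications, all performed on a surface occurrence that is restricted (by the $F^{\vee}$ or $F^{\wedge}$ notation) so as to avoid the ``wrong'' parallel connective. I would handle the four rules in two symmetric pairs: $\adc$-Introduction and $\ada$-Introduction on the succedent side, and $\add$-Introduction and $\ade$-Introduction on the antecedent side, since within each pair the arguments are essentially identical, and the two sides are mirror images of each other.

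I would carry out $\adc$-Introduction first, as the prototype. Assume the two premises $\vec{E}\intimpl F^{\vee}[G_0]$ and $\vec{E}\intimpl F^{\vee}[G_1]$ are provable; I want $\vec{E}\intimpl F^{\vee}[G_0\adc G_1]$. The natural idea is to derive the conclusion by a single application of Wait. For this I must verify the five conditions of Wait for the conclusion $X=\vec{E}\intimpl F^{\vee}[G_0\adc G_1]$. The crux is that, because the distinguished occurrence of $G_0\adc G_1$ lies at a \emph{surface} position that is not in the scope of any $\mlc$ (this is exactly what $F^{\vee}[\cdot]$ encodes), \emph{every} top-level $\adc$-subformula in the succedent whose elementarization matters must be handled, and the $\adc$-condition of Wait demands precisely the two sequents $\vec{E}\intimpl F^{\vee}[G_0]$ and $\vec{E}\intimpl F^{\vee}[G_1]$ among the premises --- which are exactly our two given premises. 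The $\add$-, $\ada$-, $\ade$-conditions for any \emph{other} choice subformulas of $X$ would in general require additional premises, so the honest route is not a bare single-step Wait but rather: first drive those other surface choice operators down by the corresponding Choose/Wait steps, or more cleanly, argue that $\adc$-Introduction is derivable by combining Wait applied to the relevant $\adc$ with the inductive handling of the remaining structure. I expect the cleanest writeup applies Wait to resolve the displayed $G_0\adc G_1$ and appeals to the fact that each resulting branch $\vec{E}\intimpl F^{\vee}[G_i]$ is a premise, checking stability via the elementarization (Fact on elementarization) to confirm the Stability condition is inherited from the premises.

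For $\ada$-Introduction the argument is the same, using the $\ada$-condition of Wait: from the single premise $\vec{E}\intimpl F^{\vee}[G(y)]$ with $y$ fresh, Wait yields $\vec{E}\intimpl F^{\vee}[\ada xG(x)]$, since the $\ada$-condition asks exactly for a premise of the form $\vec{E}\intimpl F^{\vee}[H(y)]$ with $y$ not occurring in the conclusion. The antecedent-side rules $\add$-Introduction and $\ade$-Introduction are then obtained by the mirror-image reasoning, invoking the $\add$-condition and $\ade$-condition of Wait respectively, where now the $F^{\wedge}[\cdot]$ restriction (occurrence not under $\mld$) plays the dual role that $F^{\vee}$ played in the succedent.

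The main obstacle I anticipate is the \textbf{interaction with Wait's Stability condition and with the other surface choice operators} that may be present in the conclusion. A single Introduction step, read bottom-up, must be matched by a legitimate Wait application, and Wait requires \emph{all} of its five conditions to hold simultaneously, not just the one triggered by the displayed connective. So the careful point is to argue that whatever extra premises Wait would demand for those other surface choice subformulas are themselves provable --- this can be arranged either by first applying the appropriate Choose rules to the conclusion to eliminate those other surface choice operators before reaching the displayed one, or by observing that the restriction encoded in $F^{\vee}$/$F^{\wedge}$ together with provability of the given premises supplies exactly what is needed. I would therefore present the derivation as a short bottom-up sequence of Choose steps followed by one Wait step, and verify stability at the Wait step using the elementarization identity. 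Once the derivability of each rule is established, admissibility is immediate: replacing each Introduction step in any proof by its derived expansion yields a genuine $\cltw$-proof of the conclusion whenever the premises are provable.
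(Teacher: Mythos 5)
Your central idea --- realizing each Introduction rule as a single application of Wait, possibly preceded by some Choose steps --- does not go through, and the obstacle is exactly the one you flag but do not resolve. Wait requires its $\adc$-, $\add$-, $\ada$- and $\ade$-conditions to hold for \emph{every} surface occurrence of the corresponding operator in the conclusion, not just the displayed one. Take the instance of $\adc$-Introduction with empty antecedent, $F^{\vee}[\ ]=(p\adc q)\mld[\ ]$, so that the conclusion is $\intimpl (p\adc q)\mld(G_0\adc G_1)$ and the premises are $\intimpl (p\adc q)\mld G_0$ and $\intimpl (p\adc q)\mld G_1$. A Wait step ending in this conclusion must also have $\intimpl p\mld(G_0\adc G_1)$ and $\intimpl q\mld(G_0\adc G_1)$ among its premises, and these are neither given nor obtainable from the given premises. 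Your proposed repair of ``first applying the appropriate Choose rules to eliminate those other surface choice operators'' cannot work: the operators that trigger Wait's conditions ($\adc$ and $\ada$ in the succedent, $\add$ and $\ade$ in the antecedent) are precisely the ones the Choose rules do \emph{not} act on (Choose resolves $\add,\ade$ in the succedent and $\adc,\ada$ in the antecedent). What is actually needed --- and what the paper's one-line ``easy induction'' refers to --- is an induction on the given derivations of the premises: one transforms those derivations, reinstating $G_0\adc G_1$ (resp. $\ada x G(x)$, etc.) in place of $G_i$ (resp. $G(y)$) throughout, and the missing sequents such as $\intimpl p\mld(G_0\adc G_1)$ are then supplied by the induction hypothesis applied to subderivations. (Alternatively, one can argue semantically through Theorem \ref{feb9c}: describe a logical solution of the conclusion that waits for Environment's move in the displayed component and then mimics the solution of the corresponding premise.)

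On the positive side, your stability observation is essentially right and is the real point of the $F^{\vee}$/$F^{\wedge}$ restriction: since the displayed occurrence is not under any choice operator and not under $\mlc$ (resp.\ $\mld$), elementarization replaces it by $\twg$ sitting in a purely disjunctive position of the succedent (resp.\ by $\tlg$ in a purely conjunctive position of an antecedent formula), so the conclusion is \emph{automatically} stable --- it is not ``inherited from the premises'', which need not be stable at all. In the special case where the displayed occurrence is the only surface $\adc$/$\ada$ in the succedent and there are no surface $\add$/$\ade$ in the antecedent, your one-step Wait derivation is indeed correct; it is the general case that forces the induction.
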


\begin{proof} Easy induction, details of which we omit.  
\end{proof}

\end{document}